\newcommand\numberthis{\addtocounter{equation}{1}\tag{\theequation}}
\newtheorem{theorem}{Theorem}
\newtheorem{remark}{Remark}
\newtheorem{assumption}{Assumption}
\title{Model-robust and efficient covariate adjustment for cluster-randomized experiments}
\author{\small
Bingkai Wang$^{1}$,
Chan Park$^{1}$, Dylan S. Small$^1$, and Fan Li$^2$
\vspace{10pt}

$^1$The Statistics and Data Science Department of the Wharton School, University of Pennsylvania, Philadelphia, PA, USA

$^2$Department of Biostatistics and Center for Methods in Implementation and Prevention Science, Yale University School of Public Health, New Haven, CT, USA}
\begin{document}
\def\spacingset#1{\renewcommand{\baselinestretch}%
{#1}\small\normalsize} \spacingset{1}

\date{\vspace{-5ex}}

\maketitle
\begin{abstract}
Cluster-randomized experiments are increasingly used to evaluate interventions in routine practice conditions, and researchers often adopt model-based methods with covariate adjustment in the statistical analyses. However, the validity of model-based covariate adjustment is unclear when the working models are misspecified, leading to ambiguity of estimands and risk of bias. In this article, we first adapt two conventional model-based methods, generalized estimating equations and linear mixed models, with weighted g-computation to achieve robust inference for cluster-average and individual-average treatment effects. {To further overcome the limitations of model-based covariate adjustment methods}, we propose an efficient estimator for each estimand that allows for flexible covariate adjustment and additionally addresses cluster size variation dependent on treatment assignment and other cluster characteristics. Such cluster size variations often occur post-randomization and, if ignored, can lead to bias of model-based estimators. For our proposed {efficient covariate-adjusted} estimator, we prove that when the nuisance functions are consistently estimated by machine learning algorithms, the estimator is consistent, asymptotically normal, and efficient. When the nuisance functions are estimated via parametric working models, the estimator is triply-robust. Simulation studies and analyses of three real-world cluster-randomized experiments demonstrate that the proposed methods are superior to existing alternatives.
\end{abstract}

\noindent%
{\it Keywords: Cluster-randomized trial; Covariate adjustment; Estimands; Efficient influence function; Machine learning.}  

\spacingset{1.5}
\setcounter{page}{1}
\section{Introduction }\label{sec:intro}

Cluster-randomized experiments refer to study designs that randomize treatment at the cluster level; clusters can be villages, hospitals, or worksites \citep{murray1998design,donner2000design}. Cluster randomization is often used to study group-level interventions or to prevent treatment contamination under individual randomization, and is increasingly adopted in pragmatic clinical trials evaluating interventions in routine practice conditions. {For example, among $10$ demonstration projects across different disease areas (from 2012 to 2017) supported by the United States \emph{National Institute of Health Pragmatic Clinical Trials Collaboratory}, eight studies adopted cluster randomization (Table 1 in \citealp{weinfurt2017pragmatic}).}


In the analyses of cluster-randomized experiments, covariate adjustment is essential to address baseline chance imbalances and improve the precision of the treatment effect estimators. However, challenges in covariate adjustment persist due to the complexity associated with the multilevel data structure under cluster randomization. First, although model-based methods, including generalized estimating equations (GEE, \citealp{liang1986longitudinal}) and generalized linear mixed models \citep{breslow1993approximate}, are commonly used to perform covariate adjustment in cluster-randomized experiments, it remains unclear whether the treatment effect coefficient corresponds to a clearly-defined estimand of interest, especially when the working model is misspecified. Even in the absence of covariates, \citet{wang2022two} has demonstrated that the treatment coefficient estimator from the GEE with an exchangeable working correlation structure corresponds to an ambiguous estimand when the cluster size is informative for causal effects. Second, participants are frequently sampled after cluster randomization such that the observed cluster size may depend on treatment assignment and other cluster attributes. Failure to address such cluster-dependent sampling schemes, as in standard techniques, can lead to bias of treatment effect estimators. For example, when the observed cluster size depends on cluster-level covariates, \citet{bugni2022inference} demonstrated that the standard difference-in-means estimator is biased for typical estimands of interest. To date, robust causal inference methods that can simultaneously maximize the precision gain from covariate adjustment and address cluster-dependent sampling are unavailable for cluster-randomized experiments.

In this article, {our primary contribution is to provide new covariate-adjusted estimators for cluster-randomized experiments that target clearly-defined estimands with minimal model assumptions, improve precision over standard methods, and address cluster-dependent sampling. We focus on} 
two classes of causal estimands: the cluster-average treatment effect and individual-average treatment effect. The former estimand addresses the question of ``what is the expected change in outcome associated with treatment for a typical cluster with its natural source population?'' and gives equal weight to each cluster, whereas the latter estimand addresses the question of ``what is the expected change in outcome associated with treatment for a typical individual?'' and gives equal weight to each individual. The two estimands represent treatment effects at different levels and differ when there is treatment effect heterogeneity by cluster size \citep{kahan2022estimands}. To estimate both estimands, we first adapt GEE and linear mixed models through weighted g-computation, and provide a set of sufficient conditions to achieve model-robust inference, i.e., consistency and asymptotic normality under arbitrary working model misspecification. These new insights help clarify when conventional multilevel regression models, which are routinely used as standard practice \citep{turner2017review1b}, provide valid average causal effect estimators even if the regression model formulation differs from the unknown and potentially complex data-generating process. {To the best of our knowledge, this entire set of sufficient conditions for typical model-based methods to achieve model-robust inference has not been elucidated in the prior literature, and can inform current practice in analyzing cluster-randomized experiments.}

{The weighted g-computation estimator based on GEE or linear mixed models, however, is subject to two potential limitations. First, it can become biased when the sufficient conditions for model robustness fail to hold, e.g., when the observed cluster size depends on treatment assignment and cluster-level covariates. Second, such an estimator, albeit robust under certain types of model misspecification, is not guaranteed to improve estimation precision through covariate adjustment, and the linear working models may be too restrictive to maximize the precision gain. These two limitations motivate us to develop more principled estimators that can outperform the conventional multilevel regression models. To achieve this goal,} 
we characterize the efficient influence function for both the cluster-average and the individual-average causal effect estimands and propose efficient estimators that allow for flexible covariate adjustment and simultaneously address cluster-dependent sampling. When the nuisance functions are estimated by parametric models, such as GEE or generalized linear mixed models, the proposed estimators are triply-robust, that is, they are consistent to the specified causal estimand if two out of the three nuisance functions are consistently estimated. When the nuisance working models are all consistently estimated, for example, by machine learning algorithms with cross-fitting, the proposed estimators achieve the semiparametric efficiency lower bounds given our causal models. {Compared to model-based covariate adjustment, the efficient estimators notably increase the flexibility in covariate adjustment and substantially enrich the toolbox for analyzing cluster-randomized experiments.}

Our results build on but differ from the existing literature on causal inference for cluster-randomized experiments. 
\cite{imai2009essential} and \cite{middleton2015unbiased} proposed cluster-level methods for estimating the average treatment effects but did not consider covariate adjustment to improve efficiency. \cite{schochet2021design} established the finite-population central limit theorem for linearly-adjusted estimators under blocked cluster randomization, and  \cite{su2021model} extended their results by providing a unified theory for a class of weighted average treatment effect estimands. These prior efforts considered a finite-population framework with linear working models and an independence working correlation structure. In contrast, we study causal effect estimators under a super-population framework and address robust and efficient estimation under a much wider class of working models that are not limited to linear working models. \citet{balzer2021two} and \citet{benitez2021comparative} applied targeted maximum likelihood estimation under hierarchical structural models for cluster-randomized experiments. However, they did not consider cluster size variation arising from cluster-dependent sampling. \cite{bugni2022inference} adapted moment-based estimators to address cluster-dependent sampling but did not consider covariate adjustment. {None of these prior results have provided efficient causal inference under cluster-dependent sampling. We therefore fill in this important gap by proposing efficient estimators to achieve flexible covariate adjustment and accommodate post-randomization cluster-dependent sampling.}

The remainder of the article is organized as follows. In Section~\ref{sec:def}, we formalize our super-population framework, present the causal estimands of interest, and structural assumptions for identification. Section \ref{sec:model-based} adapts GEE and linear mixed models to target our estimands. In Section \ref{sec: efficient-estimation}, we develop the efficient influence functions and propose our efficient estimators. In Sections \ref{sec: simulation} and \ref{sec:data-application}, we demonstrate our theoretical results via simulation studies and re-analyses of three cluster-randomized trials. Section \ref{sec: discussion} concludes with a discussion.


\section{Notation, estimands, and assumptions}\label{sec:def}

We consider a cluster-randomized experiment with $m$ clusters. For each cluster $i$, we let $N_i$ denote the total number of individuals in the underlying source population, $M_i$ be the observed number of individuals sampled into the study, $A_i\in\{0,1\}$ be the cluster-level treatment indicator, and $C_i$ be a $q$-dimensional vector of cluster-level covariates. For each individual $j$ in cluster $i$, we define $Y_{ij}$ as the outcome, $X_{ij}$ as a $p$-dimensional vector of individual-level covariates, and $S_{ij}\in\{0,1\}$ as the sampling indicator, i.e., recruited into the experiment. By definition, the observed cluster size $M_i = \sum_{j=1}^{N_i} S_{ij} \le N_i$. 

We proceed under the potential outcome framework \citep{neyman1990} and define $Y_{ij}(a)$ as the potential outcome and $S_{ij}(a)$ as the potential sampling indicator if cluster $i$ were assigned to treatment group $a \in \{0,1\}$. We assume consistency such that $Y_{ij} = A_iY_{ij}(1) + (1-A_i)Y_{ij}(0)$ and $S_{ij} = A_iS_{ij}(1) + (1-A_i)S_{ij}(0)$. As a result, the potential observed cluster size in a treated cluster, denoted as $M_i(1) = \sum_{j=1}^{N_i} S_{ij}(1)$, can be different from its counterpart in a control cluster, denoted as $M_i(0) = \sum_{j=1}^{N_i} S_{ij}(0)$. Defining $Y_i(a) = \{Y_{i1}(a),\ldots, Y_{iN_i}(a)\} \in \mathbb{R}^{N_i}$ as the collection of potential outcomes, $S_i(a) = \{S_{i1}(a),\ldots, S_{iN_i}(a)\}\in \mathbb{R}^{N_i}$ as the collection of potential sampling indicators, and $\mathrm{X}_i = (X_{i1},\ldots,X_{iN_i})^\top \in \mathbb{R}^{N_i \times p}$, we write the collection of random variables in a cluster as $W_i = \{Y_i(1), Y_i(0), S_i(1), S_i(0), N_i, C_i, \mathrm{X}_i\}$. We introduce the following assumptions on the complete, while not fully observed, data $\{(W_1,A_1),\dots, (W_m, A_m)\}$.

\begin{assumption}[Super-population]\label{asp:super}
(a) Random variables $W_1,\dots, W_m$ are mutually independent. (b) The source population size $N_i$ follows an unknown distribution $\mathcal{P}^{N}$ over a finite support on $\mathbb{N}^+$. (c) Given $N_i$, $W_i$ follows an unknown distribution $\mathcal{P}^{W\mid N}$ with finite second moments. 
\end{assumption}

\begin{assumption}[Cluster randomization]\label{asp:rand}
The treatment indicator $A_i$ for each cluster is an independent realization from a Bernoulli distribution $\mathcal{P}^{A}$ with $pr(A=1)=\pi\in (0,1)$. Furthermore, $(A_1,\dots, A_m)$ is independent of $(W_1,\dots, W_m)$.
\end{assumption}

\begin{assumption}[Cluster-dependent sampling]\label{asp: sampling}
For $a \in \{0,1\}$, the potential observed cluster size $M(a) = h_a(N,C,\epsilon_a)$ for some unknown function $h_a$ and exogenous random noise $\epsilon_a$ that is independent of $\{N,C,\mathrm{X},Y(a)\}$. Furthermore, for each possible $N$-dimensional binary vector ${s}$ with $M(a)$ ones, 
$pr\{S(a) = {s}\mid Y(a),M(a),\mathrm{X},N,C\} = {\binom{N}{M(a)}}^{-1}.$
\end{assumption}

Assumption \ref{asp:super}(a) implies that the data vectors from different clusters are independent, while the outcomes and covariates in the same cluster can be arbitrarily correlated. Assumption \ref{asp:super}(b)-(c) formalize the condition that $W_1, \ldots, W_m$ are marginally identically distributed according to a mixture distribution, $\mathcal{P}^{W\mid N} \times\mathcal{P}^{N}$. This technical condition is useful for handling the varying dimension of $W_i$ across clusters. 
{Assumption \ref{asp:rand} describes the simple cluster randomization design, which we use as a starting place to present our main results. In Sections 3 and 4, we also discuss how our results can be applied to stratified cluster randomization \citep{Zelen1974} and biased-coin cluster randomization \citep{Efron1971}, which are two typical restricted cluster randomization schemes.} 
Given Assumptions \ref{asp:super}-\ref{asp:rand}, the expectation on $(W_i,A_i)$ is taken with respect to $\mathcal{P}^{W\mid N} \times\mathcal{P}^{N}\times\mathcal{P}^{A}$.
Assumption \ref{asp: sampling} implies that the number of sampled individuals can depend on the assignment $A$ and cluster characteristics (source population size $N$ and cluster covariates $C$), but the sampling process is completely random given the number of sampled individuals and the source population size. This assumption relaxes the setting of \cite{bugni2022inference} to allow for arm-specific sampling. Important special cases of Assumptions \ref{asp: sampling} include full enrollment (or focusing on the population of sampled individuals) 
such that $M(a) = N$; random sampling with an arm-specific sample size such that $M(a) = m_a$ for some integer $m_a \le N$; and independent cluster-specific sampling such that each $S_{ij}(a)$ is independently determined by flipping a cluster-specific coin. Assumption \ref{asp: sampling} can be violated when sampling additionally depends on individual-level covariates, which are generally unobserved for nonparticipants. This sampling mechanism leads to post-randomization selection bias \citep{li2022clarifying}, and we leave this form of selection bias for separate work.


We define the class of cluster-average treatment effect ($\Delta_C$) and individual-average treatment effect ($\Delta_I$) estimands as
\begin{align*}
     \Delta_C = f\left\{\mu_C(1), \mu_C(0)\right\}, \quad \Delta_I = f\left\{\mu_I(1), \mu_I(0)\right\},
\end{align*}
where $f$ is a pre-specified smooth function determining the scale of effect measure and, for $a = 0,1$,
\begin{align*}
    \mu_C(a) = E\left\{\frac{\sum_{j=1}^{N_i} Y_{ij}(a)}{N_i} \right\}, \quad \mu_I(a) = \frac{E\left\{\sum_{j=1}^{N_i} Y_{ij}(a)\right\}}{E(N_i)}.
\end{align*}
For example, $f(x,y) = x-y$ leads to the difference estimands, $f(x,y) = x/y$ leads to the relative risk estimands, and $f(x,y)=x(1-y)/\{y(1-x)\}$ leads to the odds ratio estimands. These two classes of estimands $\Delta_C$ and $\Delta_I$ differ based on the corresponding treatment-specific mean potential outcomes, $\mu_C(a)$ versus $\mu_I(a)$.
The former represents the average potential outcome associated with treatment $a$ for a typical cluster along with its natural source population, while the latter represents the average potential outcome associated with treatment $a$ for a typical individual. Depending on the nature of the intervention and the endpoint, either or both estimands may be relevant in a given cluster-randomized experiment. These two estimands can be considered as the super-population analogs to those considered in \citet{su2021model} and \citet{kahan2022estimands} under the finite-population framework and a generalization of the difference estimands in \citet{bugni2022inference}. 
{When the source population size $N_i=n$ is a constant and each element in $\{Y_{ij}, j =1,\dots, n\}$ is assumed to be marginally identically distributed, we have $\mu_C(a) = \mu_I(a) = E[Y_{ij}(a)]$, a special case considered in \cite{wang2021mixed} where the two types of estimands are equal and bear the same interpretation as the typical estimand in an individual-randomized trial.} We refer to \citet{kahan2022estimands} for a more elaborate discussion on differentiating these two estimands and their example applications to cluster-randomized experiments.

Finally, we write the observed data for cluster $i$ as $\mathcal{O}_i = \{Y_i^o, M_i, A_i, N_i, C_i, \mathrm{X}_i^o\}$, where $Y_i^o = \{Y_{ij}: S_{ij}=1, j=1,\dots, N_i\} \in \mathbb{R}^{M_i}$ and $\mathrm{X}_i^o = \{X_{ij}: S_{ij}=1, j=1,\dots, N_i\} \in \mathbb{R}^{M_i \times p }$. The relationships among these random variables are summarized in Figure {S1 in the Supplementary Material}. The central task is to estimate $\Delta_C$ and $\Delta_I$ with $\mathcal{O}_1,\ldots,\mathcal{O}_m$. Of note, we assume that the source population size $N_i$ is available or can be elicited from either historical data or cluster stakeholders; this is practically feasible for cluster-randomized experiments conducted within schools, worksites, or healthcare delivery systems. When the source population size in each cluster is unknown, we will discuss how to estimate $\Delta_C$ in Section \ref{sec: efficient-estimation}, but $\Delta_I$ is generally not identifiable unless we equate the observed cluster size with the source population size by setting $N=M$. 

\section{Model-based covariate adjustment}\label{sec:model-based}
\subsection{Generalized estimating equations (GEE)}\label{subsec:gee}
One popular approach to analyze cluster-randomized experiments is through GEE, often specified through the marginal mean model, $g\{E(Y_{ij}\mid U_{ij})\} = U_{ij}^\top \beta$, where $g$ is the link function,  $U_{ij} = (1, A_i, L_{ij}^\top)^\top$ for user-specified covariates $L_{ij}$ as an arbitrary function of $(N_i, C_i, X_{ij})$, and $\beta = (\beta_0, \beta_A, \beta_L^\top)^\top$. We assume that $g$ is the canonical link, e.g., $g(x) =  x$ for continuous outcomes and $g(x) = \log\{x/(1-x)\}$ for binary outcomes. The parameters $\beta$ are estimated by solving the following estimating equations:
\begin{equation}\label{eq:GEE}
\sum_{i=1}^m \mathrm{D}_i^\top \mathrm{V}_i^{-1} (Y_i^o - \mu_i^o) = 0,
\end{equation}
where $\mu_i^o = \{E(Y_{ij}\mid U_{ij}): S_{ij} = 1\} \in \mathbb{R}^{M_i}$, $\mathrm{D}_i = {\partial \mu_i^o}/{\partial \beta^\top}$, $\mathrm{V}_i = \mathrm{Z}_i^{1/2}\mathrm{R}_i(\rho)\mathrm{Z}_i^{1/2}$ with $\mathrm{R}_i(\rho) \in \mathbb{R}^{M_i\times M_i}$ being a working correlation matrix and $\mathrm{Z}_i = \mathrm{diag}\{v(Y_{ij}|U_{ij}):S_{ij} =1\}\in \mathbb{R}^{M_i\times M_i}$ for some known variance function $v$. We consider two working correlation structures that are commonly used for analyzing cluster-randomized experiments: the independence correlation, $\mathrm{R}_i(\rho) = \mathrm{I}_{M_i}$, and the exchangeable correlation, $\mathrm{R}_i(\rho) = (1-\rho)\mathrm{I}_{M_i}+\rho 1_{M_i}1_{M_i}^\top$, where $\mathrm{I}_{M_i} \in \mathbb{R}^{M_i\times M_i}$ is the identity matrix and $1_{M_i} \in \mathbb{R}^{M_i}$ is a vector of ones. The estimator for $\beta$ is denoted by $\widehat{\beta}$, and the correlation parameter $\rho$ is estimated by a moment estimator $\widehat{\rho}$ described in Example 4 of \cite{liang1986longitudinal}. For analyzing cluster-randomized experiments, a conventional practice is to directly use the coefficient $\widehat{\beta}_A$ along with robust sandwich variance for inference. However, this practice can lead to an ambiguous treatment effect estimand even in the absence of covariate adjustment \citep{wang2022two}.

To estimate our target estimands $\Delta_C$ and $\Delta_I$, we propose weighted g-computation estimators, defined as $\widehat{\Delta}_{C}^{\textrm{GEE-g}} = f\left\{\widehat{\mu}_{C}^{\textrm{GEE-g}}(1),\widehat{\mu}_{C}^{\textrm{GEE-g}}(0)\right\}$ and $\widehat{\Delta}_{I}^{\textrm{GEE-g}} = f\left\{\widehat{\mu}_{I}^{\textrm{GEE-g}}(1), \widehat{\mu}_{I}^{\textrm{GEE-g}}(0)\right\}$, where, for $a = 0,1$,
\begin{align*}
    \widehat{\mu}_{C}^{\textrm{GEE-g}}(a) &= \frac{1}{m}\sum_{i=1}^m \frac{1}{M_i}\sum_{j:S_{ij}=1}g^{-1}\left(\widehat{\beta}_0 + \widehat{\beta}_A a +  \widehat{\beta}_{L}^\top L_{ij}\right), \\
    \widehat{\mu}_{I}^{\textrm{GEE-g}}(a) &= \frac{1}{\sum_{i=1}^m N_i}\sum_{i=1}^m \frac{N_i}{M_i}\sum_{j:S_{ij}=1} g^{-1}\left(\widehat{\beta}_0 + \widehat{\beta}_A a +  \widehat{\beta}_{L}^\top L_{ij}\right).
\end{align*}
{Here, the g-computation step refers to the average of model predictions for each cluster after setting the treatment assignment to be $a$, and has been commonly used as a population standardization technique for estimating marginal estimands in observational studies \citep{rosenbaum1987model}. To target each specific estimand, the weighting for each cluster is taken to be $1/m$ for estimating $\mu_c(a)$ (which gives equal weight to each cluster) and $N_i/\sum_{i=1}^m N_i$ for estimating $\mu_I(a)$ (which gives equal weight to each individual). The variance of $\widehat{\Delta}_{C}^{\textrm{GEE-g}}$ and $\widehat\Delta_I^{\textrm{GEE-g}}$ can be consistently estimated by the robust sandwich variance estimator (after applying the delta method), denoted by $\widehat{V}_C^{\textrm{GEE-g}}$ and $\widehat{V}_I^{\textrm{GEE-g}}$, respectively. Their explicit expressions are developed in the Supplementary Material.}

To proceed, we additionally make the following assumption, which implies that the observed cluster size within each arm is only subject to exogenous randomness.

{\begin{assumption}[Arm-specific random sampling]\label{asp: rand_sampling}
For $a \in \{0,1\}$, the potential observed cluster size $M(a) = h^*_a(\epsilon_a)$ for some unknown function $h^*_a$ and exogenous random noise $\epsilon_a$ that is independent of $\{N,C,\mathrm{X},Y(a)\}$. Furthermore, for each possible $N$-dimensional binary vector ${s}$ with $M(a)$ ones, $pr\{S(a) = {s}\mid Y(a),M(a),\mathrm{X},N,C\} = {\binom{N}{M(a)}}^{-1}.$
\end{assumption}}


{Assumption \ref{asp: rand_sampling} is a special case of Assumption \ref{asp: sampling} by substituting $h^*_a(\epsilon_a)$ for $h_a(N,C,\epsilon_a)$, and is plausible when the number of individuals observed in each cluster can at most depend on the cluster treatment status, regardless of other cluster-level characteristics. As one example, $M(1)=M(0)$ may follow a Poisson distribution with mean of $50$ but truncated within the range of $[5, 200]$, in which case the observed cluster size is completely random; such a within-cluster random sampling assumption has also been previously made for sample size and power calculations in cluster-randomized experiments \citep{shi2018sample}.
As another example, $M(a)\sim\text{Poisson}(50+20a)$ indicates an increased recruitment effort in the treated clusters.
} 
{
Intuitively, Assumption \ref{asp: rand_sampling} is needed because GEE does not model the sampling procedure and may cause bias when sampling is related to covariates as in Assumption \ref{asp: sampling}. Assumption \ref{asp: rand_sampling} resolves this bias by making the sampling procedure ignorable within each treatment group.
The requirement for Assumption \ref{asp: rand_sampling} indicates an inherent limitation of standard GEE in handling within-cluster sampling. 
In Remark~\ref{rmk1}, we provide special cases where Assumption \ref{asp: rand_sampling} can be relaxed to Assumption \ref{asp: sampling} without comprising robustness.
}

Under arm-specific random sampling, Theorem \ref{thm-GEE} below articulates several model specifications of GEE such that $\widehat{\Delta}_{I}^{\textrm{GEE-g}}$ and $\widehat{\Delta}_{C}^{\textrm{GEE-g}}$ are consistent and asymptotically normal, leading to valid statistical inference for our estimands. The regularity conditions required are moment and continuity assumptions  similar to those invoked in Theorem 5.31 of \cite{vaart_1998} for M-estimators. Particularly, model specifications (S2)-(S4) in Theorem \ref{thm-GEE} make no assumption on the underlying distribution for the potential outcomes, and hence allow for model-robust estimation. 


\begin{theorem}\label{thm-GEE}
(a) Under Assumptions \ref{asp:super}, \ref{asp:rand}, \ref{asp: rand_sampling} and regularity conditions in the Supplementary Material, $\left(\widehat{V}_C^{\textrm{GEE-g}}\right)^{-1/2}(\widehat{\Delta}_{C}^{\textrm{GEE-g}} - {\Delta}_{C}) \xrightarrow{d} \mathcal{N}(0, 1)$ if any of the following conditions holds: (S1) the mean model $g\{E(Y_{ij}\mid U_{ij})\} = U_{ij}^\top \beta$ is correctly specified; (S2) an independence working correlation structure is used; (S3) $g$ is the identity link function and the working variance is constant with $v(Y_{ij}\mid U_{ij}) = \sigma^2$; (S4) $L_{ij}$ does not vary within each cluster and thus is only a function of cluster-level covariates $(N_i, C_i)$. (b) If the estimating equations (\ref{eq:GEE}) are additionally weighted by the source population size $N_i$ for each cluster $i$, then $ \left(\widehat{V}_I^{\textrm{GEE-g}}\right)^{-1/2}(\widehat{\Delta}_{I}^{\textrm{GEE-g}} - {\Delta}_{I}) \xrightarrow{d} \mathcal{N}(0, 1)$ if any of (S1)-(S4) holds. 
\end{theorem}

{For model-robust inference via GEE, (S1) is trivial as a correctly specified mean model is expected to yield a valid causal effect estimator. 
Without a correct mean model, (S2) indicates that an independence working correlation structure admits valid causal inference, and contradicts the usual recommendation that the intracluster correlations should be modeled for analyzing cluster-randomized experiments \citep{murray1998design,donner2000design}. 
(S3) is the default specification for analyzing continuous outcomes and ensures model-robustness by proceeding with an ordinary least squares estimator. Finally, (S4) provides a more elaborate mean model specification adjusting for only cluster-level covariates. When none of (S2)-(S4) holds, e.g., logistic GEE with an exchangeable working correlation adjusting for individual-level covariates, the resulting weighted g-computation approach is not guaranteed to be consistent for our causal estimands. 
}

\begin{remark}\label{rmk1}
{In Theorem~\ref{thm-GEE}, Assumption \ref{asp: rand_sampling} can be replaced by Assumption \ref{asp: sampling} under (S1), or under (S2)-(S4) if each cluster is further weighted by $\{1+(M_i-1)\widehat{\rho}\}/M_i$ in solving the estimating equations (1). Under (S2)-(S4), this weighting serves two purposes: the numerator $\{1+(M_i-1)\widehat{\rho}\}$ removes the undesired weighting from the exchangeable working correlation, whereas the denominator offsets the observed cluster size effect. With such weights, the weighted g-computation approach is asymptotically equivalent to that based on an independence working correlation, thereby trading the estimation of intracluster correlation for model robustness.}
\end{remark}

\begin{remark}\label{rmk2}
{
Assumption \ref{asp:rand} does not hold when stratified cluster randomization or biased-coin cluster randomization is used. However, the consistency and asymptotic normality results in Theorem 1 (and also Theorem 2 below) still hold, with the only difference being that the variance estimators $\widehat{V}_C^{\textrm{GEE-g}}$ and $\widehat{V}_I^{\textrm{GEE-g}}$ may be conservative, in the sense that the asymptotic normal distribution has a variance smaller than 1. This result, as well as the variance difference, was provided in Theorem 1 of \cite{wang2021model}. An important special case where the variance estimators are consistent is $\pi = 0.5$, $f(x,y) = x-y$, and the strata variables are adjusted for as covariates. 
}
\end{remark}

\subsection{(Generalized) linear mixed models}\label{subsec: glmm}
Generalized linear mixed models are another popular method for analyzing cluster-randomized experiments. If we write $b_i$ as the random effect for cluster $i$, a typical generalized linear mixed model applied to cluster-randomized experiment often includes the following assumptions: (i) $b_1, \dots, b_m$ are independent, identically distributed from $\mathcal{N}(0,\tau^2)$, where $\tau^2$ is an unknown variance component; (ii) $Y_{i1}, \dots, Y_{iN_i}$ are conditionally independent given $(U_{i1},\dots, U_{iN_i}, b_i)$; and (iii) the distribution $Y_{ij}\mid (U_{i1},\dots, U_{iN_i}, b_i)$ is a member of the exponential family with $E(Y_{ij}\mid U_{i1},\dots, U_{iN_i}, b_i) = E(Y_{ij}\mid U_{ij}, b_i) = g^{-1}(U_{ij}^\top\alpha + b_i)$, where $g$ is the canonical link and $\alpha = (\alpha_0, \alpha_A,  \alpha_L^\top)^\top$.  Given the above assumptions, we estimate model parameters proceeds by maximizing the likelihood function, and a common practice is to consider $\alpha_A$ as the treatment effect parameter. However, the interpretation of $\alpha$ is conditional on random effects, and if the mixed model is misspecified, $\alpha$ lacks a direct connection to our marginal estimands, with an important exception that we detail below.

An interesting special case where the mixed model provides model-robust inference in cluster-randomized experiments is when a linear mixed model is considered as the working model. In this case, 
we can define the  weighted g-computation estimators as $\widehat{\Delta}_{C}^{\textrm{LMM-g}} = f\left\{\widehat{\mu}_{C}^{\textrm{LMM-g}}(1), \widehat{\mu}_{C}^{\textrm{LMM-g}}(0)\right\}$ and $\widehat{\Delta}_{I}^{\textrm{LMM-g}} = f\left\{\widehat{\mu}_{I}^{\textrm{LMM-g}}(1), \widehat{\mu}_{I}^{\textrm{LMM-g}}(0)\right\}$, where
\begin{align*}
    \widehat{\mu}_{C}^{\textrm{LMM-g}}(a) =\widehat{\alpha}_0 + \widehat{\alpha}_A a +    \widehat{\alpha}_{L}^\top \frac{\sum_{i=1}^m  \overline{L}^o_i}{m},\qquad 
    \widehat{\mu}_{I}^{\textrm{LMM-g}}(a) = \widehat{\alpha}_0 + \widehat{\alpha}_A a + \widehat{\alpha}_{L}^\top\frac{\sum_{i=1}^m N_i  \overline{L}^o_i}{\sum_{i=1}^m N_i}
\end{align*}
with $\overline{L}^o_i =  {M_i}^{-1}\sum_{j=1}^{N_i}S_{ij}L_{ij}$ being the average covariate value for cluster $i$ among the observed individuals. 
{These two weighted g-computation estimators are constructed in a similar fashion to those in Section \ref{subsec:gee}.} Furthermore, when the interest lies in the difference estimands with $f(x,y)=x-y$, we simply have $\widehat{\Delta}_C^{\textrm{LMM-g}}=\widehat{\alpha}_A$ and $\widehat{\Delta}_I^{\textrm{LMM-g}}=\widehat{\alpha}_A$ such that the treatment effect estimator from the linear mixed model can be taken as the average causal effect estimator, but these two coefficients are estimated via different weights applied to the log-likelihood as we explain in Theorem \ref{thm: lmm}. 
{For $\widehat{\Delta}_{C}^{\textrm{LMM-g}}$ and $\widehat{\Delta}_{I}^{\textrm{LMM-g}}$, we construct sandwich variance estimators $\widehat{V}_C^{\textrm{LMM-g}}$ and $\widehat{V}_I^{\textrm{LMM-g}}$ with their expressions given in the Supplementary Material.}

Theorem \ref{thm: lmm} shows that $\widehat{\Delta}_{C}^{\textrm{LMM-g}}$ and $\widehat{\Delta}_{I}^{\textrm{LMM-g}}$ are asymptotically valid if the observed cluster size is only subject to exogenous variation within each arm, even when the linear mixed model is arbitrarily misspecified.

\begin{theorem}\label{thm: lmm}
Under Assumptions \ref{asp:super}, \ref{asp:rand}, \ref{asp: rand_sampling} and regularity conditions provided in the Supplementary Material, $\left(\widehat{V}_C^{\textrm{LMM-g}}\right)^{-1/2}(\widehat{\Delta}_{C}^{\textrm{LMM-g}} - {\Delta}_{C}) \xrightarrow{d} \mathcal{N}(0, 1)$. If each cluster is weighted by $N_i$ in the log-likelihood function of the working linear mixed model, then $\left(\widehat{V}_I^{\textrm{GEE-g}}\right)^{-1/2}(\widehat{\Delta}_{I}^{\textrm{LMM-g}} - {\Delta}_{I}) \xrightarrow{d} \mathcal{N}(0, 1)$. 
\end{theorem}

Theorem \ref{thm: lmm} is the counterpart of Theorem \ref{thm-GEE} for linear mixed models in cluster-randomized experiments. {In the special case that $M_i(1)=M_i(0)$, $N_i=n$ is constant, $f(x,y)=x-y$, and $Y_{ij}$ are marginally identically distributed, Theorem \ref{thm: lmm} reduces to Theorem 1(a) of \citet{wang2021mixed}.} 
Compared to GEE with an identity link function and working exchangeable correlation, linear mixed models yield similar estimating equations, but the estimators for nuisance parameters are different, i.e., moment estimation for $\rho$ in GEE versus maximum likelihood estimation for $\sigma^2$, $\tau^2$ in linear mixed models, leading to slightly different asymptotic variances. 
{
For non-continuous outcomes, 
although linear mixed models can still provide valid inference, they are likely misspecified due to the Normal assumptions on random effects and noises. In contrast,  GEE is more flexible regarding the choice of link function and variance function, and hence more natural for handling non-continuous outcomes.}


{Both GEE and linear mixed models provide means to adjust for covariates, which can improve the precision over the unadjusted analysis. However, without further restrictions on the data-generating process, neither method is guaranteed to be equally or more efficient than an unadjusted analysis in general. For instance, \citet{wang2021mixed} has constructed an example (Scenario 1 of Section 4) showing that adjusting for covariates can even decrease the precision of $\widehat{\alpha}_A$ based on linear mixed models. 
Even with an independence correlation structure, \cite{su2021model} showed that GEE with identity link may also lose precision by covariate adjustment. 
These inherent limitations of model-based methods serve as a strong motivation for deriving more principled, and statistically efficient estimators that can maximize the precision gain from covariate adjustment in cluster-randomized experiments.}

Finally, while GEE and generalized linear mixed models are the most widely used approach for analyzing cluster-randomized experiments, there are two alternative estimators, the augmented GEE \citep{stephens2012augmented} and targeted maximum likelihood estimation \citep{balzer2021two,benitez2021comparative} that can also provide robust estimation under Assumptions \ref{asp:super}, \ref{asp:rand}, and \ref{asp: rand_sampling} when certain aspects of working models are misspecified. We provide discussions of those approaches in the Supplementary Material. 


\section{Efficient covariate adjustment}\label{sec: efficient-estimation}


\subsection{Efficient influence functions}
{For model-based covariate adjustment to provide model-robust inference, Theorems \ref{thm-GEE} and \ref{thm: lmm} largely require the observed cluster size to be independent of cluster characteristics $(N,C)$. This is a rather strong assumption and may be violated if the observed cluster size is proportional to the source population size (a common scenario in healthcare research as hospital volume is often associated with patient recruitment results) or if the observed cluster size depends on geographical location or other cluster characteristics. Additionally, the structure of the GEE and the linear mixed model estimators can limit their ability to leverage baseline covariates for maximum precision gain in cluster-randomized experiments, especially when the parametric modeling assumptions are incorrect.} To address such limitations, we develop more principled estimators to simultaneously optimize covariate adjustment and accommodate variable cluster sizes arising from post-randomization cluster-dependent sampling schemes. The proposed estimators are motivated by the efficient influence function, which is a non-parametric functional of observed data that characterizes the target estimand \citep{hines2022demystifying}. With the efficient influence function, one can derive the semiparametric efficiency lower bound for an estimand. That is, the asymptotic variance of all regular and asymptotically linear estimators over the underlying causal model is lower bounded by the variance of the efficient influence function. More importantly, recent advances in causal inference \citep[e.g.][]{van2011targeted, chernozhukov2018double} showed how to use the efficient influence functions to construct an efficient estimator, i.e., achieving the semiparametric efficiency bound, by incorporating machine learning algorithms. Theorem \ref{thm:EIF} provides the efficient influence functions for $\mu_C(a)$ and $\mu_I(a)$, from which the efficient influence functions for $\Delta_C$ and $\Delta_I$ can be obtained by the chain rule.

\begin{theorem}\label{thm:EIF}
(a) Given Assumptions \ref{asp:super}-\ref{asp: sampling}, the efficient influence function for $\mu_C(a)$ is
\begin{align*}
    \text{EIF}_{C}(a) &= \frac{I\{A=a\}}{\pi^a(1-\pi)^{1-a}}\left\{\overline{Y}^o - E\left(\overline{Y}^o\mid A=a, \mathrm{X}^o, M, N, C\right) \right\} \\
    &\quad + \frac{pr(A=a\mid M,N,C)}{\pi^a(1-\pi)^{1-a}}\left\{ E\left(\overline{Y}^o\mid A=a, \mathrm{X}^o, M, N, C\right) - E\left(\overline{Y}^o\mid A=a,  N, C\right)\right\} \\
    &\quad + E\left(\overline{Y}^o\mid A=a, N, C\right) - \mu_C(a),~~~~a\in\{0,1\},
\end{align*}
where $I\{A=a\}$ is an indicator function of $A=a$ and $\overline{Y}^o = M^{-1}\sum_{j=1}^{N} S_{\cdot j} Y_{\cdot j}$ refers to the average outcome among the observed participants in each cluster. {Here, $S_{\cdot j} = AS_{\cdot j}(1) + (1-A)S_{\cdot j}(0)$ and $Y_{\cdot j} = AY_{\cdot j}(1) + (1-A)Y_{\cdot j}(0)$, where $S_{\cdot j}(a)$ and $Y_{\cdot j}(a)$ are the $j$-th element of $S(a), Y(a)$ defined in $\mathcal{P}^W$, respectively.}
(b) Given Assumptions \ref{asp:super}-\ref{asp: sampling}, the efficient influence function for $\mu_I(a)$ is
\begin{align*}
    \text{EIF}_{I}(a) = \frac{N}{E(N)} \left\{\text{EIF}_{C}(a) + \mu_C(a) - \mu_I(a)\right\},~~~~a\in\{0,1\}.
\end{align*}
\end{theorem}

The efficient influence functions in Theorem \ref{thm:EIF} involve three nuisance functions, which we denote by $\eta_a^* = E\left(\overline{Y}^o\mid A=a, \mathrm{X}^o, M, N, C\right)$, $\zeta_a^* =E\left(\overline{Y}^o\mid A=a, N, C\right)$, and $\kappa_a^* =pr(A=a\mid M,N,C)$.
Compared to $\text{EIF}_{C}(a)$, $\text{EIF}_{I}(a)$ additionally includes a weight, $N/E(N)$, to target  the individual-average treatment effect. When $M$ is identical to the source population size $N$, the number of sampled individuals can be treated as a pre-randomization variable, and the efficient influence function for $\mu_C(a)$ reduces to that in \cite{balzer2019new}, where the only nuisance function is $\eta_a^*$. In more general settings as we consider here, two additional nuisance functions, $\zeta_a^*$ and $\kappa_a^*$, are needed to leverage  $\mathrm{X}^o$, which involves post-randomization information $S$, for addressing bias and improving efficiency. Next, we construct new estimators for $\Delta_C$ and $\Delta_I$ based on the efficient influence functions.


\subsection{Efficient estimators based on the efficient influence functions}\label{subsec: sle}
Based on Theorem \ref{thm:EIF}, we propose the following estimator for $\mu_C(a)$ and $\mu_I(a)$:
\begin{align*}
    \widehat{\mu}_{C}^{\textrm{Eff}}(a) = \frac{1}{m} \sum_{i=1}^m  D_i(a,\widehat{h}_a), \qquad \widehat{\mu}_{I}^{\textrm{Eff}}(a) = \frac{1}{\sum_{i=1}^m N_i} \sum_{i=1}^m N_i D_i(a, \widehat{h}_a),\numberthis\label{def:mu_eff}
\end{align*}
where $\widehat{h}_a = (\widehat{\eta}_a, \widehat{\zeta}_a, \widehat{\kappa}_a)$ are user-specified estimators for nuisance functions $h^*_a = (\eta_a^*, \zeta_a^*, \kappa_a^*)$ and 
\begin{align*}
    D_i(a, \widehat{h}_a)&=\frac{I\{A_i=a\}}{\pi^a(1-\pi)^{1-a}}\left\{\overline{Y}_i^o - \widehat{\eta}_a(\mathrm{X}_i^o, M_i, N_i, C_i) \right\} \\
    &\quad + \frac{\widehat{\kappa}_a(M_i,N_i,C_i)}{\pi^a(1-\pi)^{1-a}}\left\{ \widehat{\eta}_a(\mathrm{X}_i^o, M_i, N_i, C_i) - \widehat{\zeta}_a(N_i, C_i)\right\} +\widehat{\zeta}_a(N_i, C_i).
\end{align*}
Then the target estimands defined in Section \ref{sec:def} are estimated by $\widehat\Delta_C^{\textrm{Eff}} = f\left\{ \widehat{\mu}_{C}^{\textrm{Eff}}(1),  \widehat{\mu}_{C}^{\textrm{Eff}}(0)\right\}$ and $\widehat\Delta_I^{\textrm{Eff}} = f\left\{ \widehat{\mu}_{I}^{\textrm{Eff}}(1),  \widehat{\mu}_{I}^{\textrm{Eff}}(0)\right\}$. 
Among the many possibilities for estimating these nuisance functions, we primarily consider the following two approaches. 

The first approach considers parametric working models, that are, $\widehat{\eta}_a = \eta_a(\widehat\theta_{\eta,a}), \widehat{\zeta}_a =  \zeta_a(\widehat\theta_{\zeta,a}), \widehat{\kappa}_a = \kappa_a(\widehat\theta_{\kappa,a})$ for pre-specified functions  $\eta_a, \zeta_a, \kappa_a$ with finite-dimensional parameters $\theta_{\eta,a}, \theta_{\zeta,a}, \theta_{\kappa,a}$.
In this case, we use superscript ``$\textrm{Eff-PM}$'', e.g., $\widehat{\Delta}_C^{\textrm{Eff-PM}}$, to highlight the role of parametric models in estimation.
Example working models include GEE, generalized linear mixed models, penalized regression for variable selection, among others; in all cases, the associated parameters are estimated by solving estimating equations, and we assume that the estimating equations satisfy regularity conditions provided in the Supplementary Material such that the nuisance parameter estimators are  asymptotically linear. 
{With this method, we can compute the sandwich variance estimators $\widehat{V}_C^{\textrm{Eff-PM}}$ for $\widehat{\Delta}_C^{\textrm{Eff-PM}}$ and $\widehat{V}_I^{\textrm{Eff-PM}}$ for $\widehat{\Delta}_I^{\textrm{Eff-PM}}$, which are given in the Supplementary Material.  For the subsequent technical discussions, we denote the probability limit of $(\widehat\theta_{\eta,a}, \widehat\theta_{\zeta,a}, \widehat\theta_{\kappa,a})$ as $(\underline\theta_{\eta,a}, \underline\theta_{\zeta,a}, \underline\theta_{\kappa,a})$}

The second approach uses machine learning algorithms with cross-fitting to estimate all nuisance functions. 
For this case, we use superscript ``$\textrm{Eff-ML}$'', e.g., $\widehat{\Delta}_C^{\textrm{Eff-ML}}$, to indicate using machine learning algorithms.
We assume that each nuisance function estimator is consistent to the truth at an $m^{{1}/{4}}$ rate such that $\widehat{h}_a - h_a^* = o_p(m^{-1/4})$ in $L_2(\mathcal{P})$-norm. 
This $m^{1/4}$ rate can be achieved by many methods such as random forests \citep{Wager2016}, neural networks \citep{Farrell2021}, and boosting \citep{Luo2016}; {a further discussion on this rate is provided in the Supplementary Material}. 
In addition, we assume a regularity condition that $\widehat{\kappa}_a$ and $E\{(\eta_a^*)^2\mid M,N,C\}$ are uniformly bounded, a similar condition invoked in \citet{chernozhukov2018double} for controlling the remainder term and consistently estimating the variance. {In the cross-fitting step, we randomly partition $m$ clusters into $K$ parts with roughly equal sizes (the size difference is at most 1), and denote $\mathcal{O}^*_{k}$ as the $k$-th part and $\mathcal{O}^*_{-k} = \bigcup_{k'\in\{1,\ldots,K\}\setminus\{k\}}\mathcal{O}^*_{k'}$. For each $k$, we compute $\widehat{h}_{a,k}$, which is the nuisance function trained on $\mathcal{O}^*_{-k}$ and evaluated at $\mathcal{O}^*_{k}$, and the estimated nuisance function $\widehat{h}_a$ evaluated on all clusters is then the combination of $\widehat{h}_{a,k}$ for all $k =1, \dots, K$. We then plug in $\widehat{h}_a$ to Equation (\ref{def:mu_eff}) to compute the estimators for $\Delta_C$ and $\Delta_I$.} In practice, we recommend choosing $K$ such that $m/K \ge 10$.  
For variance estimation, we propose the following consistent estimators {based on the efficient influence function and cross-fitting:}
\begin{align*}
    \widehat{V}_{C}^{\textrm{Eff-ML}} &=   \frac{1}{m^2} \sum_{k=1}^{K}  \sum_{i \in \mathcal{O}_k} \left[ \sum_{a=0}^1  \dot{f}_{a,C}\left\{D_i(a, \widehat{h}_{a,k}) - \frac{1}{| \mathcal{O}_k|}\sum_{l \in \mathcal{O}_k} D_l(a, \widehat{h}_{a,k})\right\} \right]^2,\\
    \widehat{V}_{I}^{\textrm{Eff-ML}} &=   \frac{1}{(\sum_{i=1}^m N_i)^2} \sum_{k=1}^{K}  \sum_{i \in \mathcal{O}_k} \left[ \sum_{a=0}^1  \dot{f}_{a,I}\left\{N_iD_i(a, \widehat{h}_{a,k}) - \frac{1}{| \mathcal{O}_k|}\sum_{l \in \mathcal{O}_k} N_lD_l(\widehat{h}_{a,k})\right\} \right]^2,
\end{align*}
where $\dot{f}_{a,C}$ is the partial derivative of $f$ on $\mu_C(a)$ at \{$\widehat{\mu}_{C}^{\textrm{Eff-ML}}(1),  \widehat{\mu}_{C}^{\textrm{Eff-ML}}(0)\}$,  $\dot{f}_{a,I}$ is the partial derivative of $f$ on $\mu_I(a)$ at \{$\widehat{\mu}_I^{\textrm{Eff-ML}}(1),  \widehat{\mu}_I^{\textrm{Eff-ML}}(0)\}$, and $|\mathcal{O}_k|$ is the size of $\mathcal{O}_k$.

Theorem \ref{thm: Eff} summarizes the asymptotic behaviors of our proposed estimators under both strategies for estimating the nuisance functions.

\begin{theorem}\label{thm: Eff}
Given Assumptions \ref{asp:super}-\ref{asp: sampling} and above regularity conditions, 
\begin{enumerate}[(a)]
    \item if $\kappa_a({\underline\theta}_{\kappa,a}) =\kappa_a^*$ or $E\{\eta_a({\underline\theta}_{\eta,a})\mid M,N,C\} = \zeta_a({\underline\theta}_{\zeta,a})$, then $\left(\widehat{V}_C^{\textrm{Eff-PM}}\right)^{-1/2}(\widehat\Delta_C^{\textrm{Eff-PM}} -\Delta_C) \xrightarrow{d} \mathcal{N}(0, 1)$ and $\left(\widehat{V}_I^{\textrm{Eff-PM}}\right)^{-1/2}(\widehat\Delta_I^{\textrm{Eff-PM}} -\Delta_I) \xrightarrow{d} \mathcal{N}(0, 1)$;
    \item if $\widehat{h}_a - h_a^* = o_p(m^{-1/4})$ in $L_2(\mathcal{P})$-norm, then $\left(\widehat{V}_C^{\textrm{Eff-ML}}\right)^{-1/2}(\widehat\Delta_C^{\textrm{Eff-ML}} -\Delta_C) \xrightarrow{d} \mathcal{N}(0, 1)$ and $\left(\widehat{V}_I^{\textrm{Eff-ML}}\right)^{-1/2}(\widehat\Delta_I^{\textrm{Eff-ML}} -\Delta_I) \xrightarrow{d} \mathcal{N}(0, 1)$. Furthermore, $\widehat{V}_C^{\textrm{Eff-ML}}$ and $\widehat{V}_I^{\textrm{Eff-ML}}$ converge in probability to the semiparametric efficiency lower bounds of $\Delta_C$ and $\Delta_I$, respectively.
\end{enumerate}
\end{theorem}

For the proposed estimators with parametric working models, Theorem \ref{thm: Eff}(a) implies that the estimator is consistent if $\kappa_a$ is correctly specified, or the working models $\eta_a$ and $\zeta_a$ are compatible conditioning on $M,N,C$. In the special case that all cluster-level covariates are discrete, the latter condition can be satisfied by setting $$\eta_a(\underline\theta_{\eta,a}) = \zeta_a + \underline\beta^\top\left(\overline{X}^o - \sum_{n}\sum_c I\{N=n, C=c\} \underline\theta_{n,c}\right),$$ 
where $\underline\beta$ is an arbitrary $p$-dimensional vector and $\underline\theta_{n,c} = E\{\overline{X}^o\mid N=n,C=c\}$; 
in this special case, the resulting estimators are in fact robust to arbitrary working model misspecification. In more general cases with non-discrete cluster-level covariates, the proposed estimators are at least triply-robust. That is, they are consistent to their respective target estimands as long as two out of the three nuisance functions are correctly modeled, since $E(\eta^*_a\mid M,N,C) = \zeta^*_a$ as proved in the Supplementary Material.

For the proposed estimators using machine learning algorithms, efficiency can be achieved when all nuisance parameters are consistently estimated, leading to higher asymptotic precision than potentially misspecified parametric working models. For modeling $\eta_a^*$, since $\mathrm{X}^o$ is a matrix and its dimension may change across clusters, a feasible practical strategy is to fit $Y_{ij}$ on $(X_{ij}, M_i, N_i, C_i)$ and pre-specified summary statistics of $\mathrm{X}_i^o$ with fixed dimensions, e.g., $\overline{X}_i^o$, and then compute the cluster-average of predictions as the model fit.  Alternatively, one can directly model $\overline{Y}_i^o$ on $(M_i, N_i, C_i)$ and functions of $\mathrm{X}_i^o$, which can be potentially high-dimensional due to higher-order association between $\mathrm{X}_i^o$ and $\overline{Y}_i^o$.

\begin{remark}\label{rmk:3}
In practice, each of the two approaches for nuisance function estimation has its pros and cons. The machine learning methods are asymptotically more precise, but the efficiency gain typically requires at least a moderate number of clusters. In addition, although the cross-fitting procedure yields the desired convergence property, it may be prone to finite-sample bias especially when the sample size is limited \citep{hines2022demystifying}. This finite-sample bias can be potentially alleviated by using parsimonious parametric working models, thereby trading precision for better finite-sample performance characteristics. 

\end{remark}

\begin{remark}\label{rmk:4}
If the observed cluster size $M<N$, our proposed estimators require an accurate estimate of the source population size for efficient estimation. 
If  $N$ is not fully available for all clusters, one can still use the observed data $(\overline{Y}_i^o, A_i, C_i)$ to infer $\Delta_C$, and the efficient influence function for $\mu_C(a)$ becomes
$$\frac{I\{A=a\}}{\pi^a(1-\pi)^{1-a}}(\overline{Y}^o - \zeta_a^*)+ \zeta_a^* - \mu_C(a),$$ 
based on which an estimator for $\Delta_C$ can be constructed. 
For example, we can apply our proposed estimator with $\widehat{\eta}_a$ set to be equal to $\widehat{\zeta}_a$  and get a consistent estimator for $\Delta_C$ even when $\widehat{\zeta}_a$ is incorrectly specified. 
The individual-average treatment effect $\Delta_I$, however, is generally not identifiable without observing $N$.
\end{remark}

\begin{remark}\label{rmk:5}
{Under stratified cluster randomization or biased-coin cluster randomization, Theorem \ref{thm: Eff}(a) still holds except that the variance estimators may be conservative, following the same argument as in Remark \ref{rmk2}. In addition, Theorem \ref{thm: Eff}(b) remains unchanged under these two designs with the same assumption on the nuisance function estimators. The sketch proof of this result is provided in the Supplementary Material.}
\end{remark}

\section{Simulation experiments}\label{sec: simulation}
\subsection{Simulation design}
We conducted two simulation experiments to compare different methods for analyzing cluster-randomized experiments. The first simulation study focused on estimating the difference estimands of the cluster-average and individual-average treatment effect for continuous outcomes, while the second study focused on the relative risk estimands for binary outcomes. In each experiment, we considered a relatively small ($m=30$) or large ($m=100$) number of clusters, random observed cluster sizes (i.e., $M_i$ is independent of other variables as a special case of Assumption \ref{asp: rand_sampling}) or cluster-dependent observed cluster sizes (i.e., $M_i$ depends on treatment and cluster-level covariates as in Assumption~\ref{asp: sampling}). Combinations of these specifications are labeled as scenarios 1-4 in Table~\ref{tab: sim1} and Table~\ref{tab: sim2}.

In the first simulation experiment, we let $(N_i, C_{i1}, C_{i2})$, $i=1,\dots,m$ be independent draws from distribution $\mathcal{P}^N \times \mathcal{P}^{C_1\mid N} \times \mathcal{P}^{C_2\mid N,C_1}$, where $\mathcal{P}^N$ is uniform over support $\{10,50\}$, $\mathcal{P}^{C_1\mid N} = \mathcal{N}(N/10, 4)$, and $\mathcal{P}^{C_2\mid N,C_1} =\mathcal{B}[\textrm{expit}\{\log(N/10)C_1\}]$ is a Bernoulli distribution with $\textrm{expit}(x) = (1+e^{-x})^{-1}$. Next, for each individual in the source population, we generated the individual-level covariates from $X_{ij1} \sim \mathcal{B}\left({N_i}/{50}\right)$, $X_{ij2} \sim \mathcal{N}\left\{{\sum_{j=1}^{N_i} X_{ij1}}(2C_{i2}-1)/{N_i}, 9\right\}$, and potential outcomes from $  Y_{ij}(1) \sim \mathcal{N}\left\{ N_i/5 + N_i\sin(C_{i1})  (2C_{i2}-1)/30+ 5e^{X_{ij1}}\mid X_{ij2}\mid , 1\right\}$ and $  Y_{ij}(0) \sim \mathcal{N}\left\{\gamma_i + N_i\sin(C_{i1})  (2C_{i2}-1)/30+ 5e^{X_{ij1}}\mid X_{ij2}\mid , 1\right\}$
where $\gamma_i \sim \mathcal{N}(0,1)$ is a cluster-level random intercept to allow for a positive residual intracluster correlation. We set $M_i(1) = M_i(0) = 9 + \mathcal{B}(0.5)$ for the random observed cluster size scenario, and $M_i(1) =  N_i/5+5C_{i2}$, $M_i(0) = 3 I\{N_i=50\} + 3$ for the cluster-dependent observed cluster size scenario. Then, we independently sampled $A_i\sim\mathcal{B}(0.5)$ and defined $Y_{ij} = A_i Y_{ij}(1) + (1-A_i) Y_{ij}(0)$ and $M_i = A_i M_i(1) + (1-A_i) M_i(0)$. Finally, for each cluster, we uniformly sampled without replacement $M_i$ individuals, for whom $S_{ij}=1$, and the observed data in each simulation replicate are $\{Y_{ij}, A_i, M_i, C_{i1}, C_{i2}, X_{ij1}, X_{ij2}: S_{ij} = 1,i=1,\dots,m,j=1,\dots, N_i\}$. For the second simulation study, the data were generated following the first simulation study, except that the potential outcomes were drawn from the following Bernoulli distributions:
$Y_{ij}(1) \sim \mathcal{B}\left[\textrm{expit}\left\{ -N_i/20 + N_i\sin(C_{i1}) (2C_{i2}-1)/30+ 1.5e^{X_{ij1}}\mid X_{ij2}\mid ^{1/2}\right\}\right]$ and $Y_{ij}(0) \sim \mathcal{B}\left[\textrm{expit}\left\{ \gamma_i+N_i\sin(C_{i1}) (2C_{i2}-1)/30+ 1.5(2X_{ij1}-1)\mid X_{ij2}\mid ^{1/2} \right\}\right]$.

We compared the following methods. The unadjusted method \citep{bugni2022inference} is equivalent to our proposed method setting $\widehat{\eta}_a = \widehat{\zeta}_a$ to be a constant. GEE with weighted g-computation was implemented as described in Section \ref{subsec:gee} with an exchangeable working correlation for continuous outcomes and an independence working correlation for binary outcomes. Linear mixed models with weighted g-computation were implimented as described in Section~\ref{subsec: glmm} for both continuous and binary outcomes. {For each model-based method, all baseline covariates are adjusted for as linear terms, which misspecify the true data-generating distribution. However, the GEE estimator satisfies conditions (S2) for the first simulation and (S3) for the second simulation in Theorem \ref{thm-GEE}, yielding valid asymptotic inference under random observed cluster sizes. Likewise, the working linear mixed model is also misspecified, but the weighted g-computation estimator is consistent under random observed cluster sizes as implied by Theorem \ref{thm: lmm}. For our proposed efficient estimators, we considered both parametric working models and machine learning algorithms to estimate the nuisance functions. The former used generalized linear models for estimating $\eta_a$ and $\zeta_a$, and we set a correct working model for $\kappa_a$ such that the conditions in Theorem \ref{thm: Eff}(a) are satisfied. The latter exploited the Super Learner \citep{van2007super} for model-fitting with generalized linear models, regression trees, and neural networks (to obtain consistent estimators for each nuisance function), and facilitates the validation of the results in Theorem \ref{thm: Eff}(b). To summarize, GEE, linear mixed models, and $\widehat{\eta}_a$ adjusted for covariates  $(N_i,C_{i1},C_{i2},X_{ij1}, X_{ij2})$, whereas $\widehat{\zeta}_a$ adjusted for cluster-level covariates $(N_i,C_{i1},C_{i2})$.}



{For each approach, we used the proposed variance estimator and applied a degrees-of-freedom adjustment, i.e., multiplying the variance estimator by a factor $m/(m-5)$, where $5$ is the number of adjusted baseline covariates. This adjustment was motivated by a common technique used in the regression context for small-sample bias correction \citep{mackinnon1985some}, and we adapted it to potentially improve the finite-sample coverage probability of each variance estimator. In addition, we considered the $t$-distribution with $m-5$ degrees of freedom, instead of a Normal distribution, to better approximate the distribution of the standardized covariate-adjusted estimator in finite samples. This choice improved the coverage probability (closer to nominal level) by $1-3\%$ for $m=30$ and less than $1\%$ for $m=100$.} For each scenario of both simulation experiments, we randomly generated $10,000$ data sets and tested the above methods on each data set. {We focused on the following metrics for comparison: bias, empirical standard error (ESE) from the Monte Carlo replications, average of standard error estimates (ASE), and empirical coverage probability (CP) of the 95\% confidence interval.}

\subsection{Simulation results}
Table \ref{tab: sim1} summarizes the simulation results for continuous outcomes. Since the outcome distribution varies by the source population size, the cluster-average treatment effect $\Delta_C = 6$, which differs from the individual-average treatment effect $\Delta_I = 8.67$. Across all scenarios, the proposed methods with no covariate adjustment, parametric working models, or machine learning algorithms have negligible bias and nominal coverage, while the model-based methods, i.e., GEE and linear mixed models, show bias and under-coverage if the observed cluster sizes are cluster-dependent.
\begin{table}[htbp]
\renewcommand{\arraystretch}{0.8}
\centering
\caption{Results in the first simulation experiment with continuous outcomes. }\label{tab: sim1}
\resizebox{1\textwidth}{!}{
\begin{tabular}{lrrrrrrrrrr}
  \hline
  & & \multicolumn{4}{c}{\shortstack[c]{ Cluster-average treatment \\effect $\Delta_C = 6$}
  } & & \multicolumn{4}{c}{\shortstack[c]{ Individual-average treatment \\effect $\Delta_I = 8.67$}
  }\\
Setting  & Method & Bias & ESE   &  ASE &  CP &\ & Bias & ESE   &  ASE &  CP\\ 
  \hline
\multirow{5}{*}{\shortstack[l]{Scenario 1: \\ Small $m$ with\\ 
 random \\
 observed cluster sizes}}&  Unadjusted &   $0.07 $ & $ 4.95 $ & $ 4.88 $ & $ 0.95$ & $ $ & $ -0.09 $ & $ 4.63 $ & $ 4.29 $ & $ 0.93 $\\ 
 &     GEE-g  & $ 0.15 $ & $ 2.78 $ & $ 2.47 $ & $ 0.92$ & $ $ &  $-0.07 $ & $ 3.69 $ & $ 3.13 $ & $ 0.90 $\\ 
 &     LMM-g  & $ 0.15 $ & $ 2.78 $ & $ 2.79 $ & $ 0.96$ & $ $ &  $-0.07 $ & $ 3.68 $ & $ 2.95 $ & $ 0.90 $\\ 
  &  Eff-PM  &  $-0.01 $ & $ 3.20 $ & $ 2.69 $ & $ 0.93 $ & $ $ &  $-0.06 $ & $ 3.96$ & $ 3.40 $ & $ 0.91 $\\ 
   &    Eff-ML  & $0.03 $ & $ 2.16 $ & $ 2.04 $ & $ 0.95 $ & $ $ & $ 0.01 $ & $ 2.63 $ & $ 2.48 $ & $ 0.95 $\\ 
   \hline
\multirow{5}{*}{\shortstack[l]{Scenario 2: \\ Small $m$ with\\ 
 cluster-dependent \\
 observed cluster sizes}} &   Unadjusted  & $ -0.05 $ & $ 5.32 $ & $ 5.32 $ & $ 0.95 $ & $ $ & $-0.12 $ & $ 4.79 $ & $ 4.48 $ & $ 0.93 $\\  
 &     GEE-g  & $ 1.90 $ & $ 3.74 $ & $ 3.24 $ & $ 0.87 $ & $ $ & $ 0.73   $ & $ 4.41 $ & $ 3.66 $ & $ 0.89 $\\ 
 &    LMM-g   & $  1.66 $ & $ 3.60 $ & $ 3.68 $ & $ 0.94 $ & $ $ & $ 0.66 $ & $ 4.28 $ & $ 3.81 $ & $ 0.93$ \\ 
 &  Eff-PM  & $ -0.01 $ & $ 3.82 $ & $ 3.47 $ & $ 0.93 $ & $ $ & $ -0.05 $ & $ 4.30 $ & $ 3.96 $ & $ 0.93$ \\ 
  &    Eff-ML  & $ -0.21 $ & $ 3.40 $ & $ 3.46 $ & $ 0.96 $ & $ $ & $ -0.17 $ & $ 4.17 $ & $ 3.83 $ & $ 0.93$ \\
  \hline
\multirow{5}{*}{\shortstack[l]{Scenario 3: \\ Large $m$ with\\ 
 random \\
 observed cluster sizes}}&  Unadjusted &   $0.01 $ & $ 2.67 $ & $ 2.65 $ & $ 0.95 $ & $ $ & $-0.03 $ & $ 2.41$ & $ 2.37 $ & $ 0.94 $\\ 
 &     GEE-g   & $0.05 $ & $ 1.42 $ & $ 1.38 $ & $ 0.95 $ & $ $ & $-0.01 $ & $ 1.91 $ & $ 1.83 $ & $ 0.94 $\\ 
  &   LMM-g   & $ 0.05 $ & $ 1.42 $  & $  1.42 $  & $  0.95 $ & $ $ & $-0.01 $ & $ 1.91 $ & $ 1.55 $ & $ 0.90$ \\ 
 & Eff-PM  & $ 0.01 $ & $ 1.42 $ & $ 1.38 $ & $ 0.95 $ & $ $ & $-0.01 $ & $ 1.93 $ & $ 1.92 $ & $ 0.95 $ \\ 
  &    Eff-ML  & $ 0.04 $ & $ 0.70 $ & $ 0.71 $ & $ 0.95 $ & $ $ & $ 0.01 $ & $ 0.78 $ & $ 0.81 $ & $ 0.96$ \\ 
  \hline
\multirow{5}{*}{\shortstack[l]{Scenario 4: \\ Large $m$ with\\ 
 cluster-dependent \\
 observed cluster sizes}} & Unadjusted &  $0.05 $& $2.91 $ & $ 2.89 $ & $ 0.95 $ & $ $ & $0.01$ & $ 2.53 $ & $ 2.48 $ & $ 0.94 $\\ 
 &     GEE-g   & $  1.80 $ & $ 1.89 $ & $ 1.82 $ & $ 0.82 $ & $ $ & $ 0.74 $ & $ 2.21 $ & $ 2.12 $ & $ 0.92 $\\ 
&     LMM-g  & $   1.72 $ & $ 1.87 $ & $ 1.89 $ & $ 0.86 $ & $ $ & $ 0.73 $ & $ 2.20 $ & $ 2.00 $ & $ 0.91 $\\ 
  & Eff-PM  & $ -0.01 $ & $ 1.88 $ & $ 1.83 $ & $ 0.95 $ & $ $ & $ -0.01 $ & $ 2.20 $ & $ 2.16 $ & $ 0.94 $\\
  &    Eff-ML & $ -0.01 $ & $ 1.89 $ & $ 1.83 $ & $ 0.94 $ & $ $ & $ -0.01 $ & $ 2.20 $ & $ 2.13 $ & $ 0.94$ \\ 
  \hline
\end{tabular}
}
\vspace{5pt}

{
\raggedright \small
\setlength{\baselineskip}{1pt}
 Unadjusted: the unadjusted estimator. GEE-g: GEE with weighted g-computation. LMM-g: linear mixed models with weighted g-computation. Eff-PM: our proposed method with parametric working models. Eff-ML: our proposed method with machine learning algorithms. ESE: empirical standard error. ASE: average of estimated standard error. CP: coverage probability based on $t$-distribution.

}
\end{table}

For scenarios 1 and 3, since the observed cluster size is completely random, the model-based estimators  perform well, which confirms the theoretical results in Section \ref{sec:model-based}. Among all methods, our method with machine learning algorithms has the highest precision: its variance is $40\sim 93\%$ and $49\sim 90\%$ smaller than the other methods for estimating the $\Delta_C$ and $\Delta_I$, respectively, demonstrating its potential to flexibly leverage baseline covariates for improving study power. Our estimator that uses parametric models for covariate adjustment has comparable precision to model-based methods and is more efficient than the unadjusted estimator. 

For scenarios 2 and 4, more individuals are enrolled in treated clusters with a larger source population, leading to bias for methods that utilized individual-level data without adjusting for this cluster-dependent sampling scheme. Specifically, model-based methods have bias ranging from $0.66$ to $1.90$, and $1\%$ to $13\%$ under coverage. In contrast, our proposed methods show both validity and precision. The validity is reflected by the negligible bias and nominal coverage, and the precision is borne out by their smaller empirical variance than the unadjusted estimator.

Table~\ref{tab: sim2} summarizes the simulation results for binary outcomes, and the patterns are generally similar to those for continuous outcomes. In particular, the proposed methods remain valid across scenarios, but the advantage of machine-learning algorithms over parametric modeling is less obvious. Finally, comparing across sample size configurations, all methods have less stable performance with a smaller number of clusters. Specifically, when $m=30$, methods with covariate adjustment tend to underestimate the true standard error, causing $0\sim 5\%$ under coverage. When $m$ increases to $100$, the estimated standard errors match the empirical standard error, thereby implying the validity of our variance estimator.

\begin{table}[htbp]
\renewcommand{\arraystretch}{0.8}
\centering
\caption{Results in the second simulation experiment with binary outcomes. }\label{tab: sim2}
\resizebox{1\textwidth}{!}{
\begin{tabular}{lrrrrrrrrrr}
  \hline
  & & \multicolumn{4}{c}{\shortstack[c]{ Cluster-average treatment \\ effect $\Delta_C = 1.54$}
  } & & \multicolumn{4}{c}{\shortstack[c]{ Individual-average treatment \\effect $\Delta_I = 1.18$}
  }\\
Setting  & Method & Bias & ESE   &  ASE &  CP &\ & Bias & ESE   &  ASE &  CP\\ 
  \hline
\multirow{5}{*}{\shortstack[l]{Scenario 1: \\ Small $m$ with\\ 
 random \\
 observed cluster sizes}}&  Unadjusted &   $0.03 $ & $ 0.25 $ & $ 0.24 $ & $ 0.94$ & $ $ & $ 0.03 $ & $ 0.15 $ & $ 0.13 $ & $ 0.93 $\\ 
 &     GEE-g  & $ 0.01 $ & $ 0.23 $ & $ 0.19 $ & $ 0.92 $ & $ $ & $ 0.02 $ & $ 0.15 $ & $ 0.13 $ & $ 0.93 $\\ 
 &     LMM-g  & $ 0.01 $ & $ 0.22 $ & $ 0.19 $ & $ 0.92$ & $ $ & $ 0.02 $ & $ 0.15 $ & $ 0.13 $ & $ 0.93 $\\ 
 &  Eff-PM & $ 0.01 $ & $ 0.21 $ & $ 0.20 $ & $ 0.93$ & $ $ & $ 0.02 $ & $ 0.16 $ & $ 0.15 $ & $ 0.95 $\\ 
  &    Eff-ML  & $0.03 $ & $ 0.22 $ & $ 0.20 $ & $ 0.94$ & $ $ & $ 0.03 $ & $ 0.15 $ & $ 0.14 $ & $ 0.95 $\\ 
  \hline
\multirow{5}{*}{\shortstack[l]{Scenario 2: \\ Small $m$ with\\ 
 cluster-dependent \\
 observed cluster sizes}}&  Unadjusted & $0.05 $ & $ 0.30 $ & $ 0.29 $ & $ 0.94 $ & $ $ & $ 0.03 $ & $ 0.17 $ & $ 0.14 $ & $ 0.93 $\\
 &     GEE-g   & $ -0.18 $ & $ 0.30 $ & $ 0.20 $ & $ 0.69$ & $ $ &  $-0.05 $ & $ 0.16 $ & $ 0.17 $ & $ 0.89 $\\
 &     LMM-g   & $  -0.12 $ & $ 0.24 $ & $ 0.20 $ & $ 0.78$ & $ $ & $ -0.04 $ & $ 0.15 $ & $ 0.16 $ & $ 0.90$ \\ 
 &  Eff-PM  & $ 0.04 $ & $ 0.28 $ & $ 0.26 $ & $ 0.92 $ & $ $ & $0.03 $ & $ 0.17 $ & $ 0.17 $ & $ 0.95 $\\ 
  &   Eff-ML  & $0.06 $ & $ 0.30 $ & $ 0.30 $ & $ 0.93 $ & $ $ & $ 0.05 $ & $ 0.19 $ & $ 0.17 $ & $ 0.95$ \\ 
  \hline
\multirow{5}{*}{\shortstack[l]{Scenario 3: \\ Large $m$ with\\ 
 random \\
 observed cluster sizes}}&  Unadjusted &   $0.01 $ & $ 0.13 $ & $ 0.13 $ & $ 0.95$ & $ $ & $ 0.01 $ & $ 0.07 $ & $ 0.07 $ & $ 0.95 $\\ 
 &     GEE-g   & $0.00 $ & $ 0.11 $ & $ 0.10 $ & $ 0.95$ & $ $ & $ 0.00 $ & $ 0.07 $ & $ 0.07 $ & $ 0.95$ \\
 &    LMM-g   & $ 0.00 $ & $ 0.11 $ & $ 0.10 $ & $ 0.94$ & $ $ & $ 0.01 $ & $ 0.07 $ & $ 0.07 $ & $ 0.95$ \\
  &  Eff-PM  & $0.01 $ & $ 0.10 $ & $ 0.10 $ & $ 0.95$ & $ $ & $ 0.00 $ & $ 0.07 $ & $ 0.08 $ & $ 0.97$ \\ 
  &    Eff-ML  & $ 0.01 $ & $ 0.10 $ & $ 0.10 $ & $ 0.95$ & $ $ & $ 0.01 $ & $ 0.06 $ & $ 0.07 $ & $ 0.97 $\\ 
  \hline
\multirow{5}{*}{\shortstack[l]{Scenario 4: \\ Large $m$ with\\ 
 cluster-dependent \\
 observed cluster sizes}}&  Unadjusted &  $0.01 $ & $ 0.15 $ & $ 0.15 $ & $ 0.95$ & $ $ & $0.01 $ & $ 0.08 $ & $ 0.08 $ & $ 0.94$ \\ 
 &     GEE-g   & $  -0.20 $ & $ 0.10 $ & $ 0.10 $ & $ 0.44$ & $ $ & $-0.06 $ & $ 0.07 $ & $ 0.09 $ & $ 0.88 $\\ 
 &    LMM-g   & $   -0.14 $ & $ 0.11 $ & $ 0.10 $ & $ 0.67$ & $ $ & $ -0.05 $ & $ 0.07 $ & $ 0.08 $ & $ 0.89$ \\ 
  &  Eff-PM  & $ 0.01 $ & $ 0.13 $ & $ 0.13 $ & $ 0.95$ & $ $ & $ 0.00 $ & $ 0.07 $ & $ 0.08 $ & $ 0.97 $\\ 
  &    Eff-ML  & $ 0.01 $ & $ 0.14 $ & $ 0.13 $ & $ 0.94$ & $ $ & $ 0.01 $ & $ 0.08 $ & $ 0.08$& $ 0.97$ \\ 
  \hline
\end{tabular}
}
\vspace{5pt}

{
\raggedright \small
\setlength{\baselineskip}{1pt}
 Unadjusted: the unadjusted estimator. GEE-g: GEE with weighted g-computation. LMM-g: linear mixed models with weighted g-computation. Eff-PM: our proposed method with parametric working models. Eff-ML: our proposed method with machine learning algorithms. ESE: empirical standard error. ASE: average of estimated standard error. CP: coverage probability based on $t$-distribution.

}
\end{table}

{To test all methods with a higher degree of source population size heterogeneity, we repeated the first and second simulations with $N$ uniformly distributed over integers in $[10,100]$. Other changes required for this data-generating process are 
provided in the Supplementary Material. Tables S1 and S2 in the Supplementary Material summarize the simulation results, which are similar to the first two simulations, thereby demonstrating the capability of our methods in handling more heterogeneous source population sizes. Of note, an increasing source population size heterogeneity can affect the stability of machine learning algorithms if the number of clusters is small ($m=30$), as reflected by less accurate standard error estimators. However, the bias in the standard error estimator is due to several outlier point estimates and variance estimates in specific simulation iterations, and therefore does not result in under coverage. The bias of the standard error estimator vanished as the number of clusters increases to $100$.}

In the Supplementary Material, we provide additional simulation results for augmented GEE and targeted maximum likelihood estimation under the same settings. 
The target maximum likelihood estimator was unbiased when it only adjusted for cluster-level covariates, but it had bias for $\Delta_I$ if the sampling was cluster-dependent. In most scenarios, both methods were less precise than our proposed method coupled with machine learning estimators for the nuisance functions.

\section{Data applications}\label{sec:data-application}
\subsection{Three cluster-randomized experiments}

The Work, Family, and Health Study
(WFHS) is a cluster-randomized experiment designed to reduce work-family conflict and improve the health and well-being of employees \citep{WFHS}. Fifty-six study groups (clusters) were equally randomized to receive a workplace intervention (treatment) or not (control) with each cluster including 3–50 employees. The observed cluster size has mean $11.77$ and standard error $7.47$. We focused on the control over work hours outcome at the 6-month follow-up, a continuous outcome measure ranging from 1 to 5. We adjusted for the following covariates: cluster sizes and group job functions (core or supporting) at the cluster level, and baseline value of control over work hours and mental health score at the individual level.

The Pain Program for Active Coping and Training study (PPACT), supported by National Institute of Health Pragmatic Clinical Trials Collaboratory, is a pragmatic,  cluster-randomized experiment evaluating the effectiveness of a care–based cognitive behavioral therapy intervention for treating long-term opioid users with chronic pain \citep{debar2022primary}. One-hundred-six clusters of primary care providers (clusters) were equally randomized to receive the intervention or usual care. Each cluster contained 1-10 participants enrolled via phone screening. The observed cluster size has mean $2.03$ and standard error $4.9$. The primary outcome was the PEGS (pain intensity and interference with enjoyment of life, general activity, and sleep) score at 12 months, a continuous scale assessing pain impact as a composite of pain intensity and interference. We adjusted for the cluster sizes and 12 individual-level baseline variables including age, gender, disability, smoking status, body mass index, alcohol abuse, drug abuse, comorbidity, depression, number of pain types, average morphine dose, and heavy opioid usage.

The Improving rational use of artemisinin combination therapies through diagnosis-dependent subsidies (ACTS) study is a cluster-randomized experiment completed in western Kenya \citep{prudhomme2018improving}. Thirty-two community clusters were randomized in a 1:1 ratio to two arms: malaria rapid diagnostic tests with vouchers of artemisinin combination therapies provided for positive test results (treatment) versus standard package (control). The primary outcome was an indicator of receiving a malaria diagnostic test among fevers in the past four weeks at 12 months. Using survey sampling, the observed cluster sizes ranged from 39 to 129, with mean $56.62$ and standard error $16.09$. As the source population size was unknown, we focused on inference of the observed population in ACTS. We adjusted for cluster sizes and existence of health facilities at the cluster level, and gender and wealth index at the individual level.

These three cluster-randomized experiments cover different contexts including social science, chronic pain treatment, and infectious disease control; they also feature different sample size configurations, outcome types, and number of covariates. By re-analyzing these data sets, we aim to illustrate our methods in multiple real-world settings. {In addition, these three cluster-randomized experiments used three different randomization schemes. WFHS had a biased-coin  cluster randomization design \citep{bray2013integrative}, aiming to balance group job functions, number of vice presidents, and location. PPACT adopted simple randomization as we considered in Assumption \ref{asp:rand}. ACTS had a stratified cluster randomization design based on six strata defined by subcounties and the existence of health facilities at the cluster level. Although Assumption \ref{asp:rand} does not hold under the biased-coin or stratified cluster randomization, they do not affect the consistency of our considered estimators as pointed out in Remarks \ref{rmk2} and \ref{rmk:5}. To account for the variance reduction under these two restricted randomization designs, we adjusted for all available covariates balanced by the restricted randomization. In particular, since information on the number of vice presidents and location is not available from the WFHS data set we analyzed, the sandwich variance estimator may be slightly conservative but still valid. For the purpose of demonstrating our theoretical results, we do not further study the variance differences across different randomization designs in the data applications and leave a more systematic study for future research.}


\subsection{Results of data analysis}
For each data set, we estimated the cluster-average and individual-average treatment effects on the difference scale, using the unadjusted estimator, GEE, linear mixed models, and our proposed method with parametric working models and machine learning algorithms. We first set $M=N$ for each study, corresponding to the source population analysis for WFHS and the enrolled population analysis for PPACT and ACTS. For each estimator, we reported the point estimates, 95\% confidence interval based on $t$-distribution, and proportion variance reduction compared to the unadjusted estimator (PVR).

Table~\ref{tab:full-data-analysis} summarizes the results of our data analyses. Across all studies and both estimands, the unadjusted analysis has the widest confidence interval, and covariate adjustment can offer variance reduction as high as 65\%. While the weighted g-computation estimators have higher PVR than our proposed estimators in WFHS and PPACT for estimating the cluster-average treatment effect, they may be biased since Assumption~\ref{asp: rand_sampling} is violated under $M=N$; in contrast, our proposed methods can remove such bias and are hence more reliable. In the analysis of these three cluster-randomized experiments, machine learning algorithms did not show an apparent advantage over parametric working models. 

\begin{table}[ht]
\renewcommand{\arraystretch}{0.8}
\caption{Results of full data analyses.}\label{tab:full-data-analysis}
\centering
\resizebox{1\textwidth}{!}{
\begin{tabular}{crrrrrrrr}
  \hline
  & & \multicolumn{3}{c}{Cluster-average treatment effect
  }  &  & \multicolumn{3}{c}{Individual-average treatment effect
  }\\
 Study &Method & Estimate & 95\% C.I.   & PVR &\ & Estimate & 95\% C.I.   & PVR\\ 
  \hline
\multirow{5}{*}{WFHS} &Unadjusted &   0.17 & (0.00, 0.34) & -  & &0.17 & (0.03, 0.31) & -\\ 
   & GEE-g  & 0.21 & (0.11, 0.31) & 65\% & &0.22 & (0.11, 0.32) & 44\% \\ 
  &  LMM-g  & 0.21 & (0.10, 0.31) & 62\% & & 0.22 & (0.11, 0.32) & 44\% \\ 
   &  Eff-PM &  0.22 & (0.11, 0.33) & 58\% & &0.21 & (0.12, 0.30) & 58\%\\
  & Eff-ML & 0.22 & (0.10, 0.33) & 57\% & & 0.20 & (0.11, 0.30) & 55\% \\ 
   \hline
\multirow{5}{*}{PPACT} &Unadjusted &  -0.83 & (-1.31, -0.35) & -  & & -0.70 & (-1.12, -0.28) & - \\ 
   & GEE-g  & -0.53 & (-0.83, -0.22) & 60\% & & -0.53 & (-0.83, -0.22) & 47\%\\ 
   & LMM-g  &-0.53 & (-0.84, -0.21) & 56\%& & -0.52 & (-0.81, -0.22) & 52\%\\ 
   &  Eff-PM &  -0.59 & (-0.98, -0.21) & 35\% & & -0.51 & (-0.81, -0.21) &49\%\\
  & Eff-ML & -0.62 & (-0.99, -0.24) & 39\%  & & -0.54 & (-0.84, -0.24) & 50\%\\ 
   \hline
\multirow{5}{*}{ACTS} & Unadjusted &   0.08 & (-0.00, 0.16) & -  & &0.08 & (-0.00, 0.16) & -\\ 
  &  GEE-g  & 0.07 & (-0.01, 0.15) & 9\% & &0.07 & (-0.01, 0.15) & 9\%\\ 
  &  LMM-g  &0.07 & (-0.00, 0.15) & 12\% & & 0.07 & (-0.00, 0.15) & 12\%\\ 
  &   Eff-PM &  0.08 & (0.00, 0.15) & 20\% & &0.07 & (-0.00, 0.15) & 18\%\\
  & Eff-ML & 0.08 & (-0.00, 0.15) & 12\%   & & 0.07 & (-0.00, 0.15) & 18\%\\ 
   \hline
\end{tabular}}
\vspace{5pt}

{
\raggedright \small
\setlength{\baselineskip}{1pt}
 Unadjusted: the unadjusted estimator. GEE-g: GEE with weighted g-computation. LMM-g: linear mixed models with weighted g-computation. Eff-PM: our proposed method with parametric working models. Eff-ML: our proposed method with machine learning algorithms. ESE: empirical standard error. 95\% C.I.: 95\% confidence interval based on $t$-distribution. PVR: proportional variance reduction compared to the unadjusted estimator.

}
\end{table}

{To further illustrate our methods under the setting of cluster-dependent sampling, we performed a simulation study in the Supplementary Material based on the WFHS data. This analysis mimics the real-world setting, i.e., the distribution of outcome and covariates under control is based on real data. The simulation results, summarized in Table S5 in the Supplementary Material, showed consistent findings with our theoretical results.}

\section{Concluding remarks} \label{sec: discussion}
{Under the overarching goal to improve the current practice of covariate adjustment in cluster-randomized experiments, our contributions to the literature are two-fold. Above all, we clarified a set of sufficient conditions under which two model-based regression estimators, when combined with weighted g-computation, are robust for estimating the cluster-average treatment effect and individual-average treatment effect, even when their working models are arbitrarily misspecified. Given the frequency of their use in practice, our results serve as important clarifications for existing practice in cluster-randomized experiments and provide simple recipes for robust covariate adjustment through GEE and linear mixed models. Despite the simplicity and accessibility of these model-based estimators, their model-robustness property largely hinges on the arm-specific random sampling assumption. Furthermore, an incorrectly specified working model limits the ability to maximally leverage the precision gain from covariate adjustment. These limitations have motivated us to search for more principled and efficient strategies for covariate adjustment without compromising the model-robustness property for the two classes of estimands. 
Therefore, as a second contribution, we have 
derived the efficient influence functions and proposed efficient estimators for the two classes of estimands that allow for efficient covariate adjustment and additionally accommodate cluster-dependent sampling. The efficient estimators open the door for using a wider class of parametric working models or machine learning algorithms to learn the potentially complex data-generating mechanisms without affecting the validity of causal inference in cluster-randomized experiments.}

{Our asymptotic framework assumes that the source population size of each cluster is bounded. This is a convenient and yet practice assumption that avoids challenges in defining the two classes of causal estimands and in addressing the potentially high dimensionality of $W_i$. Although this assumption may appear strong, we can set the upper bound of the source population size to be large enough to accommodate most real-world settings without affecting our asymptotic theory. 
For example, the upper bound of the source population size may be 100 for cluster-randomized experiments when the randomization units are classrooms, whereas, in clinic settings,  this upper bound may be much larger but still considered to be finite for healthcare interventions. The extension of our current asymptotic theory to allow for potentially infinite source population size along with required restrictions on $\mathcal{P}^{W|N}\times \mathcal{P}^N$ is an area of future research. }

A common objective for covariate adjustment in cluster-randomized experiments is to address chance imbalance and improve precision \citep{su2021model}. However, how to best select the optimal set and functional forms of covariates is an open problem that remains to be addressed in future research. For cluster-randomized experiments, this problem may be more challenging due to the unknown intracluster correlations of the outcomes and covariates within each cluster. In many cases where the investigators can only include a limited number of clusters, there will be a trade-off between the loss of degrees of freedom by adjusting for weakly prognostic covariates and the potential asymptotic power gain by including more baseline variables. While the proposed estimators may be a useful vehicle to incorporate variable selection techniques in the working models, we maintain the recommendation of pre-specifying prognostic covariates for adjustment in the design stage based on subject-matter knowledge for practical applications. 

Throughout the article, we have defined the cluster-average treatment effect and individual-level treatment effect estimands as a function of the source population size $N_i$, which can differ from the observed cluster size $M_i$. Therefore, accurate identification of these estimands require knowledge of the source population size. Conceptually, this source population represents the set of eligible participants in each cluster that may be recruited had the investigator obtained unlimited financial and logistical resources, and is precisely the set of individuals that the intervention is designed to target. The availability of $N_i$ typically depends on the types of clusters and the resource of the study. For example, the source population size can be readily available if schools, worksites and villages are randomized, whereas the source population size may be estimated from historical data if clinics or hospitals are randomized. In this latter case, one may set $N_i=M_i$ to perform an enrolled population analysis, which implicitly assumes equivalence between the observed population and the source population in each cluster. In the case where $N_i>M_i$ but no information of $N_i$ is available, we stated in Remark \ref{rmk:3} that only the cluster-average treatment effect is identifiable. In future work, it would be worthwhile to establish alternative conditions to identify the individual-level treatment effect in the absence of knowledge on source population size.

For developing the asymptotic properties of the model-robust estimators, we have primarily focused on simple randomization {and discussed the implications under stratified cluster randomization and biased-coin cluster randomization. Beyond these randomization schemes, cluster rerandomization is a useful strategy} to address baseline imbalance and further improve the study power. \citet{lu2022design} recently developed the asymptotic theory for cluster rerandomization under a finite-population framework. It would be useful to further extend our results to accommodate cluster rerandomization.

\section*{Acknowledgement}
Research in this article was partially supported by a Patient-Centered Outcomes Research Institute Award\textsuperscript{\textregistered} (PCORI\textsuperscript{\textregistered} Award ME-2020C3-21072) and National Institute of Allergy and Infectious Diseases (NIAID) grants R01AI148127, K99AI173395. The statements presented in this article are solely the responsibility of the authors and do not necessarily represent the official views of PCORI\textsuperscript{\textregistered}, or its Board of Governors or Methodology Committee. 

\section*{Supplementary material}
Supplementary material includes a causal graph summarizing all random variables, regularity conditions for Theorems \ref{thm-GEE}-\ref{thm: Eff}, consistent variance estimators, proofs, additional simulation studies,  and a review of other methods for cluster-randomized experiments.

{
\bibliographystyle{apalike}
\bibliography{references}
}

\end{document}


\def\spacingset#1{\renewcommand{\baselinestretch}%
{#1}\small\normalsize} \spacingset{1}

\date{\vspace{-5ex}}

\maketitle


\spacingset{1.5}
\setcounter{page}{1}

\renewcommand\thesection{\Alph{section}}
In Section~\ref{suppsec: causal-graph}, we provide a causal graph summarizing the relationship among random variables under our causal framework.
In Section~\ref{suppsec: reg-condi}, we provide the regularity conditions needed for our theorems. 
In Section~\ref{sec: sandwich-variance}, we provide consistent variance estimators. 
In Section~\ref{suppsec: proof}, we provide the proofs for our theorems. In Section~\ref{suppsec: other methods}, we review other methods for cluster-randomized experiments mentioned in the main paper.
In Section~\ref{suppsec: additional-sim}, we provide additional simulations for increased source population size heterogeneity, larger sample sizes, and other methods for cluster-randomized experiments.
\section{A causal graph for observed and unobserved random variables}\label{suppsec: causal-graph}
\begin{figure}[H]
		\centering
		\resizebox{0.5\textwidth}{!}{
		\begin{tikzpicture}
			
			\tikzset{line width=1pt,inner sep=5pt,
				ell/.style={draw, inner sep=1.5pt,line width=1pt,minimum size=0.8cm}}

			\node[shape = circle, dashed, ell] (X) at (-2, -2) {\large $\bfX$};
			\node[shape = circle, ell] (C) at (-2, 0) {\large$\bC$};
			\node[shape = circle, ell] (N) at (-2, 2) {\large $N$};	
			\node[shape = circle, ell] (A) at (0, 0) {\large $A$};		
			\node[shape = circle, dashed, ell] (Y) at (2, 0) {\large$\bY$};	
			\node[shape = circle,  ell] (Yo) at (4, 0) {$\bY^o$};	
			\node[shape = circle,  ell] (M) at (2, 2) {\large$M$};	
			\node[shape = circle,  dashed, ell] (S) at (4, 2)
			{\large$\bS$};	
			\node[shape = circle,  ell] (Xo) at (4, -2){\large$\bfX^o$};	
			\draw[-stealth, line width=1pt, bend left](N) to (S);
			\draw[-stealth, line width=1pt, bend left](S) to (Xo);
			\draw[-stealth, line width=1pt, bend right](C) to (Y);
			
			\foreach \from/\to in {  A/Y,Y/Yo, N/M,C/M,A/M, X/Xo, M/S, S/Yo, N/Y, X/Y.south}
			\draw[-stealth, line width = 1pt] (\from) -- (\to);
			
			\draw[line width=1pt] (X) -- (C);
			\draw[line width=1pt] (N) -- (C);
			\draw[line width=1pt, bend right] (N) to (X);
			
		\end{tikzpicture}
		}
\caption{A graph representation of the causal relationship among random variables in cluster-randomized experiments. Observed and unobserved nodes have solid and dashed boundaries, respectively.}
\label{fig:dag-1}
\end{figure}
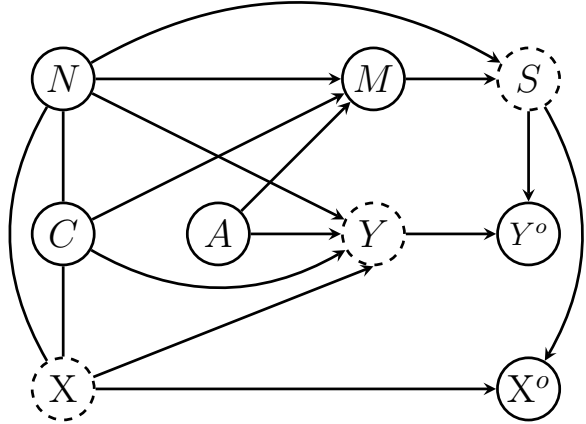 

\section{Regularity conditions for Theorems 1, 2, and 4}\label{suppsec: reg-condi}
Let $\bpsi(\bO, \btheta)$ be a vector of (generic) estimating functions of observed data $\bO$ and parameters $\btheta$. The parameters are estimated by solving $\sum_{i=1}^m \bpsi(\bO_i, \widehat\btheta) = \bzero$. The estimating functions we specified are defined in (\ref{eq: gee-psi-C}) and (\ref{eq: gee-psi-I}) for GEE-g and  (\ref{eq: lmm-psi-C}) and (\ref{eq: lmm-psi-I}) for LMM-g. For our proposed efficient estimator with user-specified parametric working models, we denote $\bpsi(\bO,\btheta)$ as the estimating equations for parameters $\btheta = (\btheta_{\eta,1}, \btheta_{\zeta,1}, \btheta_{\kappa,1}, \btheta_{\eta,0}, \btheta_{\zeta,0}, \btheta_{\kappa,0})$. We make the following regularity assumptions on $\bpsi(\bO, \btheta)$:
\begin{enumerate}
    \item $\btheta \in \mathrm{\Theta}$, a compact subset of the Euclidean space.
    \item The function $\btheta \mapsto \bpsi({o}, \btheta)$ is dominated by a  square-integrable function and twice continuously differentiable for every ${o}$ in the support of $\bO$ with nonsingular derivative matrix $\mathrm{B}_{\btheta}$ that is continuous on $\btheta$. Furthermore, the first and second order derivatives of $\btheta \mapsto \bpsi({o}, \btheta)$ are also dominated by a square-integrable function in a small neighborhood of $\underline{\btheta}$.
    \item There exists a unique solution in the interior of $\mathrm{\Theta}$, denoted as $\underline{\btheta}$, to the equations $E[\bpsi(\bO, \btheta)] =\bzero$.
    \item $\{\bpsi({o}, \btheta): ||\btheta-\underline\btheta|| <\delta\}$ is P-Donsker (Chapter 19 of \citealp{vaart_1998}) for some $\delta >0$. One sufficient condition is that $||\bpsi({o}, \btheta) - \bpsi({o}, \widetilde{\btheta})||_2 \le m({o}) ||\btheta- \widetilde{\btheta}||_2$ for some square-integrable function $m({o})$ and every $\btheta, \widetilde{\btheta}\in \mathrm{\Theta}$.
\end{enumerate}
Of note, regularity condition 3 does not imply a correctly specified model. Instead, it solely requires the uniqueness of maxima in quasi-likelihood estimation. This can be achieved by carefully designing the estimating equations and restricting parameter space $\mathrm\Theta$ to rule out degenerative solutions. For example, if $\psi$ is the estimating equations for a linear regression, then it is equivalent to the invertibility of the covariance matrix of covariates plus finite fourth moments of $O$.

For the proposed estimator, we additionally assume:
\begin{enumerate}
    \item[5.] For parametric working models, the class of functions $\{\eta_a(\btheta_{\eta,a}), \zeta_a(\btheta_{\zeta,a}), \kappa_a(\btheta_{\kappa,a}): \btheta_a \in \mathrm{\Theta}, a \in \{0,1\}\}$ is dominated by a square-integrable function, twice continuously differentialable in $\btheta_a$ with dominated first and second order derivatives,  and P-Donsker when $||\btheta_a -\underline{\btheta}_a||_2 < \delta$ for some $\delta > 0$, where $\btheta_a = (\btheta_{\eta,a}, \btheta_{\zeta,a}, \btheta_{\kappa,a})$ and $\underline{\btheta}_a = (\underline{\btheta}_{\eta,a}, \underline{\btheta}_{\zeta,a}, \underline{\btheta}_{\kappa,a})$.
    \item[6.] For data-adaptive estimation with crossing fitting, we assume that $\widehat{\kappa}_{a,k}$ and \\ $E[ \{\eta_a^*(\bfX^{o}, M,N,C)\}^2 | M,N,C ]$ are uniformly bounded.
\end{enumerate}

For our proposed methods with machine learning, the condition for achieving $\widehat{h}_a-h^*_a = o_p(m^{-1/4})$ varies by the nuisance function estimators. 
    However, a common condition across methods for the $m^{-1/4}$ rate is related to  the smoothness of $h^*_a$ and the number of explanatory variables in $h^*_a$. For example, the kernel regressor \citep{stone1980optimal} requires $h^*_a$ to have $\alpha$-order bounded, continuous derivatives such that $2\alpha > d$, where $d$ is the number of explanatory variables (together with other regularity conditions); while the requirement is $\alpha > d$ for the deep neural network \citep{Farrell2021} along with its own regularity conditions. 
    For random forests and boosting mentioned in the main paper, this requirement has a complex form, and the details can be found in Theorem 4 of \cite{Wager2016} (and \citealp{wager2018estimation}) for random forests and Theorem 2 of \cite{Luo2016} for boosting. 
    In general, most data-adaptive methods have much weaker restrictions for $h^*_a$ compared to parametric models and can hence approximate a much larger class of functions.

\section{Consistent variance estimators}\label{sec: sandwich-variance}

The variance of $\widehat{\Delta}_{C}^{\textrm{GEE-g}}$ can be estimated by the robust sandwich variance estimator, defined as 
$$\widehat{V}_C^{\textrm{GEE-g}} =m^{-2}\nabla \widehat{f}_C^{\textrm{GEE-g}}{}^\top \left(\sum_{i=1}^m \widehat{IF}_{C,i}^{\textrm{GEE-g}} \widehat{IF}_{C,i}^{\textrm{GEE-g}}{}^\top \right)\nabla \widehat{f}_C^{\textrm{GEE-g}},$$ 
where $\nabla \widehat{f}_C^{\textrm{GEE-g}}$ is the gradient of $f$ evaluated at $\left\{\widehat{\mu}_{C}^{\textrm{GEE-g}}(1),\widehat{\mu}_{C}^{\textrm{GEE-g}}(0)\right\}$ and
\begin{equation*}
\resizebox{\textwidth}{!}{
    $\widehat{IF}_{C,i}^{\textrm{GEE-g}} 
   = \left[\begin{array}{c}
    \displaystyle\left\{\frac{1}{m}\sum_{l=1}^m\frac{A_lM_l}{1 + (M_l-1)\widehat{\rho}}\right\}^{-1} \frac{A_i}{1 + (M_i-1)\widehat{\rho}}\{\overline{Y}_i^o-\overline\mu_i(1,\widehat\beta)\}   + \overline\mu_i(1,\widehat\beta) -\widehat\mu_C^{\textrm{GEE-g}}(1)    \\
  \displaystyle\left\{\frac{1}{m}\sum_{l=1}^m\frac{(1-A_l)M_l}{1 + (M_l-1)\widehat{\rho}}\right\}^{-1} \frac{1-A_i}{1 + (M_i-1)\widehat{\rho}}\{\overline{Y}_i^o-\overline{\mu}_i(0,\widehat\beta)\}   + \overline\mu_i(0,\widehat\beta) -\widehat{\mu}_C^{\textrm{GEE-g}}(0)
    \end{array}\right],$}
\end{equation*}
is the estimated influence function for cluster $i$,  $\overline{Y}_i^o = M_i^{-1} S_i^\top Y_i$, and $\overline\mu_i(a,\widehat\beta) = M_i^{-1} S_i^\top\mu_i(a,\widehat{\beta})$ with $\mu_i(a,\widehat{\beta})$ being the estimate of $\{g^{-1}(U_{ij}^\top \beta): j =1,\dots, N_i\}$ given $A=a$ for $a \in \{0,1\}$. 
Similarly, the variance of $\widehat\Delta_I^{\textrm{GEE-g}}$ is estimated by
$$\widehat{V}_I^{\textrm{GEE-g}} = (\sum_{i=1}^m N_i)^{-2}\nabla \widehat{f}_I^{\textrm{GEE-g}}{}^\top \left(\sum_{i=1}^m N_i^2 \widehat{IF}_{I,i}^{\textrm{GEE-g}} \widehat{IF}_{I,i}^{\textrm{GEE-g}}{}^\top \right)\nabla \widehat{f}_I^{\textrm{GEE-g}}$$
where $\widehat{f}_I^{\textrm{GEE-g}}$ and $\widehat{IF}_{I,i}^{\textrm{GEE-g}}$ are defined in  the same way as $\widehat{f}_C^{\textrm{GEE-g}}$ and $\widehat{IF}_{C,i}^{\textrm{GEE-g}}$ except that the subscript is substituted by $I$ in all estimated parameters.

The variance estimate of $\widehat{\Delta}_{C}^{\textrm{LMM-g}}$, denoted as $\widehat{V}_C^{\textrm{LMM-g}}$, is constructed in the same way as $\widehat{V}_C^{\textrm{GEE-g}}$ with $g$ being the identity link and $ \{\widehat\mu_C^{\textrm{GEE-g}}(1), \widehat\mu_C^{\textrm{GEE-g}}(0), \widehat\beta, \widehat{\rho}\}$ substituted by $ \{\widehat\mu_C^{\textrm{LMM-g}}(1), \widehat\mu_C^{\textrm{LMM-g}}(0), \widehat\alpha, \widehat{\tau}^2(\widehat{\tau}^2 + \widehat{\sigma}^2)^{-1}\}$; for $\widehat{\Delta}_{I}^{\textrm{LMM-g}}$, the variance estimate $\widehat{V}_I^{\textrm{LMM-g}}$ can be constructed in a similar way.

For our proposed method with parametric nuisance models, the parameters \\ $\theta = \{\mu_C(a), \theta_{\eta,a}, \theta_{\zeta,a}, \theta_{\kappa,a}: a = 0,1\}$, can be viewed as jointly estimated by solving estimating equations $\sum_{i=1}^m \psi_C(\mathcal{O}_i; \theta) = 0$ for a known function $\psi_C$, and the sandwich variance estimator $\widehat{V}_C^{\textrm{Eff-PM}}$ for $\widehat{\Delta}_C^{\textrm{Eff-PM}}$ is computed as
$$\nabla \widehat{f}_C^\top \left\{\sum_{i=1}^m \frac{\partial}{\partial \theta} \psi_C(\mathcal{O}_i; \theta) |_{\theta = \widehat{\theta}}\right\}^{-1} \left(\sum_{i=1}^m \psi_C(\mathcal{O}_i;  \widehat\theta) \psi_C(\mathcal{O}_i; \widehat\theta){}^\top \right)\left\{\sum_{i=1}^m \frac{\partial}{\partial \theta} \psi_C(\mathcal{O}_i; \theta) |_{\theta = \widehat{\theta}}\right\}^{-1}\nabla \widehat{f}_C,$$
where $\nabla \widehat{f}_C$ is the gradient of $f$ over $\theta$ evaluated at $\widehat{\theta}$. Following a similar procedure we can compute the sandwich variance estimator $\widehat{V}_I^{\textrm{Eff-PM}}$ for $\widehat{\Delta}_I^{\textrm{Eff-PM}}$.

\section{Proofs}\label{suppsec: proof}
\subsection{Lemmas}


\begin{lemma}\label{lemma1}
Given Assumptions 1-3, we have (i) $\bfX \perp (A, M) | (N, \bC)$, (ii) $(A,\bfX, \bY,\bC)\perp \bS | (M, N)$, and (iii) $\bfX\perp \bS \perp A | (M, N, \bC)$.
\end{lemma}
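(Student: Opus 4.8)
The plan is to read off all three statements from Assumptions~1--3 by writing the full-data law as a product of the mechanisms those assumptions posit --- the i.i.d.\ cluster draw of the baseline quantities $(N,\bC)$, the individual covariates $\bfX$ given $(N,\bC)$, the outcomes $\bY$, the cluster-level treatment $A$, the observed cluster size $M$, and the sampling-indicator vector $\bS$ --- and then applying the semi-graphoid axioms (decomposition, weak union, contraction). No analytic estimate is needed; the entire content is in keeping track of which variables the treatment mechanism and the within-cluster sampling mechanism are permitted to condition on.

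For (i): the randomization assumption forces $A$ to be generated from the cluster-level quantities $(N,\bC)$ alone, so $A\perp(\bfX,\bY)\mid(N,\bC)$ and in particular $A\perp\bfX\mid(N,\bC)$; the sampling assumption forces the observed size $M$ to be generated from $(N,\bC,A)$ with no further dependence on $\bfX$, i.e.\ $M\perp\bfX\mid(N,\bC,A)$. Contraction then gives $(A,M)\perp\bfX\mid(N,\bC)$.

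For (ii): the sampling assumption additionally specifies that, given $(M,N)$, the vector $\bS$ is drawn as a uniformly random size-$M$ subset of the $N$ units, a step that uses neither $A$ nor $\bfX$ nor $\bY$ nor $\bC$; hence $(A,\bfX,\bY,\bC)\perp\bS\mid(M,N)$ directly. For (iii): weak union applied to (i) gives $A\perp\bfX\mid(M,N,\bC)$; decomposition and weak union applied to (ii) give $(A,\bfX)\perp\bS\mid(M,N,\bC)$, hence both $\bfX\perp\bS\mid(M,N,\bC)$ and, by a further weak union, $A\perp\bS\mid(M,N,\bC,\bfX)$. Combining $A\perp\bfX\mid(M,N,\bC)$ with $A\perp\bS\mid(M,N,\bC,\bfX)$ by contraction yields $A\perp(\bfX,\bS)\mid(M,N,\bC)$, and together with $\bfX\perp\bS\mid(M,N,\bC)$ this is exactly the asserted mutual conditional independence $\bfX\perp\bS\perp A\mid(M,N,\bC)$.

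The one place that needs care --- and what I would flag as the main (mild) obstacle --- is the bookkeeping of conditioning sets: one must check against the precise statements of Assumptions~1--3 that the randomization does not condition on $\bfX$ or on $\bS$, and that the within-cluster sampling conditions only on $(M,N)$ rather than on $\bC$ or $\bY$, so that every weak-union and contraction step above is licensed; I would also confirm at the outset that $(N,\bC)$ are baseline quantities realized before $A$, $M$, and $\bS$, so that conditioning on these later variables introduces no selection artifact into the earlier independences.
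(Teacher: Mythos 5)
Your proposal is correct and follows essentially the same route as the paper: both establish $A\perp\bfX\mid(N,\bC)$ and $M\perp\bfX\mid(A,N,\bC)$ from the randomization and sampling assumptions and then combine them (your contraction/weak-union steps are exactly the explicit density factorizations the paper writes out), and both read (ii) off the uniform-subset form of the sampling distribution before assembling (iii) from (i) and (ii). The only bookkeeping you defer --- translating statements about the potential quantities $M(a)$, $\bS(a)$ into statements about the observed $M$, $\bS$ via consistency and $A\perp\bW$ --- is precisely the step the paper carries out explicitly, and you correctly flag it as the point to verify.
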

\begin{proof}
(i) Assumption 3 implies that $M(a) \perp \bfX | (N, \bC)$, which further implies $M \perp \bfX | (A, N, \bC)$ since $A \perp (M(a), \bfX, N, \bC)$ and $M = AM(1) + (1-A)M(0)$. Hence
\begin{align*}
    pr(\bfX|A,M, N,\bC) = \frac{pr(\bfX, A,M,N ,\bC)}{pr(A,M, N,\bC)} =  \frac{pr(M|\bfX,A,N,\bC)pr(\bfX,N,\bC)pr(A)}{pr(M|A, N,\bC)pr(A) pr(N,\bC)} = pr(\bfX|N,\bC).
\end{align*}
(ii) Assumptions 2-3 imply that $pr\{\bS(a) = {s}|A, \bY(a), \bfX, M(a), N, \bC\} = \binom{N}{M(a)}^{-1}$ and, hence, $pr\{\bS = {s}|A, \bY, \bfX, M, N, \bC\} =\sum_{a} I\{A=a\}pr\{\bS(a) = {s}|A=a, \bY(a), \bfX, M(a), N, \bC\} =  \binom{N}{M}^{-1}$, thereby indicating $(A,\bfX, \bY,\bC)\perp \bS | (M, N)$.

\noindent (iii) We have
\begin{align*}
    pr(\bfX,\bS,A|M,N,\bC) &= \frac{pr(\bfX,\bS,A,M,N,\bC)}{pr(M,N,\bC)} \\
    &= \frac{pr(\bS|M,N,\bfX,A,\bC)pr(\bfX|A,M,N,\bC)pr(A,M,N,\bC)}{pr(M,N,\bC)} \\
    &= pr(\bS|M,N,\bC) pr(\bfX|M,N,\bC) pr(A|M,N,\bC),
\end{align*}
where the last equation results from (i) and (ii).
\end{proof}

\begin{lemma}\label{lemma2}
Recall the notation $\overline{Y}(a) = \frac{1}{N} \sum_{j=1}^{N}Y_{\cdot j}(a)$, 
$\overline{Y} =\overline{Y}(A)$,
$\overline{Y}^o(a) = \frac{1}{M(a)} \sum_{j=1}^{N}S_{\cdot j}(a)Y_{\cdot j}(a)$, and $\overline{Y}^o = \overline{Y}^o(A)$.
Given Assumptions 1-3, we have 
$$E\left[\overline{Y}(a)\big| N, \bC\right] = E\left[\overline{Y}^o(a)\big| N, \bC\right] = E\left[\overline{Y}^o\big|A=a, N, \bC\right] = E\left[\overline{Y}^o\big|A=a, M, N, \bC\right].$$
\end{lemma}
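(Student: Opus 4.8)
The plan is to collapse each of the four terms to a conditional expectation of the \emph{full}-cluster average $\overline{Y}(a)$ by integrating out the subsampling, and then to dispose of the leftover equalities among conditional means of $\overline{Y}(a)$ using Lemma~\ref{lemma1} and Assumptions~2--3.

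\emph{Integrating out the sample.} The one computation that does real work is
\[
  E\!\left[\overline{Y}^o(a)\,\middle|\,A,\bY(a),\bfX,M(a),N,\bC\right]=\overline{Y}(a).
\]
Indeed, conditionally on the right-hand $\sigma$-field the quantities $\{Y_{\cdot j}(a)\}_{j=1}^N$ and $M(a)$ are fixed while $\bS(a)$ is uniform over the size-$M(a)$ subsets of $\{1,\dots,N\}$ --- this is precisely the identity $pr\{\bS(a)=s\mid A,\bY(a),\bfX,M(a),N,\bC\}=\binom{N}{M(a)}^{-1}$ established in the proof of Lemma~\ref{lemma1}(ii) --- so $E[S_{\cdot j}(a)\mid\cdot]=M(a)/N$ for each $j$, and $\overline{Y}^o(a)=M(a)^{-1}\sum_j S_{\cdot j}(a)Y_{\cdot j}(a)$ has conditional mean $N^{-1}\sum_j Y_{\cdot j}(a)=\overline{Y}(a)$. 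Applying $E[\,\cdot\mid N,\bC]$, $E[\,\cdot\mid A=a,N,\bC]$, and $E[\,\cdot\mid A=a,M,N,\bC]$ to this identity --- and using consistency ($\overline{Y}^o=\overline{Y}^o(A)$, $\bS=\bS(A)$, $M=M(A)$, hence $\overline{Y}^o=\overline{Y}^o(a)$ and $M=M(a)$ on $\{A=a\}$) --- gives $E[\overline{Y}^o(a)\mid N,\bC]=E[\overline{Y}(a)\mid N,\bC]$ (which is already the first asserted equality), $E[\overline{Y}^o\mid A=a,N,\bC]=E[\overline{Y}(a)\mid A=a,N,\bC]$, and $E[\overline{Y}^o\mid A=a,M,N,\bC]=E[\overline{Y}(a)\mid A=a,M(a),N,\bC]$.

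\emph{Removing the conditioning on $A$ and on $M$.} Because $\overline{Y}(a)$ is a function of $\bY(a)$ only and, by randomization, $A$ is independent of the cluster's potential-outcome and sampling schedule given $(N,\bC)$ (Assumption~3, as invoked in the proof of Lemma~\ref{lemma1}), conditioning additionally on $\{A=a\}$ never changes a conditional mean of $\overline{Y}(a)$; thus $E[\overline{Y}(a)\mid A=a,N,\bC]=E[\overline{Y}(a)\mid N,\bC]$ and $E[\overline{Y}(a)\mid A=a,M(a),N,\bC]=E[\overline{Y}(a)\mid M(a),N,\bC]$. Hence the second asserted equality reduces to $E[\overline{Y}(a)\mid N,\bC]=E[\overline{Y}(a)\mid N,\bC]$, and the third to $E[\overline{Y}(a)\mid M(a),N,\bC]=E[\overline{Y}(a)\mid N,\bC]$. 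For the latter, the sampling design chooses the subsample size without reference to the outcomes once the covariates are fixed (Assumption~2), i.e.\ $\bY(a)\perp M(a)\mid(\bfX,N,\bC)$, so $E[\overline{Y}(a)\mid\bfX,M(a),N,\bC]=E[\overline{Y}(a)\mid\bfX,N,\bC]$; and Lemma~\ref{lemma1}(i) gives $\bfX\perp M(a)\mid(N,\bC)$, so taking $E[\,\cdot\mid M(a),N,\bC]$ of the previous display and using the tower property yields $E[\overline{Y}(a)\mid M(a),N,\bC]=E[\overline{Y}(a)\mid N,\bC]$, finishing the chain.

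\emph{Main obstacle.} The averaging step and the $A$-stripping step are bookkeeping; the only place that requires an input not already contained in Lemma~\ref{lemma1} is the final reduction $E[\overline{Y}(a)\mid M(a),N,\bC]=E[\overline{Y}(a)\mid N,\bC]$, i.e.\ ruling out dependence between how many units a cluster contributes and that cluster's potential outcomes --- this is exactly the sampling-ignorability content of Assumption~2, with Lemma~\ref{lemma1}(i) used to absorb any dependence of $M(a)$ on $\bfX$.
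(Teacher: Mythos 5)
Your proof is correct and follows essentially the same route as the paper's: it uses the uniform conditional law of $\bS(a)$ to get $E[S_{\cdot j}(a)\mid \cdot]=M(a)/N$ and hence replace $\overline{Y}^o(a)$ by $\overline{Y}(a)$, strips the conditioning on $A$ via randomization ($A\perp \bW$), and strips the conditioning on $M$ via the ignorability of the subsample size given $(N,\bC)$. The only difference is cosmetic: you derive $E[\overline{Y}(a)\mid M(a),N,\bC]=E[\overline{Y}(a)\mid N,\bC]$ by routing through $\bfX$ and Lemma~\ref{lemma1}(i), where the paper invokes Assumption~3 directly.
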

\begin{proof}
We first prove $E\left[\overline{Y}(a)\big| N, \bC\right] = E\left[\overline{Y}^o(a)\big| N, \bC\right]$. Assumption 3 implies that, for each $j = 1,\ldots, N$,
\begin{align*}
    pr\{S_{\cdot j}(a) = 1|\bY(a), \bfX, M(a), N, \bC\} = \frac{\binom{N-1}{M(a)-1}}{\binom{N}{M(a)}} = \frac{M(a)}{N}.
\end{align*}
Hence, for $a = 0,1$,
\begin{align*}
    E\left[ \overline{Y}^o(a) \bigg| N,\bC\right] &= E\left[ E\left[ \frac{\sum_{j=1}^{N}S_{\cdot j}(a)Y_{\cdot j}(a)}{M(a)} \bigg| \bY(a), \bfX, M(a), N, \bC\right] \bigg| N,\bC\right] \\
    &= E\left[ \frac{\sum_{j=1}^{N}E[S_{\cdot j}(a)|\bY(a), \bfX, M(a), N, \bC] \times Y_{\cdot j}(a)}{M(a)}  \bigg| N,\bC\right] \\
    &=  E\left[ \frac{\sum_{j=1}^{N}Y_{\cdot j}(a)}{N} \bigg| N,\bC\right]
    \\
    & = 
    E\left[ \overline{Y}(a) \big| N,\bC\right]
\end{align*}
Next, since $A \perp \bW$ and $\overline{Y}^o(a)$ is a function of $\bW$, then $E\left[\overline{Y}^o\big|A=a, N, \bC\right] = E[a\overline{Y}^o(1) + (1-a)\overline{Y}^o(0)|A=a, N,\bC] = E[Y(a)|N,\bC]$ as desired.

Finally, we prove $ E\left[ \overline{Y}(a)| N, \bC\right] = E\left[\overline{Y}^o\big|A=a, M, N, \bC\right]$. Assumption 3 implies that $E\left[ \overline{Y}(a)| N, \bC\right] = E\left[ \overline{Y}(a)| M(a), N, \bC\right]$. Using again the conditional distribution of $\bS(a)$, we have $E\left[ \overline{Y}(a)| M(a), N, \bC\right] = E\left[ \overline{Y}^o(a)| M(a), N, \bC\right]$. Since $A \perp \bW$, we get \\ $E\left[ \overline{Y}^o(a)| M(a), N, \bC\right] =  E\left[ \overline{Y}^o|A=a, M, N, \bC\right]$, which completes the proof.
\end{proof}

\begin{lemma}\label{lemma: asymptotics-theta}
Let $\bO_1, \ldots, \bO_m$ be i.i.d. samples from a common distribution on $\bO$.
Given the regularity conditions 1-4 for $\bpsi(\bO,\btheta)$, we have $\widehat{\btheta} \xrightarrow{P} \underline{\btheta}$ and $m^{1/2}(\widehat{\btheta}  - \underline{\btheta}) \xrightarrow{d} N(0,\mathrm{V})$, where $\mathrm{V} = E[IF(\bO,\underline\btheta)IF(\bO,\underline\btheta)^\top]$ and $IF(\bO,\underline\btheta) =- E\left[\frac{\partial \bpsi(\bO,\btheta)}{\partial \btheta} \big|_{\btheta = \underline\btheta}\right]^{-1}\bpsi(\bO,\underline\btheta)$ is the influence function for $\widehat\btheta$.
Furthermore, the sandwich variance estimator $m^{-1} \sum_{i=1}^m \widehat{IF}(\bO_i,\widehat{\btheta})\widehat{IF}(\bO_i,\widehat{\btheta})^\top$ converges in probability to $V$, where $\widehat{IF}(\bO_i,\widehat{\btheta}) = \left\{m^{-1} \sum_{i=1}^m \frac{\partial \bpsi(\bO_i,\btheta)}{\partial \btheta} \big|_{\btheta = \widehat\btheta}\right\}^{-1}\bpsi(\bO_i, \widehat{\btheta})$.
\end{lemma}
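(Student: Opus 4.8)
The plan is to recognize $\widehat{\btheta}$ as a standard Z-estimator and run the empirical-process argument of \citet{vaart_1998}, in three stages: consistency, an asymptotically linear expansion, then consistency of the sandwich estimator. Throughout, write $\mathbb{P}_m$ for the empirical measure, $P$ for the true law of $\bO$, and $\mathbb{G}_m = \sqrt{m}(\mathbb{P}_m - P)$.

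\emph{Stage 1 (consistency).} Conditions 1--2 make $\btheta \mapsto E[\bpsi(\bO,\btheta)]$ well defined and continuous on the compact set $\mathrm{\Theta}$ (dominated convergence via the square-integrable envelope), and condition 3 says its only zero is $\underline{\btheta}$, interior to $\mathrm{\Theta}$; compactness upgrades this to a well-separated zero, i.e.\ $\inf_{\|\btheta-\underline{\btheta}\|\ge\varepsilon}\|E[\bpsi(\bO,\btheta)]\| > 0$ for every $\varepsilon > 0$. The P-Donsker property in condition 4 implies $\{\bpsi(\cdot,\btheta):\btheta\in\mathrm{\Theta}\}$ is Glivenko--Cantelli, so $\sup_{\btheta}\|\mathbb{P}_m\bpsi(\cdot,\btheta) - E[\bpsi(\bO,\btheta)]\| \xrightarrow{P} 0$. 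Since $\mathbb{P}_m\bpsi(\cdot,\widehat{\btheta}) = \bzero$, the argmax/argzero consistency theorem (Theorem 5.9 of \citealp{vaart_1998}) gives $\widehat{\btheta} \xrightarrow{P} \underline{\btheta}$.

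\emph{Stage 2 (asymptotic normality).} Decompose $\bzero = \sqrt{m}\,\mathbb{P}_m\bpsi(\cdot,\widehat{\btheta}) = \mathbb{G}_m\bpsi(\cdot,\widehat{\btheta}) + \sqrt{m}\,P\bpsi(\cdot,\widehat{\btheta})$. For the first term, condition 4 together with the $L_2(P)$-continuity of $\btheta\mapsto\bpsi(\cdot,\btheta)$ at $\underline{\btheta}$ (from the neighborhood domination in condition 2) and $\widehat{\btheta}\xrightarrow{P}\underline{\btheta}$ gives, by Lemma 19.24 of \citet{vaart_1998}, $\mathbb{G}_m\bpsi(\cdot,\widehat{\btheta}) = \mathbb{G}_m\bpsi(\cdot,\underline{\btheta}) + o_P(1)$. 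For the second term, interchanging differentiation and expectation (justified by the dominated derivatives in condition 2) and a one-term Taylor expansion give $P\bpsi(\cdot,\widehat{\btheta}) = \mathrm{B}_{\underline{\btheta}}(\widehat{\btheta}-\underline{\btheta}) + o_P(\|\widehat{\btheta}-\underline{\btheta}\|)$ with $\mathrm{B}_{\underline{\btheta}} = E[\partial\bpsi(\bO,\btheta)/\partial\btheta|_{\underline{\btheta}}]$ nonsingular. Combining, $\sqrt{m}(\widehat{\btheta}-\underline{\btheta}) = -\mathrm{B}_{\underline{\btheta}}^{-1}\mathbb{G}_m\bpsi(\cdot,\underline{\btheta}) + o_P(1 + \sqrt{m}\|\widehat{\btheta}-\underline{\btheta}\|)$; a standard rearrangement yields $\sqrt{m}\|\widehat{\btheta}-\underline{\btheta}\| = O_P(1)$, hence $\sqrt{m}(\widehat{\btheta}-\underline{\btheta}) = m^{-1/2}\sum_{i=1}^m IF(\bO_i,\underline{\btheta}) + o_P(1)$. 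The multivariate CLT for the i.i.d., mean-zero, square-integrable summands $IF(\bO_i,\underline{\btheta})$ then gives $N(0,\mathrm{V})$ with $\mathrm{V} = E[IF(\bO,\underline{\btheta})IF(\bO,\underline{\btheta})^\top] = \mathrm{B}_{\underline{\btheta}}^{-1}E[\bpsi(\bO,\underline{\btheta})\bpsi(\bO,\underline{\btheta})^\top]\mathrm{B}_{\underline{\btheta}}^{-\top}$.

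\emph{Stage 3 (sandwich estimator) and the main obstacle.} I would show the two factors converge separately: $\widehat{\mathrm{B}} := m^{-1}\sum_i \partial\bpsi(\bO_i,\btheta)/\partial\btheta|_{\widehat{\btheta}} \xrightarrow{P} \mathrm{B}_{\underline{\btheta}}$, via a uniform law of large numbers for the derivative class over a shrinking neighborhood of $\underline{\btheta}$ (dominated derivatives, condition 2), continuity of $\btheta\mapsto E[\partial\bpsi/\partial\btheta]$, and $\widehat{\btheta}\xrightarrow{P}\underline{\btheta}$, so that $\widehat{\mathrm{B}}^{-1}\xrightarrow{P}\mathrm{B}_{\underline{\btheta}}^{-1}$ by continuity of matrix inversion; and $m^{-1}\sum_i\bpsi(\bO_i,\widehat{\btheta})\bpsi(\bO_i,\widehat{\btheta})^\top \xrightarrow{P} E[\bpsi(\bO,\underline{\btheta})\bpsi(\bO,\underline{\btheta})^\top]$ by a uniform law of large numbers for $\btheta\mapsto\bpsi\bpsi^\top$ (entries integrable by Cauchy--Schwarz on the square-integrable envelope) plus consistency. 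The continuous mapping theorem then gives $m^{-1}\sum_i\widehat{IF}(\bO_i,\widehat{\btheta})\widehat{IF}(\bO_i,\widehat{\btheta})^\top \xrightarrow{P} \mathrm{V}$. The one genuinely non-elementary step is the asymptotic-equicontinuity control of $\mathbb{G}_m(\bpsi(\cdot,\widehat{\btheta}) - \bpsi(\cdot,\underline{\btheta}))$ in Stage 2, which is precisely what the P-Donsker assumption and the $L_2$-continuity at $\underline{\btheta}$ are designed to handle; the remainder is routine bookkeeping with dominated convergence, Taylor expansion, the CLT, and continuous mapping.
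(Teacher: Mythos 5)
Your proposal is correct and follows essentially the same route as the paper: Z-estimator consistency via Theorem 5.9 of van der Vaart, asymptotic linearity via (the unpacked proof of) Theorem 5.31, and separate convergence of the two sandwich factors followed by continuous mapping — the only presentational difference being that the paper establishes the factor convergence by a mean-value/Taylor expansion with dominated second derivatives where you invoke a uniform law of large numbers, and these are interchangeable under condition 2. One small citation slip worth fixing: the global Glivenko--Cantelli property over all of $\mathrm{\Theta}$ does not follow from condition 4, which is only a Donsker assumption on a $\delta$-neighborhood of $\underline{\btheta}$; it follows, as the paper notes, from conditions 1--2 via Example 19.8 of van der Vaart (continuity in $\btheta$ plus a square-integrable envelope on a compact parameter set).
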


\begin{proof}
By regularity conditions 1 and 2, Example 19.8 of \cite{vaart_1998} implies that $\{\bpsi(\bO,\btheta): \btheta \in \mathrm{\Theta}\}$ is P-Glivenko-Cantelli. Then, with the regularity condition 3, Theorem 5.9 of \cite{vaart_1998} shows that $\widehat{\btheta} \xrightarrow{P} \underline{\btheta}$. We then apply Theorem 5.31 of \cite{vaart_1998} to obtain the asymptotic normality. Regularity conditions 1-4 implies that the assumptions for Theorem 5.31 of \cite{vaart_1998} are satisfied with $\widehat{\eta} = \eta = 
\underline\eta$ set to be a constant. Then we have
\begin{align*}
    m^{1/2}(\widehat{\btheta}  - \underline{\btheta}) &= - m^{-1/2}E\left[\frac{\partial \bpsi(\bO,\btheta)}{\partial \btheta} \Big|_{\btheta = \underline\btheta}\right]^{-1}\left\{E[\bpsi(\bO,\underline\btheta) ] +\frac{1}{m}\sum_{i=1}^m\bpsi(\bO_i,\underline\btheta) \right\} + o_p(1) \\
    &= m^{-1/2}\sum_{i=1}^mIF(\bO_i,\underline\btheta) + o_p(1),
\end{align*}
which implies the desired asymptotic normality by the Central Limit Theorem.

We next prove the consistency of the sandwich variance estimator. First, we prove that $m^{-1} \sum_{i=1}^m \frac{\partial \bpsi(\bO_i,\btheta)}{\partial \btheta} \big|_{\btheta = \widehat\btheta} \xrightarrow{P} E\left[\frac{\partial \bpsi(\bO,\btheta)}{\partial \btheta} \big|_{\btheta = \underline\btheta} \right]$. Denoting $\dot{\psi}_{ij}(\widehat\btheta)$ as the transpose of the $j$th row of $\frac{\partial \bpsi(\bO_i,\btheta)}{\partial \btheta}\big|_{\btheta = \widehat\btheta}$, we apply the multivariate Taylor expansion to get
\begin{align*}
    m^{-1} \sum_{i=1}^m \dot{\psi}_{ij}(\widehat\btheta) -     m^{-1} \sum_{i=1}^m \dot{\psi}_{ij}(\underline\btheta) =  m^{-1} \sum_{i=1}^m \ddot{\psi}_{ij}(\widetilde\btheta) (\widehat\btheta-\underline\btheta)
\end{align*}
for some $\widetilde{\btheta}$ on the line segment between $\widehat{\btheta}$ and $\underline\btheta$ and $ \ddot{\psi}_{ij}$ being the derivative of $\dot{\psi}_{ij}$. By the regularity condition 2 and $\widehat{\btheta} \xrightarrow{P} \btheta$, we have $m^{-1} \sum_{i=1}^m \ddot{\psi}_{ij}(\widetilde\btheta) = O_p(1)$. As a result, $\widehat{\btheta} -\btheta =o_p(1)$ implies that  $ m^{-1} \sum_{i=1}^m \dot{\psi}_{ij}(\widehat\btheta) -     m^{-1} \sum_{i=1}^m \dot{\psi}_{ij}(\underline\btheta)  = o_p(1)$. Then the first step is completed by the fact that $m^{-1} \sum_{i=1}^m \dot{\psi}_{ij}(\underline\btheta) = E[\dot{\psi}_{ij}(\underline\btheta) ] + o_p(1)$, which results from law of large number and the regularity condition 2. Next, we prove $m^{-1} \sum_{i=1}^m \psi(\bO_i,\widehat{\btheta})\psi(\bO_i,\widehat{\btheta})^\top \xrightarrow{P} E[\psi(\bO,\underline{\btheta})\psi(\bO,\underline{\btheta})^\top]$ following a similar procedure to the first step. Letting $\psi_{ij}(\btheta)$ be the $j$th entry of $\psi(\bO_i,\btheta)$, we apply the multivariate Taylor expansion and get
\begin{align*}
    &m^{-1} \sum_{i=1}^m \psi_{ij}(\widehat\btheta)\psi(\bO_i,\widehat\btheta) -   m^{-1} \sum_{i=1}^m \psi_{ij}(\underline\btheta)\psi(\bO_i,\underline\btheta)  \\
    &=  m^{-1} \sum_{i=1}^m \left\{\psi(\bO_i,\widetilde\btheta)\dot{\psi}_{ij}(\widetilde\btheta)^\top + \psi_{ij}(\widetilde\btheta) \frac{\partial \bpsi(\bO_i,\btheta)}{\partial \btheta} \Big|_{\btheta = \widetilde\btheta} \right\} (\widehat\btheta-\underline\btheta) \\
    &= O_p(1)o_p(1),
\end{align*}
which, combined with law of large numbers on $m^{-1} \sum_{i=1}^m \psi_{ij}(\underline\btheta)\psi(\bO_i,\underline\btheta)$, implies the desired result in this step. Finally, by the Continuous Mapping Theorem, we have
\begin{align*}
    &m^{-1} \sum_{i=1}^m \widehat{IF}(\bO_i,\widehat{\btheta})\widehat{IF}(\bO_i,\widehat{\btheta})^\top\\
    &= \left\{m^{-1} \sum_{i=1}^m \frac{\partial \bpsi(\bO_i,\btheta)}{\partial \btheta} \big|_{\btheta = \widehat\btheta}\right\}^{-1}m^{-1} \sum_{i=1}^m \psi(\bO_i,\widehat\btheta)\psi(\bO_i,\widehat\btheta)  \left\{m^{-1} \sum_{i=1}^m \frac{\partial \bpsi(\bO_i,\btheta)}{\partial \btheta} \big|_{\btheta = \widehat\btheta}\right\}^{-1} \\
    &= \left\{E\left[\frac{\partial \bpsi(\bO,\btheta)}{\partial \btheta} \big|_{\btheta = \underline\btheta}\right]+o_p(1)\right\}^{-1}\left\{E[\bpsi(\bO,\underline\btheta)\bpsi(\bO,\underline\btheta)^\top]+o_p(1)\right\}\left\{E\left[\frac{\partial \bpsi(\bO,\btheta)}{\partial \btheta} \big|_{\btheta = \underline\btheta}\right]+o_p(1)\right\}^{-1} \\
    &= V+o_p(1).
\end{align*}
\end{proof}

\subsection{Proof of Theorem 1}
\begin{proof}[Proof of Theorem~1]
We first introduce a few notations. For each cluster $i$, let $j_{i,1}< \cdots < j_{i,M_i}$ be the ordered list of indices such that the observed outcomes are $\bY_i^o = (Y_{i,j_{i,1}}, \ldots, Y_{i, j_{i,M_i}}) \in \mathrm{R}^{M_i}$ and the observed individual-level covariates are $\bfX_i^o = (\bX_{i,j_{i,1}}, \ldots, \bX_{i, j_{i,M_i}})^\top \in \mathrm{R}^{M_i \times p}$. We define $\bfH_i = [{e}_{j_{i,1}}^{N_i} \,\,\, {e}_{j_{i,2}}^{N_i} \,\,\, \ldots \,\,\, {e}_{j_{i,M_i}}^{N_i}] \in \mathbb{R}^{N_i \times M_i}$, where ${e}_j^{N} \in \mathbb{R}^N$ is the standard orthonormal basis with the $j$-th entry 1 and the remaining entries 0. This allows us to write $\bY_i^o = \bfH_i^\top \bY_i$,  $\bfX_i^o = \bfH_i^\top \bfX_i$, $\bone_{M_i} = \bfH_i^\top \bone_{N_i}$, $\bfH_i\bone_{M_i} = \bS_i$, $\bfH_i^\top\bfH_i = \bfI_{M_i}$ and $\bfH_i\bfH_i^\top = \textrm{diag}\{\bS_i\}$, where $\bY_i = (\bY_{i1}, \ldots, \bY_{iN_i})^\top$ and  $\bfX_i = (\bX_{i1}, \ldots, \bX_{iN_i})^\top$. We further define $\widetilde\bmu_i = (E[Y_{i1}|\bU_{i1}], \ldots, E[Y_{iN_i}|\bU_{iN_i}])^\top$, $\widetilde\bfD_i = \frac{\partial \widetilde\bmu_i}{\partial  \bbeta}$ and $\widetilde\bfZ_i = \textrm{diag}\{v(Y_{ij}):j=1,\ldots, N_i\}$, which imply $\bfD_i = \bfH_i^\top \widetilde\bfD_i$ and $\bfZ_i^{-1/2} = \bfH_i^\top\widetilde\bfZ_i^{-1/2}\bfH_i$. Then the estimating equation~(1) becomes
\begin{align*}
   \sum_{i=1}^m \widetilde\bfD_i^\top \bfH_i \left(\bfH_i^\top\widetilde\bfZ_i^{-1/2}\bfH_i\right) \bfR_i^{-1}(\rho) \left(\bfH_i^\top\widetilde\bfZ_i^{-1/2}\bfH_i\right) \bfH_i^\top (\bY_i - \bmu_i) = \bzero.
\end{align*}
To simplify the above equation, we observe that (i) the canonical link function $g$ implies that $\widetilde\bfD_i = \widetilde\bfZ_i \bfU_i$, where $\bfU_i = (\bU_{i1}, \ldots, \bU_{iN_i})^\top$; (ii) $(\bfH_i\bfH_i^\top)\widetilde\bfZ_i^{-1/2} = \widetilde\bfZ_i^{-1/2}(\bfH_i\bfH_i^\top)$ since  $\bfH_i\bfH_i^\top$ and $\widetilde\bfZ_i^{-1/2}$ are both diagonal matrices; and (iii) $\bfR_i^{-1}(\rho) = \frac{1}{1-\rho} \bfI_{M_i} - \frac{\rho}{(1-\rho)(1+M\rho-\rho)}\bone_{M_i}\bone_{M_i}^\top$, altogether yielding
\begin{align*}
    &\sum_{i=1}^m \bfU_i^\top \widetilde\bfZ_i^{1/2} \bfH_i\left\{\frac{1}{1-\rho} \bfI_{M_i} - \frac{\rho}{(1-\rho)(1+M\rho-\rho)}\bone_{M_i}\bone_{M_i}^\top\right\} \bfH_i^\top \widetilde\bfZ_i^{-1/2}  (\bY_i - \bmu_i) = \bzero \\
     \Longleftrightarrow &\sum_{i=1}^m \bfU_i^\top \left\{\frac{1}{1-\rho}\textrm{diag}\{\bS_i\} - \frac{\rho}{(1-\rho)(1+M\rho-\rho)}\widetilde\bfZ_i^{1/2} \bS_i\bS_i^\top \widetilde\bfZ_i^{-1/2} \right\}  (\bY_i - \bmu_i) = \bzero \\
      \Longleftrightarrow &\sum_{i=1}^m \bfU_i^\top \mathrm{B}_i (\bY_i - \bmu_i) = \bzero,
\end{align*}
where $ \mathrm{B}_i = \frac{1}{1-\rho}\textrm{diag}\{\bS_i\} - \frac{\rho}{(1-\rho)(1+M_i\rho-\rho)}\widetilde\bfZ_i^{1/2} \bS_i\bS_i^\top \widetilde\bfZ_i^{-1/2}$.

Denoting $\btheta_C = (\mu_C(1), \mu_C(0), \bbeta, \rho)$ and  $\btheta_I = (\mu_I(1), \mu_I(0), \bbeta, \rho)$, we define the following estimating equations
\begin{align}
    \bpsi_C(\bO, \btheta_C) &= \left(\begin{array}{c}
        \mu_C(1) - \frac{1}{M}\bS^\top \bmu(1,\bbeta) \\
        \mu_C(0) - \frac{1}{M}\bS^\top \bmu(0,\bbeta) \\
        \bfU^\top \mathrm{B} \{\bY - \bmu(A,\bbeta)\} \\
        k(N)\rho - h(\bY,\bU,M,\bbeta)
    \end{array}\right),\label{eq: gee-psi-C} \\ 
    \bpsi_I(\bO,\btheta_I) &= \left(\begin{array}{c}
        N\mu_I(1) - \frac{N}{M}\bS^\top \bmu(1,\bbeta)  \\
        N\mu_I(0) - \frac{N}{M}\bS^\top \bmu(0,\bbeta)  \\
        N\bfU^\top \mathrm{B} \{\bY - \bmu(A,\bbeta)\} \\
        k(N)\rho - h(\bY,\bU,M,\bbeta)
    \end{array}\right), \label{eq: gee-psi-I}
\end{align}
where $\bmu(a,\bbeta)$ is the assumed mean function setting $A = a$ and $ k(N)\rho - h(\bY,\bU,M,\bbeta)$ is a prespecified estimating function for constructing the moment estimator for $\rho$, i.e., $k(N) = N(N-1)/2-p-q-3$ and $h(\bY,\bU,M,\bbeta) = \sum_{j<j'} S_{\cdot j}S_{\cdot j'}\frac{Y_{\cdot j}-\mu_{\cdot j}}{v(Y_{\cdot j})}\frac{Y_{\cdot j'}-\mu_{\cdot j'}}{v(Y_{\cdot j'})}$. Of note, if an independence working correlation structure is used, then the last estimating function simplifies to $\rho$. Then the GEE-g method described in the main paper equivalently solves $\sum_{i=1}^m \bpsi_{C}(\bO_i, \btheta_C) = \bzero$ and $\sum_{i=1}^m \bpsi_{I}(\bO_i,\btheta_I) = \bzero$.

We next show that $\widehat\mu_C^{\textrm{GEE-g}}(a)$ converges in probability to $\mu_C(a)$ for $a = 0,1$ given any condition among (S1-S4) described in the theorem statement. Given the regularity conditions, Lemma~\ref{lemma: asymptotics-theta} implies $\widehat{\btheta}_C \xrightarrow{P} \underline\btheta_C$, where $\underline\btheta_C = (\underline\mu_C^{\textrm{GEE-g}}(1), \underline\mu_C^{\textrm{GEE-g}}(0), \underline\bbeta_C, \underline\rho_C)$ solves $E[\bpsi_C(\bO,\underline\btheta_C)] = \bzero$.
Since Lemma~\ref{lemma1} implies $E[\frac{1}{M}\bS^\top \bmu(a,\bbeta)] = E[\frac{1}{M}E[\bS|M,N]^\top \bmu(a,\bbeta)] = E[\frac{1}{N}\bone_N^\top \bmu(a,\bbeta)]$, then $E[\bpsi_C(\bO,\underline\btheta_C)] = \bzero$ yields $\underline\mu_C^{\textrm{GEE-g}}(a) = E[\frac{\bone_N^\top \bmu(a,\underline\bbeta_C)}{N}]$.
To connect $\underline\mu_C^{\textrm{GEE-g}}(a)$ to $\mu_C(a)$, we focus on the equation $E[\bfU^\top \mathrm{B} \{\bY - \bmu(A,\underline\bbeta_C)\}] = \bzero$, which is implied by $E[\bpsi_C(\bO,\underline\btheta_C)] = \bzero$. First consider condition (S1) that  the mean model is correctly specified, i.e., $E[\bY|\bfU] = \bmu(A,\bbeta^*)$ for some $\bbeta^*$. Then we get $E[\bfU^\top E[\mathrm{B}|\bfU] \{ \bmu(A,\bbeta^*) - \bmu(A,\underline\bbeta_C)\}] = \bzero$ by taking the conditional expectation on $\bfU$ and the fact that $\bS \perp \bY |\bfU$, thereby indicating that $\underline\bbeta_C = \bbeta^*$ is a solution. By the regularity condition that $\bbeta_C$ is unique, we get $\underline\bbeta_C = \bbeta^*$, and hence $\underline\mu_C^{\textrm{GEE-g}}(a) = E[\frac{\bone_N^\top \bmu(a,\underline\bbeta_C)}{N}] = E[\frac{\bone_N^\top E[\bY(a)|\bfU]}{N}] = \mu_C(a)$. Of note, the above derivation does not use the Assumption 4. 

Next consider condition (S2) that an independence working correlation structure is used, which implies that $\mathrm{B} = \textrm{diag}\{\bS\}$. Since $\bfU = (\bone_N, A\bone_N, \mathrm{L})$, then the first two equations of $E[\bfU^\top \mathrm{B} \{\bY - \bmu(A,\underline\bbeta_C)\}] = \bzero$ are
\begin{align*}
    E[\bS^\top \{\bY - \bmu(A,\underline\bbeta_C)\}] &= 0, \\
    E[A\bS^\top \{\bY - \bmu(A,\underline\bbeta_C)\}] &= 0.
\end{align*}
By Lemma~\ref{lemma1}, $\bS \perp (A, \bY, \bX, \bC) | (M, N)$ and $E[\bS|M,N] = \frac{M}{N} \bone_N$. Hence, the above two equations imply that $E[M(a) \overline{Y}(a)] = E[\frac{M(a)}{N}\bone_N^\top \bmu(a,\underline\bbeta_C)]$ for $a=0,1$. By Assumption 4, we get $E[\overline{Y}(a)] = E[\frac{1}{N}\bone_N^\top \bmu(a,\underline\bbeta_C)]$ and hence $\mu_C(a)= \underline\mu_C^{\textrm{GEE-g}}(a)$. Next consider condition (c) that $g$ is the identity link function, i.e., $\bY$ is the continuous outcome. Since we use the canonical link function, then $\widetilde{\bfZ}_i = \sigma^2 \bfI_{N_i}$, indicating $\bone^\top_{N_i}\mathrm{B}_i = \frac{1}{1-\rho} \bS_i^\top - \frac{M_i\rho}{(1-\rho)(1+M_i\rho-\rho)} \bS_i^\top = \frac{1}{1+(M_i-1)\rho}\bS_i^\top$. Then the first two equations of $E[\bfU^\top \mathrm{B} \{\bY - \bmu(A,\underline\bbeta_C)\}] = \bzero$ are
\begin{align*}
    E\left[\frac{1}{1+(M-1)\underline{\rho}_C}\bS^\top \{\bY - \bmu(A,\underline\bbeta_C)\}\right] &= 0, \\
    E\left[\frac{1}{1+(M-1)\underline{\rho}_C}A\bS^\top \{\bY - \bmu(A,\underline\bbeta_C)\}\right] &= 0.
\end{align*}
Following a similar procedure as for condition (S2), we get $\mu_C(a)= \underline\mu_C^{\textrm{GEE-g}}(a)$ since $\frac{M}{1+(M-1)\underline{\rho}_C} \perp (\bY, \bfU)$ by Assumptions 3 and 4. Finally, for condition (S4) that GEE excludes $\bfX$, we get $\bmu(a, \bbeta) = \bone_N \mu_0(a, \bbeta)$ for a scalar $\mu_0(a, \bbeta)$ and $\widetilde\bfZ = q(\mu(A,\bbeta)) \bfI_N$ for some function $q$ since $v(Y_{ij})$ is a function of $\mu_{ij}$ in the assumed model. Then it is straightfroward that $\bone^\top_{N_i}\mathrm{B}_i = \frac{1}{1+(M_i-1)\rho}\bS_i^\top$, leading to the same estimating equations as for condition (S3). Therefore, the proof is the same. Given the result that $\widehat\mu_C^{\textrm{GEE-g}}(a) \xrightarrow{P} \mu_C(a)$ for $a = 0,1$, the Continuous Mapping Theorem and regularity conditions for $f$ imply that $\widehat{\Delta}_C^{\textrm{GEE-g}} \xrightarrow{P} \Delta_C$.

We next show that $\widehat\mu_I^{\textrm{GEE-g}}(a)$ converges in probability to $\mu_I(a)$ for $a = 0,1$. Likewise, $\widehat{\btheta}_I \xrightarrow{P} \underline\btheta_I$, where $\underline\btheta_I = (\underline\mu_I^{\textrm{GEE-g}}(1), \underline\mu_I^{\textrm{GEE-g}}(0), \underline\bbeta_I, \underline\rho_I)$ solves $E[\bpsi_I(\bO,\underline\btheta_I)] = \bzero$, and $\underline\mu_I^{\textrm{GEE-g}}(a) = \frac{E[\bone_N^\top \bmu(a,\underline\bbeta_I)]}{E[N]}$. The proof is similar to the proof for showing the consistency of $\widehat\mu_C(a)$ except that the estimating equation $E[N\bfU^\top \mathrm{B} \{\bY - \bmu(A,\underline\bbeta_C)\}] = \bzero$ involves a factor $N$. Consequently, conditions (S1-S4) will yield $E[\bone_N^\top \bY(a)] = E[\bone_N^\top \bmu(a, \underline\bbeta_I)]$, implying the desired consistency result. 

We next prove the asymptotic normality. By the regularity conditions, Lemma~\ref{lemma: asymptotics-theta} implies that
\begin{align*}
    m^{1/2}(\widehat{\btheta}_C-\underline\btheta_C) &= m^{-1/2}\sum_{i=1}^m\bfG_{C}^{-1} \bpsi_{C}(\bO_i,\underline\btheta_C) + o_p(\bone), \\
    m^{1/2}(\widehat{\btheta}_I-\underline\btheta_I) &= m^{-1/2}\sum_{i=1}^m\bfG_I^{-1} \bpsi_{I}(\bO_i,\underline\btheta_I) + o_p(\bone),
\end{align*}
where $\bfG_C = E\left[\frac{\partial }{\partial \btheta} \bpsi_C(\bO,\btheta_C) \big |_{\btheta_C = \underline\btheta_C}\right]$ and $\bfG_I = E\left[\frac{\partial }{\partial \btheta} \bpsi_I(\bO,\btheta_I) \big |_{\btheta_I = \underline\btheta_I}\right]$. Thus, the asymptotic normality is obtained by Central Limit Theorem.

To get the variance for $\widehat{\Delta}_C^{\textrm{GEE-g}}$ and $\widehat{\Delta}_I^{\textrm{GEE-g}}$, we compute the influence function for $(\widehat\mu_C^{\textrm{GEE-g}}(1), \widehat\mu_C^{\textrm{GEE-g}}(0))$ and $(\widehat\mu_I^{\textrm{GEE-g}}(1), \widehat\mu_I^{\textrm{GEE-g}}(0))$ by computing $\bfG_C^{-1}$ and  $\bfG_I^{-1}$. Under condition (S1-4), tedious algebra shows that the first two rows of $\bfG_C^{-1}$ are
\begin{align*}
    \left[\begin{array}{ccccc}
    1    &  0 & 0 & -\frac{1}{\pi E\left[\frac{M(1)}{1 + \{M(1)-1\}\underline{\rho}_C}\right]}  & \bzero\\
    0    & 1 & -\frac{1}{(1-\pi) E\left[\frac{M(0)}{1 + \{M(0)-1\}\underline{\rho}_C}\right]}  & \frac{1}{(1-\pi) E\left[\frac{M(0)}{1 + \{M(0)-1\}\underline{\rho}_C}\right]} & \bzero
    \end{array}\right],
\end{align*}
and the influence function for $(\widehat\mu_C^{\textrm{GEE-g}}(1), \widehat\mu_C^{\textrm{GEE-g}}(0))$ is 
\begin{align*}
    {IF}_C^{\textrm{GEE-g}} 
   = \left(\begin{array}{c}
    \frac{A}{\pi} E\left[\frac{M(1)}{1 + \{M(1)-1\}\underline{\rho}_C}\right]^{-1} \frac{1}{1 + (M-1)\underline{\rho}_C}\bS^\top\{\bY-\bmu(A,\underline\bbeta_C)\}   + \frac{1}{M}\bS^\top\bmu(1,\underline\bbeta_C) -\mu_C(1)    \\
     \frac{1-A}{1-\pi} E\left[\frac{M(0)}{1 + \{M(0)-1\}\underline{\rho}_C}\right]^{-1} \frac{1}{1 + (M-1)\underline{\rho}_C}\bS^\top\{\bY-\bmu(A,\underline\bbeta_C)\}   + \frac{1}{M}\bS^\top\bmu(0,\underline\bbeta_C) -\mu_C(0)
    \end{array}\right),
\end{align*}
leading to the asymptotic covariance matrix as $E[{IF}_C^{\textrm{GEE-g}}{IF}_C^{\textrm{GEE-g}}{^\top}]$. Then $V_C^{\textrm{GEE-g}}$, the asymptotic variance of $\widehat{\Delta}_C^{\textrm{GEE-g}}$, is $\nabla f^\top E[{IF}_C^{\textrm{GEE-g}}{IF}_C^{\textrm{GEE-g}}{^\top}] \nabla f $ by the Delta method, where $\nabla f$ is the gradient of $f$ evaluated at $(\mu_C(1), \mu_C(0))$. For $\btheta_I$, a similar procedure gives the influence function $(\widehat\mu_I^{\textrm{GEE-g}}(1), \widehat\mu_I^{\textrm{GEE-g}}(0))$ as 
\begin{align*}
    {IF}_I^{\textrm{GEE-g}} = \frac{N}{E[N]}\left(\begin{array}{c}
    \frac{A}{\pi} E\left[\frac{M(1)}{1 + \{M(1)-1\}\underline{\rho}_I}\right]^{-1} \frac{1}{1 + (M-1)\underline{\rho}_I}\bS^\top\{\bY-\bmu(A,\underline\bbeta_I)\}   + \frac{1}{M}\bS^\top\bmu(1,\underline\bbeta_I) -\mu_I(1)    \\
     \frac{1-A}{1-\pi} E\left[\frac{M(0)}{1 + \{M(0)-1\}\underline{\rho}_I}\right]^{-1} \frac{1}{1 + (M-1)\underline{\rho}_I}\bS^\top\{\bY-\bmu(A,\underline\bbeta_I)\}   + \frac{1}{M}\bS^\top\bmu(0,\underline\bbeta_I) -\mu_I(0)
    \end{array}\right),
\end{align*}
and the asymptotic variance $V_I^{\textrm{GEE-g}}$ is then $\nabla f^\top E[{IF}_I^{\textrm{GEE-g}}{IF}_I^{\textrm{GEE-g}}{^\top}] \nabla f $. 
By Lemma~\ref{lemma: asymptotics-theta}, $m \widehat{V}_C^{\textrm{GEE-g}} \xrightarrow{P} V_C^{\textrm{GEE-g}} $ and  $m \widehat{V}_I^{\textrm{GEE-g}} \xrightarrow{P} V_I^{\textrm{GEE-g}} $. Finally, Slutsky's Theorem implies the desired results.
\end{proof}
\subsection{Proof of Theorem 2}

\begin{proof}[Proof of Theorem~2] For continuous outcomes where $g$ is the identity link, the likelihood function becomes
\begin{align*}
    &\prod_{i=1}^m \int \prod_{j:S_{ij}=1}(2\pi \sigma^2)^{-1/2} \exp\left\{-\frac{1}{2\sigma^2}(Y_{ij} - \bU_{ij}^\top\balpha - b_i)^2\right\}(2\pi \tau^2)^{-1/2}\exp\left\{-\frac{b_i^2}{2\tau^2}\right\} \mathrm{d}b \\
    &= \prod_{i=1}^m |2\pi \bfSigma_i|^{-1/2} \exp\{- (\bY_i^o-\bfU_i^o\balpha)^\top\bfSigma_i^{-1}(\bY_i^o-\bfU_i^o\balpha)/2\},
\end{align*}
where $\bfSigma_i = \sigma^2 \bfI_{M_i} + \tau^2 \bone_{M_i}\bone_{M_i}^\top$, $\bY_i^o = \{Y_{ij}: S_{ij} = 1\}$ is the observed outcome vector, and $\bfU_i^o = \{\bU_{ij}: S_{ij} = 1\} \in \mathbb{R}^{M_i \times (p+q+3)}$ is the observed design matrix. We further define $\bfH_i$ as the in the proof of Theorem 1, $\widetilde\bfSigma_i = \sigma^2 \bfI_{N_i} + \tau^2 \bone_{N_i}\bone_{N_i}^\top$, $\bfU_i =(\bU_{i1}, \ldots, \bU_{iN_i})^\top$, and the log-likelihood function is
\begin{align*}
 -\frac{1}{2}\sum_{i=1}^m\left\{2\pi M_i + \log(|\bfH_i^\top\widetilde\bfSigma_i \bfH_i|) + (\bY_i - \bfU_i\balpha)^\top \bfH_i (\bfH_i\widetilde\bfSigma_i \bfH_i)^{-1}\bfH_i^\top (\bY_i - \bfU_i\balpha) \right\}, 
\end{align*}
whose derivative over $(\balpha, \sigma^2, \tau^2)$ is 
\begin{align*}
    -\frac{1}{2}\sum_{i=1}^m \left(\begin{array}{c}
    \bfU_i^{\top}\bfV_i(\bY_i - \bfU_i\balpha)  \\
    -\textrm{tr}(\bfV_i) + (\bY_i - \bfU_i\balpha)^\top \bfV_i^2(\bY_i - \bfU_i\balpha)\\
  -\bone_{N_i}^\top\bfV_i\bone_{N_i} + (\bY_i - \bfU_i\balpha)^\top \bfV_i\bone_{N_i}\bone_{N_i}^\top \bfV_i(\bY_i - \bfU_i\balpha)    
   \end{array}\right),
\end{align*}
where $\bfV_i = \bfH_i \left(\bfH_i^\top\widetilde\bfSigma_i \bfH_i\right)^{-1} \bfH_i^\top$ and $\textrm{tr}(\bfV_i)$ is the trace of $\bfV_i$. We then define the estimating equations as
\begin{equation}\label{eq: lmm-psi-C}
    \bpsi_C(\bO,\bfeta_C) = \left(\begin{array}{c}
    \mu_C(1) - \frac{1}{M}\bS^\top\bfU(1)\balpha \\
    \mu_C(0) - \frac{1}{M}\bS^\top\bfU(0)\balpha \\
    \bfU^{\top}\bfV(\bY - \bfU\balpha)  \\
    -\textrm{tr}(\bfV) + (\bY - \bfU\balpha)^\top \bfV^2(\bY - \bfU\balpha)\\
  -\bone_{N}^\top\bfV\bone_{N} + (\bY - \bfU\balpha)^\top \bfV\bone_{N}\bone_{N}^\top \bfV(\bY - \bfU\balpha)    
   \end{array}\right),
\end{equation}
where $\bfeta_C = (\mu_C(1), \mu_C(0), \balpha, \sigma^2, \tau^2)^\top$ and $\bfU(a)$ is $\bfU$ with $A$ substituted by $a$. The MLE $\widehat\bfeta_C$ for $\bfeta_C$ is then a solution to $\sum_{i=1}^m \bpsi_{C}(\bO_i,\bfeta_C) = \bzero$. 

We next prove consistency of $\widehat{\Delta}_C^{\textrm{LMM-g}}$ following a similar proof to the consistency of $\widehat\Delta_C^{\textrm{GEE-g}}$. Lemma~\ref{lemma: asymptotics-theta} implies that $\widehat{\bfeta}_C \xrightarrow{P} \underline\bfeta_C$, where $\underline\bfeta = (\underline\mu_C^{\textrm{LMM-g}}(1), \underline\mu_C^{\textrm{LMM-g}}(0), \underline\bfeta_C, \underline\sigma_C^2, \underline\tau_C^2)$ solves $E[\bpsi_C(\bO,\underline\bfeta_C)] = \bzero$. Then we have $E[\bfU^{\top}\underline\bfV(\bY - \bfU\underline\balpha)]=0 $, whose first two equations are
\begin{align*}
    E[\bone_N^\top \underline\bfV (\bY-\bfU\underline\balpha)] &= \bzero, \\
    E[A\bone_N^\top \underline\bfV (\bY-\bfU\underline\balpha)] &= \bzero.
\end{align*}
Since $\bfSigma_i^{-1} = \frac{1}{\sigma^2} \bfI_{M_i} - \frac{\tau^2}{\sigma^2(\sigma^2+M_i\tau^2)}\bone_{M_i}\bone_{M_i}^\top$, we get $\bone_N^\top\underline\bfV = \frac{1}{\underline\sigma_C^2 + M\underline\tau_C^2} \bS^\top$, which implies 
\begin{align*}
    E\left[\frac{1}{\underline\sigma_C^2 + M\underline\tau_C^2} \bS^\top (\bY-\bfU\underline\balpha)\right] &= \bzero, \\
    E\left[A\frac{1}{\underline\sigma_C^2 + M\underline\tau_C^2} \bS^\top (\bY-\bfU\underline\balpha)\right] &= \bzero.
\end{align*}
By Assumption 3 and Lemma~\ref{lemma1}, we get $\bS \perp (A, \bY, \bU) | (M,N)$ and $E[\bS|M,N] = \frac{M}{N} \bone_{N}$, which implies $E[\frac{M(a)}{\underline\sigma_C^2 + M(a)\underline\tau_C^2} (\overline{Y}(a)-\overline{\bU}(a)\underline\balpha)] = 0$ for $a = 0,1$, where $\overline{\bU}(a) = \bone_N^\top \bfU(a)/N$. Given Assumption 4, we got $\mu_C(a) = E[\overline{Y}(a)] =  E[\overline{\bU}(a)\underline\balpha] = \underline\mu_C(a)$. Then $\widehat{\Delta}_C \xrightarrow{P} \Delta_C$ can be obtained by the Continuous Mapping Theorem.

We next prove the asymptotic normality, which is also similar to the corresponding part in the proof of Theorem 1. By the regularity conditions, Lemma~\ref{lemma: asymptotics-theta} implies that
\begin{align*}
    m^{1/2}(\widehat{\bfeta}_C-\underline\bfeta_C) &= m^{-1/2}\sum_{i=1}^m\bfG_{C}^{-1} \bpsi_{C}(\bO_i,\underline\bfeta_C) + o_p(\bone), 
\end{align*}
where $\bfG_C = E\left[\frac{\partial }{\partial \bfeta} \bpsi_C(\bO,\bfeta_C) \big |_{\bfeta_C = \underline\bfeta_C}\right]$. Thus, the asymptotic normality is obtained by Central Limit Theorem. The influence function for $(\widehat\mu_C^{\textrm{LMM-g}}(1), \widehat\mu_C^{\textrm{LMM-g}}(0))$ is 
\begin{align*}
    {IF}_C^{\textrm{LMM-g}} = \left(\begin{array}{c}
    \frac{A}{\pi} E\left[\frac{M(1)}{ \underline\sigma_C^2 + M(1)\underline\tau_C^2}\right]^{-1} \frac{1}{\underline\sigma_C^2 + M\underline\tau_C^2} \bS^\top\{\bY-\bfU \underline{\balpha}\}   + \frac{1}{M}\bS^\top\bfU(1) \underline{\balpha} -\mu_C(1)    \\
     \frac{1-A}{1-\pi} E\left[\frac{M(0)}{ \underline\sigma_C^2 + M(0)\underline\tau_C^2}\right]^{-1}\frac{1}{\underline\sigma_C^2 + M\underline\tau_C^2} \bS^\top\{\bY-\bfU \underline{\balpha}\}   + \frac{1}{M}\bS^\top\bfU(0) \underline{\balpha} -\mu_C(0)
    \end{array}\right),
\end{align*}
and $V_C^{\textrm{LMM-g}} = \nabla f^\top E[{IF}_C^{\textrm{LMM-g}}{IF}_C^{\textrm{LMM-g}}{^\top}] \nabla f$,  where $\nabla f$ is the gradient of $f$ evaluated at $(\mu_C(1), \mu_C(0))$.

For estimating $\Delta_I$, we can follow the same procedure to get the desired result and hence omit the detailed proof here. The corresponding estimating equations are
\begin{align}\label{eq: lmm-psi-I}
    \bpsi_I(\bO,\bfeta_I) = \left(\begin{array}{c}
    N\mu_I(1) - \frac{N}{M}\bS^\top\bfU(1)\balpha \\
    N\mu_C(0) - \frac{N}{M}\bS^\top\bfU(0)\balpha \\
    N\bfU^{\top}\bfV(\bY - \bfU\balpha)  \\
    -\textrm{tr}(\bfV) + N(\bY - \bfU\balpha)^\top \bfV^2(\bY - \bfU\balpha)\\
  -\bone_{N}^\top\bfV\bone_{N} + N(\bY - \bfU\balpha)^\top \bfV\bone_{N}\bone_{N}^\top \bfV(\bY - \bfU\balpha)    
   \end{array}\right),
\end{align}
where $\bfeta_I = (\mu_I(1), \mu_I(0), \balpha, \sigma^2, \tau^2)^\top$, and the influence function for $(\widehat\mu_I^{\textrm{LMM-g}}(1), \widehat\mu_I^{\textrm{LMM-g}}(0))$ is 
\begin{align*}
    {IF}_I^{\textrm{LMM-g}} = \frac{N}{E[N]}\left(\begin{array}{c}
    \frac{A}{\pi} E\left[\frac{M(1)}{ \underline\sigma_I^2 + M(1)\underline\tau_I^2}\right]^{-1}\frac{1}{\underline\sigma_I^2 + M\underline\tau_I^2} \bS^\top\{\bY-\bfU \underline{\balpha}\}   + \frac{1}{M}\bS^\top\bfU(1) \underline{\balpha} -\mu_I(1)    \\
     \frac{1-A}{1-\pi} E\left[\frac{M(0)}{ \underline\sigma_I^2 + M(0)\underline\tau_I^2}\right]^{-1}\frac{1}{\underline\sigma_I^2 + M\underline\tau_I^2} \bS^\top\{\bY-\bfU \underline{\balpha}\}   + \frac{1}{M}\bS^\top\bfU(0) \underline{\balpha} -\mu_I(0)
    \end{array}\right).
\end{align*}
Then $V_I^{\textrm{LMM-g}} = \nabla f^\top E[{IF}_I^{\textrm{LMM-g}}{IF}_I^{\textrm{LMM-g}}{^\top}] \nabla f$.

The consistency of the sandwich variance estimators for $V_C^{\textrm{LMM-g}}$ and $V_I^{\textrm{LMM-g}}$ are implied by Lemma~\ref{lemma: asymptotics-theta} and the Continuous Mapping Theorem; and Slutsky's theorem implies the desired convergence results.
\end{proof}

\subsection{Proof of Theorem 3}
\begin{proof}[Proof of Theorem 3]
By Lemma~\ref{lemma2} and the iteration of conditional expectation, we have
\begin{align*}
    & E\left[\overline{Y}(a) \right]\\
    &= E\left[E\left[E\left[\overline{Y}^o|A=a,N,\bC\right]|N\right]\right]\\
    &= E\left[E\left[E\left[E\left[E\left[\overline{Y}^o|A=a,\bfX^o, M,N,\bC\right]|A=a,M,N,\bC\right]|A=a,N,\bC\right]|N\right]\right].
\end{align*}
For $E\left[\overline{Y}^o|A=a,\bfX^o, M,N,\bC\right]$, since $(\bfX^o, M)$ is a deterministic function of $\bfX, \bS$, we could denote $f(\bfX,\bS,N,\bC) =E\left[\overline{Y}^o|A=a,\bfX^o, M,N,\bC\right]$ for some function $f$. By Lemma~\ref{lemma1} (iii), we have $f(\bfX,\bS,N,\bC) \perp A | (M,N,\bC)$, which implies 
\begin{align*}
    & E\left[E\left[\overline{Y}^o|A=a,\bfX^o, M,N,\bC\right]|A=a,M,N,\bC\right] \\
    &\quad = E\left[f(\bfX,\bS,N,\bC)|A=a,M,N,\bC\right]\\
        &\quad = E\left[f(\bfX,\bS,N,\bC)|M,N,\bC\right]\\
    &\quad = E\left[E\left[\overline{Y}^o|A=a,\bfX^o, M, N,\bC\right]|M,N,\bC\right].
\end{align*}
Therefore, the estimand can be written as
\begin{align*}
     E\left[\frac{\sum_{j=1}^{N} Y_{\cdot j}(a)}{N} \right] = E\left[E\left[E\left[E\left[E\left[\overline{Y}^o|A=a,\bfX^o, M, N,\bC\right]|M,N,\bC\right]|A=a,N,\bC\right]|N\right]\right].
\end{align*}

Next, we follow the steps provided by \cite{hines2022demystifying} to compute the EIF of the target estimand. Denote the observed data for each cluster as $\bO_i = (\bY_i^o, \bfX_i^o, M_i, A_i, N_i, \bC_i)$, the observed data distribution as $\mathcal{P} = \mathcal{P}^{\bY^o| \bfX^o, M, A, N, \bC} \mathcal{P}^{\bfX^o |M, N, \bC} \mathcal{P}^{M| A, N, \bC} \mathcal{P}^{\bC|N} \mathcal{P}^{N}\mathcal{P}^{A}$, $ E\left[\frac{\sum_{j=1}^{N} Y_{\cdot j}(a)}{N} \right] = \Psi(\mathcal{P})$, and a parametric submodel $\mathcal{P}_t = t \widetilde{\mathcal{P}} + (1-t)\mathcal{P}$ for $t \in [0,1]$, where $\widetilde{\mathcal{P}}$ is a point-mass at $o_i$ in the support of $\mathcal{P}$. Furthermore, let $f$ denote the density of $\mathcal{P}$, $1_{\widetilde{o}}(o)$ denote the Dirac delta function for $\widetilde{o}$, i.e., the density of $\widetilde{\mathcal{P}}$, and $f_t = t 1_{\widetilde{o}}(o) + (1-t) f$. Then
\begin{align*}
    &\frac{d \Psi(\mathcal{P}_t)}{d t}\bigg|_{t=0}\\
    &= \frac{d}{d t}\int \overline{y}^o f_t(\by^o|a, \bfx^o, m, \bc, n) f_t( \bfx^o| m, \bc, n) f_t(m|a,\bc,n)f(\bc|n) f_t(n) d\by^o d\bfx^o dm d\bc dn \bigg|_{t=0} \\
    &= \int \overline{y}^o f(\by^o|a, \bfx^o, m, \bc, n) f( \bfx^o| m, \bc, n) f(m|a,\bc,n)f(\bc|n) f(n) \\
    &\quad \left\{\frac{1_{\widetilde{o}}(o)}{f(\by^o,a, \bfx^o, m, \bc, n)} - \frac{1_{\widetilde{a}, \widetilde{\bfx^o}, \widetilde{m}, \widetilde{\bc}, \widetilde{n}}(a, \bfx^o, m, \bc, n)}{f(a, \bfx^o, m, \bc, n)} + \frac{1_{\widetilde{\bfx^o}, \widetilde{m}, \widetilde{\bc}, \widetilde{n}}(\bfx^o, m, \bc, n)}{f( \bfx^o, m, \bc, n)}-\frac{1_{\widetilde{m}, \widetilde{\bc}, \widetilde{n}}(m, \bc, n)}{f(m, \bc, n)}\right.\\
    &\quad\quad  \left.+ \frac{1_{\widetilde{a},  \widetilde{m}, \widetilde{\bc}, \widetilde{n}}(a,m, \bc, n)}{f(a, m, \bc, n)} - \frac{1_{\widetilde{a}, \widetilde{\bc}, \widetilde{n}}(a, \bc, n)}{f(a, \bc, n)} + \frac{1_{\widetilde{\bc}, \widetilde{n}}(\bc, n)}{f(\bc, n)}-1\right\}d\by^o d\bfx^o dm d\bc dn \\
    &= \frac{1_{\widetilde{a}}(a)\{\overline{\widetilde{y}}^o-E[\overline{Y}^o|A=a, \bfX^o=\widetilde{\bfx}^o, M=\widetilde{m}, \bC=\widetilde{\bc}, N=\widetilde{n}]\}pr(A=a| M=\widetilde{m}, \bC=\widetilde{\bc}, N=\widetilde{n})}{pr(A=a|\bfX^o=\widetilde{\bfx}^o, M=\widetilde{m}, \bC=\widetilde{\bc}, N=\widetilde{n})pr(A=a|\bC=\widetilde{\bc}, N=\widetilde{n})}\\
    &\quad  + \{E[\overline{Y}^o|A=a, \bfX^o=\widetilde{\bfx}^o, M=\widetilde{m}, \bC=\widetilde{\bc}, N=\widetilde{n}] - E[\overline{Y}^o|A=a, M=\widetilde{m}, \bC=\widetilde{\bc}, N=\widetilde{n}]\}\\
    &\qquad \times \frac{pr(M = \widetilde{m}|A=a, \bC=\widetilde{\bc}, N=\widetilde{n})}{pr(M = \widetilde{m}| \bC=\widetilde{\bc}, N=\widetilde{n})} \\
    &\quad + \frac{1_{\widetilde{a}}(a)}{pr(A=a|\bC=\widetilde{\bc}, N=\widetilde{n})}\{E[\overline{Y}^o|A=a, M=\widetilde{m}, \bC=\widetilde{\bc}, N=\widetilde{n}] - E[\overline{Y}^o|A=a, \bC=\widetilde{\bc}, N=\widetilde{n}]\} \\
    &\quad + E[\overline{Y}^o|A=a, \bC=\widetilde{\bc}, N=\widetilde{n}] - \Psi(\mathcal{P}).
\end{align*}
By Lemma~\ref{lemma1} (iii) and Assumptions 2-3, we get $pr(A=a|\bfX^o=\widetilde{\bfx}^o, M=\widetilde{m}, \bC=\widetilde{\bc}, N=\widetilde{n}) = pr(A=a| M=\widetilde{m}, \bC=\widetilde{\bc}, N=\widetilde{n})$ and $pr(A=a|\bC=\widetilde{\bc}, N=\widetilde{n}) = pr(A=a)$. Therefore,
\begin{align*}
    EIF_a &= \frac{I\{A=a\}}{pr(A=a)}\left\{\overline{Y}^o - E\left[\overline{Y}^o\big|A=a, \bfX^o, M, N, \bC\right] \right\} \\
    &\quad + \frac{pr(M|A=a,N,\bC)}{pr(M|N,\bC)}\left\{ E\left[\overline{Y}^o\big|A=a, \bfX^o, M, N, \bC\right] - E\left[\overline{Y}^o\big|A=a,  M, N, \bC\right]\right\} \\
    &\quad + \frac{I\{A=a\}}{pr(A=a)}\left\{E\left[\overline{Y}^o\big|A=a,  M, N, \bC\right]-E\left[\overline{Y}^o\big|A=a, N, \bC\right]\right\} \\
    &\quad + E\left[\overline{Y}^o\big|A=a, N, \bC\right] - E\left[\overline{Y}^o |A=a\right],
\end{align*}
Lemma~\ref{lemma2} implies that $E\left[\overline{Y}^o\big|A=a,  M, N, \bC\right]=E\left[\overline{Y}^o\big|A=a, N, \bC\right]$. Additionally, from Assumption 2, we find $pr(A=a,N,C) = pr(A=a)pr(N,C)$, which leads to
\begin{align*}
\frac{pr(M|A=a,N,\bC)}{pr(M|N,\bC)}
=
\frac{pr(A=a,M,N,C)}{pr(M,N,C)} \frac{pr(N,C)}{pr(A=a,N,C)}
=
\frac{pr(A=a|M,N,C)}{\pi^a(1-\pi)^{1-a}} .
\end{align*}
Combining the established results, we get the desired formula of $EIF_a$. 

For estimand $\frac{E[\sum_{j=1}^{N} Y_{\cdot j}(a)]}{E[N]}$, Lemma~\ref{lemma2} implies
\begin{align*}
    &\frac{E[\sum_{j=1}^{N} Y_{\cdot j}(a)]}{E[N]} \\
    &= \frac{1}{E[N]} E[E[E[N \overline{Y}^o(a)|N,C]|N]] \\
    &=\frac{1}{E[N]} E\left[E\left[E\left[E\left[E\left[N\overline{Y}^o|A=a,\bfX^o, M, N,\bC\right]|M,N,\bC\right]|A=a,N,\bC\right]|N\right]\right].
\end{align*}
Define $\Psi(\mathcal{P}) = \frac{\Psi_1(\mathcal{P})}{\Psi_2(\mathcal{P})}$, where $\Psi_1(\mathcal{P}) = E[\sum_{j=1}^{N} Y_{\cdot j}(a)]$ and $\Psi_2(\mathcal{P}) = E[N]$, we have
\begin{align}\label{eq: chain-rule}
    \frac{d \Psi(\mathcal{P}_t)}{d t}\bigg|_{t=0} = \frac{1}{\Psi_2(\mathcal{P})} \frac{d \Psi_1(\mathcal{P}_t)}{d t}\bigg|_{t=0} - \frac{\Psi_1(\mathcal{P})}{\Psi_2(\mathcal{P})^2} \frac{d \Psi_2(\mathcal{P}_t)}{d t}\bigg|_{t=0},
\end{align}
where $\frac{d \Psi_1(\mathcal{P}_t)}{d t}\bigg|_{t=0}$ is the same as $EIF_a$ in Equation (1) except that $\overline{Y}^o$ is substituted by $N\overline{Y}^o$, and $\frac{d \Psi_2(\mathcal{P}_t)}{d t}\bigg|_{t=0} = N - E[N]$ by straightforward calculation. Hence, the influence function for $\frac{E[\sum_{j=1}^{N} Y_{\cdot j}(a)]}{E[N]}$ is
\begin{align*}
   & \frac{1}{E[N]}\left\{N\ EIF_a + N E[\overline{Y}^o|A=a]- E[N\overline{Y}^o|A=a]\right\} - \frac{E[N\overline{Y}^o|A=a]}{E[N]^2}(N-E[N]) \\
   & =    \frac{N}{E[N]} \left\{EIF_a + E[\overline{Y}^o|A=a] - \frac{E[N\overline{Y}^o|A=a]}{E[N]}\right\}.
\end{align*}

\end{proof}

\subsection{Proof of Theorem~4}
\begin{proof}
Denote $\mathbb{P}_m X = m^{-1} \sum_{i=1}^m X_i$ and $\mathbb{G}_m X = m^{1/2}\{\mathbb{P}_m  X - E[X]\}$ for any i.i.d. samples $X_1,\ldots, X_m$ from a distribution on $X$. Below we give the proof for estimating ${\Delta}_C$; the results for ${\Delta}_I$ can be obtained in a similar way.

We first prove the desired results for $\widehat{\Delta}^{\textrm{Eff}}_C$ based on parametric working models.
Define
\begin{align*}
    U_a\{\bO;\btheta_a, \mu_C(a)\} &= \frac{I\{A=a\}}{\pi^a(1-\pi)^{1-a}}\left\{\overline{Y}^o - \eta_a(\bfX^o, M, N, \bC;\btheta_{\eta,a}) \right\} \\
    &\quad + \frac{\kappa_a(M,N,\bC;\btheta_{\kappa, a})}{\pi^a(1-\pi)^{1-a}}\left\{ \eta_a(\bfX^o, M, N, \bC;\btheta_{ \eta, a}) - \zeta_a(N, \bC;\btheta_{ \eta,a})\right\} \\
    &\quad + \zeta_a(N, \bC;\btheta_{ \eta,a}) - \mu_C(a).
\end{align*}
Then we have $\mathbb{P}_m U_a\{\bO;\widehat{\btheta}_a, \widehat\mu_C^{\textrm{Eff}}(a)\} = 0$ for $a =0,1$. By definition, $\widehat{\btheta} = (\widehat{\btheta}_1, \widehat{\btheta}_0)$ is computed by solving estimating equations, $\mathbb{P}_m \bpsi(\bO; \widehat{\btheta}) = \bzero$ for a known function $\bpsi$. By Lemma~\ref{lemma: asymptotics-theta}, $m^{1/2}(\widehat{\btheta}-\underline{\btheta}) \xrightarrow{d} N(0,\bfV)$ for some matrix $\bfV$. 
Define the estimating equations as 
\begin{equation*}
    \widetilde{\bpsi}(\btheta, \bmu_C)= \left(\begin{array}{c}
        U_1\{\bO;\btheta_1, \mu_C(1)\}  \\
        U_0\{\bO;\btheta_0, \mu_C(0)\} 
    \end{array}\right),
\end{equation*}
where $\bmu_C = (\mu_C(1), \mu_C(0))$. By regularity condition 5, $U_1$ and $U_0$ are continuous in parameters and dominated by an integrable function, and hence P-Glivenko Cantelli by Example 19.8 of \cite{vaart_1998}. Then, Theorem 5.9 of \cite{vaart_1998} implies that $(\widehat\btheta_a, \widehat\bmu_C^{\textrm{Eff}}) \xrightarrow{P} (\underline\btheta_a, \underline\bmu_C^{\textrm{Eff}})$, where $\underline{\mu}_C^{\textrm{Eff}}(a)$ satisfies $E[ U_a\{\bO;\underline\btheta_a, \underline\mu_C^{\textrm{Eff}}(a)\} ] = 0$. To see when $\underline\mu_C^{\textrm{Eff}}(a) = \mu_C(a)$, we have, using Lemma~\ref{lemma2} and $A \perp (N,\bC)$,
\begin{align*}
    E[ U_a\{\bO;\underline\btheta_a, \underline\mu_C(a)\} ] &= \mu_C(a) - E\left[\frac{I\{A=a\} - \kappa_a(\underline\btheta_{\kappa,a})}{\pi^a(1-\pi)^{1-a}} \eta_a(\underline\btheta_{\eta,a})\right] \\
    &\quad  - E\left[\left\{\frac{ \kappa_a(\underline\btheta_{\kappa,a})}{\pi^a(1-\pi)^{1-a}}-1\right\} \zeta_a(\underline\btheta_{\zeta,a})\right] - \underline\mu_C^{\textrm{Eff}}(a) \\
    &= \mu_C(a) -E\left[\frac{\kappa_a^* - \kappa_a(\underline\btheta_{\kappa,a})}{\pi^a(1-\pi)^{1-a}} \left\{ \eta_a(\underline\btheta_{\eta,a}) -\zeta_a(\underline\btheta_{\zeta,a}) \right\}\right]- \underline\mu_C^{\textrm{Eff}}(a).
\end{align*}
Given the condition that (i) $\kappa_a({\underline\btheta}_{\kappa,a}) =\kappa_a^*$ or (ii) $E[\eta_a({\underline\btheta}_{\eta,a})|M,N,\bC] = \zeta_a({\underline\btheta}_{\zeta,a})$, we obtain $\underline\mu_C^{\textrm{Eff}}(a) = \mu_C(a)$. 
To briefly check the triple-robustness property of $\widehat{\bmu}_C^{\textrm{Eff}}$, if suffices to discuss the case when $\widehat{\eta}_a$ and $\widehat{\zeta}_a$ are consistent whereas $\widehat{\kappa}_a$ may be inconsistent. In fact, this case satisfies the second condition (ii) because $E[\eta_a({\underline\btheta}_{\eta,a})|M,N,\bC] = E[\eta_a^*|M,N,\bC] = \zeta^* = \zeta_a({\underline\btheta}_{\zeta,a})$. Therefore, if at least two nuisance functions are consistently estimated, $\widehat{\bmu}_C^{\textrm{Eff}}$ is consistent.

Given regularity conditions 1-5, we apply Theorem 5.31 of \cite{vaart_1998} and get 
\begin{align*}
    m^{1/2}(\widehat{\bmu}_C^{\textrm{Eff}} - \bmu_C) = m^{1/2}E\left[\widetilde{\bpsi}(\widehat\btheta, \underline\bmu_C^{\textrm{Eff}})\right] + \mathbb{G}_m\widetilde{\bpsi}\left(\underline\btheta, \underline\bmu_C^{\textrm{Eff}}\right) + o_p\left(1+m^{1/2}||E[\widetilde{\bpsi}(\widehat\btheta, \underline\bmu_C^{\textrm{Eff}})]||_2\right).
\end{align*}
By the continuity of $\widetilde{\bpsi}$ on $\btheta$ and the asymptotic normality of $\widehat\btheta$, the delta method implies $m^{1/2}E[\widetilde{\bpsi}(\widehat\btheta, \underline\bmu_C^{\textrm{Eff}})] = \mathbb{G}_m u(\underline\btheta, \underline\bmu_C^{\textrm{Eff}}) + o_p(1)$ for some function $u$ and $m^{1/2}||E[\widetilde{\bpsi}(\widehat\btheta, \underline\bmu_C^{\textrm{Eff}})]||_2 = O_p(1)$, leading to the asymptotic normality of $\widehat{\bmu}_C^{\textrm{Eff}}$. 
Specifically, letting $z(\btheta) = E[\widetilde{\bpsi}(\btheta, \underline\bmu_C^{\textrm{Eff}})]$, then $u = \{\nabla z(\btheta)|_{\btheta=\underline{\btheta}} \}^\top E[\frac{\partial}{\partial \btheta} \bpsi(O,\btheta)|_{\btheta=\underline{\btheta}}]^{-1} \bpsi(O,\underline{\btheta})$. 
Then we get the desired result for $\widehat{\Delta}_C^{\textrm{Eff}}$ by delta method. The asymptotic variance of $\widehat{\Delta}_C^{\textrm{Eff}}$ is $$\nabla f^\top E\left[\left\{u(\underline\btheta, \underline\bmu_C^{\textrm{Eff}})+\widetilde{\bpsi}(\underline\btheta, \underline\bmu_C^{\textrm{Eff}})\right\}\left\{u(\underline\btheta, \underline\bmu_C^{\textrm{Eff}})+\widetilde{\bpsi}(\underline\btheta, \underline\bmu_C^{\textrm{Eff}})\right\}^\top\right] \nabla f^\top.$$
The consistency of the variance estimators is implied by Lemma 3 and regularity conditions 2 and 5. Then Slutsky's Theorem implies the desired convergence result.  

We next prove the desired results for $\widehat{\Delta}^{\textrm{Eff}}_C$ based on data-adaptive estimation with cross-fitting. Let $\mathcal{O}_{k}, k=1,\ldots, K$ be the split observed data and $\mathcal{O}_{-k} = \bigcup_{k'\in\{1,\ldots,K\}\setminus\{k\}}\mathcal{O}_{k'}$ be the training data for $\mathcal{O}_k$. Without loss of generalization, we assume that each $\mathcal{O}_k$ has the same sample size. Let $h_a = (\eta_a, \kappa_a, \zeta_a)$ denote the nuisance functions, $\widehat{h}_{a, k}= (\widehat\eta_{a, k}, \widehat\kappa_{a, k}, \widehat\zeta_{a, k})$ denote data-adaptive estimation trained on $\mathcal{O}_{-k}$, and $h_a^* = (\eta_a^*, \kappa_a^*, \zeta_a^*)$ denote the true nuisance functions.

Denote $\mathbb{P}_k X = | \mathcal{O}_k |^{-1} \sum_{i \in \mathcal{O}_k} X_{i}$ and $\mathbb{G}_{k} = | \mathcal{O}_k |^{1/2} (\mathbb{P}_{k} X -  E[X_{i}])$. Finally, we define
\begin{align*}
    D_a(h_a) &= \frac{I\{A=a\}}{\pi^a(1-\pi)^{1-a}}\left\{\overline{Y}^o - \eta_a(\bfX^o, M, N, \bC) \right\} \\
    &\quad + \frac{\kappa_a(M,N,\bC)}{\pi^a(1-\pi)^{1-a}}\left\{ \eta_a(\bfX^o, M, N, \bC) - \zeta_a(N, \bC)\right\} +\zeta_a(N, \bC) ,
\end{align*}
and $\bD(h) = (D_1(h_1), D_0(h_0))$.
Given the above definitions, we have $\widehat{\bmu}_C^{\textrm{Eff}} =K^{-1} \sum_{k=1}^K \mathbb{P}_{k} \bD(\widehat{h}_{k})$ and $\bmu_C = E[\bD(h^*)]$ with $\widehat{h}_{k} = (\widehat{h}_{1,k}, \widehat{h}_{0,k})$ and $h^* = (h_1^*, h_0^*)$. Then,
\begin{align}\label{eq: data-adaptive-estimation}
    &
    m^{1/2}(\widehat{\bmu}_C^{\textrm{Eff}} - \bmu_C) 
    \nonumber
    \\
    &
    =
    K^{-1/2}
    \sum_{k=1}^K
    \Big[
      \mathbb{G}_k\bD(h^*)
      +
      \mathbb{G}_{k}\{\bD(\widehat{h}_k)-\bD(h^*)\}
      +
      \left(\frac{m}{K}\right)^{1/2}
      E[\bD(\widehat{h}_k)-\bD(h^*)| \mathcal{O}_{-k}] 
     \Big]
      .
\end{align}
The first term $K^{-1/2} \sum_{k=1}^K  \mathbb{G}_k\bD(h^*)$ provides the asymptotic normality, so it suffices to show that the latter two terms are $o_P(1)$. Specifically, we denote $R_1 = \mathbb{G}_{k}\{\bD(\widehat{h}_k)-\bD(h^*)\}$ and $R_2 = \left(\frac{m}{K}\right)^{1/2}E[\bD(\widehat{h}_k)-\bD(h^*) | \mathcal{O}_{-k} ]$, and we show that $R_1 = o_p(1)$ and $R_2 = o_p(1)$ in the rest of the proof.


For $R_1$, 
by Theorem 2.14.2 of \cite{vaart&wellner1996weak}, we get
\begin{align*}
      E\left[\big|\big|{\mathbb{G}}_{k}\{\bD(\widehat{h}_k)-\bD(h^*)\}\big|\big|\bigg|\mathcal{O}_{-k}\right] \le c E\left[\big|\big|\bD(\widehat{h}_k)-\bD(h^*)\big|\big|^2\bigg|\mathcal{O}_{-k}\right]^{1/2}
\end{align*}
for some constant $c$, where $||\cdot||$ is the $L_2$ vector norm. By the assumption that $||\widehat{h}_k - h^*||_2 \rightarrow 0$ and Markov's inequality, we have $E\left[\big|\big|\widehat{h}_k-h^*\big|\big|^2\bigg|\mathcal{O}_{-k}\right] \xrightarrow{P} 0$. Since $\widehat{h}_k$ is fixed conditioning on $\mathcal{O}_{-k}$, we have, for $a = 0,1$
\begin{align}					\label{eq-empiricalprocessbound}
   & E\left[\{\bD_a(\widehat{h}_k)-\bD_a(h^*)\}^2\bigg|\mathcal{O}_{-k}\right] \\
   \nonumber
   &= \frac{1}{\pi_a}E[\kappa_a^*(1-\kappa_a^*)(\eta_a^*-\widehat{\eta}_{a,k})^2|\mathcal{O}_{-k}] + E\left[\left\{\frac{1}{\pi_a}(\widehat{\kappa}_{a,k} - \kappa_a^*)(\widehat{\eta}_{a,k} - \widehat{\zeta}_{a,k}) + (\frac{\kappa^*_a}{\pi_a}-1)(\widehat{\zeta}_{a,k} - \zeta_a^*)\right\}^2\bigg| \mathcal{O}_{-k}\right] \\
   \nonumber
   &\le \frac{1}{\pi_a} E[\kappa_a^*(1-\kappa_a^*)(\eta_a^*-\widehat{\eta}_{a,k})^2|\mathcal{O}_{-k}] + \frac{2}{\pi_a^2}E[(\widehat{\kappa}_{a,k} - \kappa_a^*)^2(\widehat{\eta}_{a,k} - \widehat{\zeta}_{a,k})^2|\mathcal{O}_{-k}]\\
   \nonumber
   &\quad  + \frac{2}{\pi_a^2}E[(\kappa^*_a-{\pi_a})^2(\widehat{\zeta}_{a,k} - \zeta_a^*)^2|\mathcal{O}_{-k}] \\
   \nonumber
   &\le\frac{1}{\pi_a} E[(\eta_a^*-\widehat{\eta}_{a,k})^2|\mathcal{O}_{-k}] + \frac{2}{\pi_a^2}E[(\widehat{\kappa}_{a,k} - \kappa_a^*)^2(\widehat{\eta}_{a,k} - \widehat{\zeta}_{a,k})^2|\mathcal{O}_{-k}] + \frac{2}{\pi_a^2}E[(\widehat{\zeta}_{a,k} - \zeta_a^*)^2|\mathcal{O}_{-k}]\\
   \nonumber
   &\le \frac{1}{\pi_a} E[(\eta_a^*-\widehat{\eta}_{a,k})^2|\mathcal{O}_{-k}] + \frac{6}{\pi_a^2}E[(\widehat{\kappa}_{a,k} - \kappa_a^*)^2(\widehat{\eta}_{a,k} - {\eta}_{a}^*)^2|\mathcal{O}_{-k}] + \frac{6}{\pi_a^2}E[(\widehat{\kappa}_{a,k} - \kappa_a^*)^2(\widehat{\zeta}_{a,k} - {\zeta}_{a}^*)^2|\mathcal{O}_{-k}]\\
   \nonumber
   &\quad + \frac{6}{\pi_a^2}E[(\widehat{\kappa}_{a,k} - \kappa_a^*)^2({\eta}_{a}^* - {\zeta}_{a}^*)^2|\mathcal{O}_{-k}] + \frac{2}{\pi_a^2}E[(\widehat{\zeta}_{a,k} - \zeta_a^*)^2|\mathcal{O}_{-k}]\\
   \nonumber
   &= o_p(1) + O_p(1) o_p(1) +  O_p(1) o_p(1) + o_p(1) + o_p(1)
   \\
   \nonumber
   &= o_p(1),
\end{align}
where $\pi_a = \pi^a(1-\pi)^{(1-a)}$. In the above derivation, the first line results from algebra and $pr(A=a|M,N,\bC) = \kappa_a^*$, the second line uses the Cauchy-Schwarz inequality, the third line is implied by $\kappa_a^* \in [0,1]$, the fourth line again comes from the Cauchy-Schwarz inequality, and the fifth line results from the assumption that $E\left[||\widehat{h}_k-h^*||^2\big|\mathcal{O}_{-k}\right] = o_p(1)$ and regularity condition 6. In particular, $E[(\widehat{\kappa}_{a,k} - \kappa_a^*)^2({\eta}_{a}^* - {\zeta}_{a}^*)^2|\mathcal{O}_{-k}] = o_p(1)$ is obtained as follows:
\begin{align*}
	E[(\widehat{\kappa}_{a,k} - \kappa_a^*)^2({\eta}_{a}^* - {\zeta}_{a}^*)^2|\mathcal{O}_{-k}]
	&
	=
	E \Big[
	(\widehat{\kappa}_{a,k} - \kappa_a^*)^2
	E \big\{ ({\eta}_{a}^* - {\zeta}_{a}^*)^2 \, \big| \,	
	M,N,C  \big\} \, \Big| \,
	\mathcal{O}_{-k} \Big]
	\\
	&
	=
	E \Big[
	(\widehat{\kappa}_{a,k} - \kappa_a^*)^2
	E \big[ \big\{ {\eta}_{a}^* - E(\eta_a^* | M,N,C) \big\}^2 \, \big| \,	
	M,N,C \big] \, \Big| \,
	\mathcal{O}_{-k} \Big]
	\\
	&
	=
	E \big[
	(\widehat{\kappa}_{a,k} - \kappa_a^*)^2
	\text{var}\big( \eta_{a}^*  \, \big| \,	
	M,N,C  \big) \, \big| \,
	\mathcal{O}_{-k} \big]
	\\
	&
	\leq 
	E \big[
	(\widehat{\kappa}_{a,k} - \kappa_a^*)^2
	E\big\{ ( \eta_{a}^*)^2  \, \big| \,	
	M,N,C  \big\} \, \big| \,
	\mathcal{O}_{-k} \big]
	\\
	&
=
	O_p(1)E \big[
	(\widehat{\kappa}_{a,k} - \kappa_a^*)^2 \, \big| \,
	\mathcal{O}_{-k} \big]\\
	&= o_p(1) \ .
\end{align*}
The second equality is from $E[\eta_a^* |M,N, \bC]= E[\overline{Y}^o|M,N, \bC] =\zeta_a^*$. The third equality is from the definition of the conditional variance, and the first inequality is based on $\text{var}(X|Y) \leq E(X^2|Y)$. The upper bound in the last line is from regularity condition 6, and the asymptotic rate is from the assumption. Therefore, ${\mathbb{G}}_{k}\{\bD(\widehat{h}_k)-\bD(h^*)\} = o_p(1)$ given $\mathcal{O}_{-k}$, which implies $R_1 = o_p(1)$.

For $R_2$, we can compute that, for each $k$,
\begin{align*}
    &
    \left(\frac{m}{K}\right)^{1/2}E[\bD(\widehat{h}_k)-\bD(h^*) | \mathcal{O}_{-k} ]
    \\
    &= \frac{ 1}{\pi_a}\left(\frac{m}{K}\right)^{1/2}E[(\widehat{\kappa}_{a,k} - \kappa_a^*)(\widehat{\eta}_{a,k} - \widehat{\zeta}_{a,k}) | \mathcal{O}_{-k} ] \\
    &= \frac{ 1}{\pi_a}\left(\frac{m}{K}\right)^{1/2} E[(\widehat{\kappa}_{a,k} - \kappa_a^*)(\widehat{\eta}_{a,k} - \eta_a^*) | \mathcal{O}_{-k} ] - 
    \frac{ 1}{\pi_a}\left(\frac{m}{K}\right)^{1/2} E[(\widehat{\kappa}_{a,k} - \kappa_a^*)(\widehat{\zeta}_{a,k} - \zeta_a^*) | \mathcal{O}_{-k} ] \\
    &\le 
    \frac{ 1}{\pi_a}\left(\frac{m}{K}\right)^{1/2}
    E[(\widehat{\kappa}_{a,k} - \kappa_a^*)^2 | \mathcal{O}_{-k}]^{1/2}\{E[(\widehat{\eta}_{a,k} - \eta_a^*)^2 | \mathcal{O}_{-k}]^{1/2}
    + E[(\widehat{\zeta}_{a,k} - \zeta_a^*)^2 | \mathcal{O}_{-k}]^{1/2}\} \\
    &= o_p(1),
\end{align*}
where the first line comes from algebra and Lemma~\ref{lemma2}, the second line is implied by $E[\eta_a^* |M,N, \bC]= E[\overline{Y}^o|M,N, \bC] =\zeta_a^*$, yielding that $E[(\widehat{\kappa}_{a,k} - \kappa_a^*)(\eta_a^*-\zeta_a^*) | \mathcal{O}_{-k}] = E[E[\widehat{\kappa}_{a,k} - \kappa_a^*|\mathcal{O}_{-k}]E[\eta_a^*-\zeta_a^*|M,N,\bC]] = 0$, the third line results from the Cauchy-Schwartz inequality, and the last line uses the assumption that nuisance functions are estimated at $m^{{1}/{4}}$-rate.

Finally, the Equation~(\ref{eq: data-adaptive-estimation}) becomes $ m^{1/2}(\widehat{\bmu}_C^{\textrm{Eff}} - \bmu_C) 
= K^{-1/2}
    \sum_{k=1}^K 
      \mathbb{G}_k\bD(h^*)
      =
\mathbb{G}_m\bD(h^*) + o_p(1)$. Since $D_a(h_a^*) = EIF_{C,a} + \mu_C(a)$, we get $ m^{1/2}(\widehat{\bmu}_C^{\textrm{Eff}} - \bmu_C) = \mathbb{G}_m EIF_{C} + o_p(1)$, which indicates the asymptotic linearity. Then the asymptotic normality is implied by the Central Limit Theorem if we can show $EIF_C$ has finite second moments. This is implied by Assumption 1 and regularity condition 6.

For completeness, we show the consistency of the variance estimate of cross-fitting estimator. Let us denote $\widehat{\Sigma}_k = \big| \mathcal{O}_k \big|^{-1} \sum_{i \in \mathcal{O}_k} \big\{ D(\widehat{h}_k) -  \widehat{\bmu}_{k,C}^{\textrm{Eff}} \big\}\big\{ D(\widehat{h}_k) -  \widehat{\bmu}_{k,C}^{\textrm{Eff}} \big\}^\top $ where $\widehat{\bmu}_{k,C}^{\textrm{Eff}} = \mathbb{P}_k \{ D(\widehat{h}_k) \}$. The proposed variance estimate for $\widehat{\bmu}_{C}^{\textrm{Eff}}$ is $ \widehat{\Sigma} = 
K^{-1} \sum_{k=1}^{K} \widehat{\Sigma}_k$. Therefore, it suffices to show that $\widehat{\Sigma}_k$ is consistent for $ \Sigma=\text{var} \{ D (h^*) \} $. Let us consider the following decomposition of $ \widehat{\Sigma}_k - \Sigma $:
 	\begin{align*}
 		&
 		 \widehat{\Sigma}_k - \Sigma
 		=
 		\underbrace{ 		 
 		\widehat{\Sigma}_k
 		-
 		\overline{\Sigma}_k
 		}_{(A)}
 		+
 		\underbrace{
 		\overline{\Sigma}_k
 		-
 	\Sigma
 		 }_{(B)} \ , \\
 		&
 		\overline{\Sigma}_k
 		=
 		\frac{1}{\big| \mathcal{O}_k \big| }
 		\sum_{i \in \mathcal{O}_k}  
 		\big\{ D_i ( h^* ) - \mu_C \big\}	\big\{ D_i ( h^* ) - \mu_C \big\}^\top
 	\end{align*}
 	The second term $(B)$ is then $o_p(1)$ from the law of large numbers. 
 	Denoting $\Delta_i = D_i ( \widehat{h}_k ) -  \widehat{\bmu}_{k,C}^{\textrm{Eff}} -D_i ( h^* ) + \mu_C$, the first term $(A)$ is represented as
 	\begin{align*}
 		(A)
 		& =
 		\frac{1}{| \mathcal{O}_k | }
 		\sum_{i \in \mathcal{O}_k}
 		\bigg[ \Delta_i\Delta_i^\top + \Delta_i \big\{ D_i ( h^* ) - \mu_C \big\}^\top + \big\{ D_i ( h^* ) - \mu_C \big\}\Delta_i^\top
 		\bigg].
 	\end{align*}
 From the H\"older's inequality and the matrix norm, we find
 	\begin{align*}
 		|| (A) ||^2
 		&
 		\leq 
 		\bigg[ \frac{1}{|\mathcal{O}_k|} \sum_{i \in \mathcal{O}_k} \Delta_i^\top\Delta_i\bigg]^2 \\
 		&\quad 		+
  		2\bigg[ \frac{1}{|\mathcal{O}_k|} \sum_{i \in \mathcal{O}_k} \Delta_i^\top\Delta_i\bigg] 
 		\bigg[ \frac{1}{|\mathcal{O}_k|} \sum_{i \in \mathcal{O}_k} \big\{ D_i ( h^* ) - \mu_C \big\}^\top\big\{ D_i ( h^* ) - \mu_C \big\} \bigg]
 		\\
 		&
 		= 
 		\bigg[ \frac{1}{|\mathcal{O}_k|} \sum_{i \in \mathcal{O}_k} \Delta_i^\top\Delta_i\bigg]^2	+
  		2\bigg[ \frac{1}{|\mathcal{O}_k|} \sum_{i \in \mathcal{O}_k} \Delta_i^\top\Delta_i\bigg] \left\{\overline{\Sigma}_k(1,1) + \overline{\Sigma}_k(2,2)\right\}
 	\end{align*}
 	Since $\overline{\Sigma}_k = \Sigma + o_p(1)$ and $\Sigma$ is finite, we have  $\overline{\Sigma}_k(1,1) + \overline{\Sigma}_k(2,2) = O_p(1)$. Therefore, if $|\mathcal{O}_k|^{-1} \sum_{i \in \mathcal{O}_k} \Delta_i^\top\Delta_i$ is $o_p(1)$, we establish $(A)=o_p(1)$. We expand $|\mathcal{O}_k|^{-1} \sum_{i \in \mathcal{O}_k} \Delta_i^\top\Delta_i$ as follows:
 	\begin{align*}
 		\frac{1}{|\mathcal{O}_k|} \sum_{i \in \mathcal{O}_k} \Delta_i^\top\Delta_i
 	& =
 	\frac{1}{|\mathcal{O}_k|} \sum_{i \in \mathcal{O}_k} 
 	\Big|\Big|\big\{	D_i (\widehat{h}_k) - D_i(h^*)\big\}-\big\{ \widehat{\bmu}_{k,C}^{\textrm{Eff}} - \mu_C(a)\big\} \Big|\Big|^2
 	\\
 	&
 	\leq
 	\frac{2}{|\mathcal{O}_k|} 
 	\sum_{i \in \mathcal{O}_k} \big|\big|
 			D_i (\widehat{h}_k) - D_i(h^*)
 		\big|\big|^2
 		+
 		2
 \big|\big|
 			 \widehat{\bmu}_{k,C}^{\textrm{Eff}} - \mu_C(a)
\big|\big|^2
 		\\
 		&
 		=
 		2
 		E \Big[
 \big|\big|
 			D_i (\widehat{h}_k) - D_i(h^*)
\big|\big|^2
 		\, \Big| \, \mathcal{O}_{-k} \Big]
 		+ o_p(1)
 		+
 		2
\big|\big|
 			 \widehat{\bmu}_{k,C}^{\textrm{Eff}} - \mu_C(a)
\big|\big|^2
 		\\
 		& = o_p(1) \ .
 	\end{align*}
The inequality is from $(a-b)^2 \leq 2a^2 + 2b^2$. The third line is from applying the law of large numbers on $\big|\big|
 			D_i (\widehat{h}_k) - D_i(h^*)
 		\big|\big|^2$. The last line is from \eqref{eq-empiricalprocessbound} and the consistency result of $\widehat{\bmu}_{k,C}^{\textrm{Eff}}$, i.e., $\widehat{\bmu}_{k,C}^{\textrm{Eff}} - \mu_C(a) = o_p(1)$. Then Slutsky's Theorem implies the desired convergence result.

\end{proof}






\subsection{Results under stratified or biased-coin cluster randomization}
Under stratified or biased-coin cluster randomization, Theorem 1 of \cite{wang2021model} showed that the influence function, consistency, and asymptotic normality of an M-estimator is not changed compared to simple randomization under the same regularity conditions as we have in Section~\ref{suppsec: reg-condi}. 
Since our weighted g-computation estimators and proposed estimator with parametric nuisance models are all M-estimators, we obtain the consistency and asymptotic normality of these estimators under these two restricted randomization schemes.
Furthermore, the asymptotic variance under stratified or biased-coin cluster randomization is reduced by $\frac{1}{\pi(1-\pi)} E[E\{(A-\pi)IF(O;\underline{\btheta})|Z\}^2]$, where $IF(O;\underline{\btheta})$ is the influence function for $\widehat{\btheta}$, $\underline{\btheta}$ is the probability limit of $\widehat{\btheta}$, and $Z$ is a vector of dummy variables encoded by balanced strata covariates. For our setting with GEE or linear mixed models, when $\pi=0.5$ and $Z$ are adjusted for as covariates, we can follow the same proof as in Corollary 1 of \cite{wang2021model} to show that this variance difference is zero, yielding the consistency of our sandwich variance estimators.

For our proposed estimator with machine learning algorithms, Equation~(\ref{eq: data-adaptive-estimation}) still holds under stratified or biased-coin cluster randomization by direct algebra. For the remainder terms $ \mathbb{G}_{k}\{\bD(\widehat{h}_k)-\bD(h^*)\}$ and $
      \left(\frac{m}{K}\right)^{1/2}
      E[\bD(\widehat{h}_k)-\bD(h^*)| \mathcal{O}_{-k}] $ in Equation~(\ref{eq: data-adaptive-estimation}), we can follow the same proof to show that they are $o_p(1)$ under stratified or biased-coin cluster randomization given the $m^{1/4}$ convergence rate for nuisance function estimators. This is because the relevant proof only involves the convergence of nuisance function estimators but not the inter-cluster independence of observed data. Therefore, we get $ m^{1/2}(\widehat{\bmu}_C^{\textrm{Eff}} - \bmu_C) = \mathbb{G}_m EIF_{C} + o_p(1)$ and the asymptotic normality again follows from Theorem 1 of \cite{wang2021model}. 
      The variance difference between stratified versus simple randomization becomes $\frac{1}{\pi(1-\pi)} E[E\{(A-\pi)EIF_{C}(a)|Z\}^2]$ for $\widehat{\bmu}_C^{\textrm{Eff}}(a)$. However, Lemma 3 of \cite{wang2021model} and Lemma 2 implies that $E\{A\ EIF_{C}(a)|Z\} = \pi E\{ EIF_{C}(a)|Z\}$, indicating that the variance difference  is zero. 
\section{Other methods}\label{suppsec: other methods}
The augmented generalized estimating equations (Aug-GEE, \citealp{stephens2012augmented}) is an approach that incorporates the augmentation technique in semiparamteric theory to improve the robustness and precision of GEE. Instead of modeling the conditional expectation of $Y_{ij}$ on $A_i$ and $\bX_{ij}$, Aug-GEE solves the following estimating equations to estimate $(\beta_0, \beta_A)$:
\begin{equation}\label{eq:aug-GEE}
    \sum_{i=1}^m \bfD_i^\top \bfV_i^{-1} \{\bY_i^o - \bmu_i^o(A)\} - (A_i-\pi) \boldsymbol{\gamma}(N_i, \bC_i, \bfX_i^o) = \bzero,
\end{equation}
where the first part $\bfD_i^\top \bfV_i^{-1} \{\bY_i^o - \bmu_i^o(A)\}$ is the same as GEE except that covariates $(N,\bC, \bfX^o)$ are omitted, and the remaining part $-(A_i-\pi) \boldsymbol{\gamma}(N_i, \bC_i, \bfX_i^o)$ is the augmentation term with $\boldsymbol{\gamma}(N_i, \bC_i, \bfX_i^o)$ being an arbitrary $M_i$-dimensional function.
Popular choices of the augmentation term involves parametric working models for $\bY_i^o$ on $A_i, N_i, \bC_i, \bfX_i^o$ and can be found in \cite{stephens2012augmented}.
Given $(\widehat\beta_0, \widehat\beta_A)$ from Aug-GEE, we can follow the steps in Section 3 for GEE-g to construct Aug-GEE-g targeting $(\Delta_C, \Delta_I)$. Given similar conditions (Assumptions 1,2,4) in Theorem~1, we can get the consistency and asymptotically normality of Aug-GEE-g. Compared to GEE, the asymptotic results for Aug-GEE requires less assumptions:  a correctly-specified model for $E[Y_{ij}|\bU_{ij}]$ or (S2)-(S4) in Theorem 1 is not needed as long as the estimates for the augmentation term converge. The reason is that the variance fucntion $v(Y_{ij})$ is only a function $\mu(A_i)$ which does not vary within a cluster and hence satisfies (S4); however, this simplification of variance functions may impact the precision. In terms of precision, Aug-GEE has the potential to improve precision over GEE, while simulation studies showed that this efficiency gain is at most moderate, especially for studies with few clusters \citep{stephens2012augmented, benitez2021comparative}. 

Targeted maximum likelihood estimation (TMLE, \citealp{van2006targeted}) as a general framework for causal inference also has applications for clustered data.
Among them, two most relevant methods are 
cluster-level TMLE and hierarchical TMLE proposed by \cite{balzer2019new}. In our setting, both methods assume $M_i = N_i$. The cluster-level TMLE targets $\Delta_C$ and works on cluster averages of outcomes and covariates, i.e., $(\overline{Y}_i^o, A_i, N_i, \bC_i, \overline{\bX}_i^o)$. Although individual-level information is not utilized, cluster-level TMLE avoids the complexity of accounting for intracluster correlations. Thus, the cluster-level TMLE enjoys the precision gain from covariate adjustment without the need of a correctly specified mean model. 
However, under the potentially informative within-cluster sampling schemes as we consider in this paper, $\overline{\bX}_i^o$ is no longer independent of $A_i$, and directly adjusting for them may cause bias.
Hierarchical TMLE, also known as individual-level TMLE, performs TMLE on $E[Y_{ij}|\bU_{ij}]$ and aggregates model predictions to estimate $\Delta_I$. Unlike cluster-level TMLE, the asymptotic validity of hierarchical TMLE needs additional assumptions for convergence of TMLE given dependent data.
With the additional assumptions, hierarchical TMLE can also target $\Delta_C$ by modifying the cluster weights, and cluster-level TMLE can utilize the model predictions from the  hierarchical TMLE to further improve power.
Both TMLE-based methods have been demonstrated to be more precise than GEE and Aug-GEE in simulation studies  \citep{benitez2021comparative}, while less is known about their validity and efficiency when $M_i \ne N_i$ or the treatment effect is related to $N_i$ as we consider here. 

In addition to the above discussed methods, \cite{schochet2021design, su2021model} established the asymptotic theory for a class of linearly-adjusted estimators under the randomization inference framework, and \cite{bugni2022inference} proposed unadjusted estimators that accounts for treatment effects that vary by $N_i$. 
Under the super-population framework, their estimators can be viewed as special cases of GEE-g with identity link and independence working correlation structure (with $1/M_i$ weighting for the unadjusted estimator),  whose property has been discussed and hence omitted.

\section{Addition simulation studies}\label{suppsec: additional-sim}
\subsection{Simulations with increased source population size heterogeneity}
We replicated the two simulations in the main paper with the source population distribution $N_i \sim \textup{Uniform}\{10,50\}$ changed to $N_i \sim \textup{Uniform}[10,100]$.
For the first simulation study with continuous outcomes, we modify the data-generating distribution to accommodate this changes as follows:
\begin{align*}
    \mathcal{P}^{C_1|N} &= \mathcal{N}(N/20,4) \\
    \mathcal{P}^{C_2\mid N,C_1} &=\mathcal{B}[\textrm{expit}\{\log(N/20)C_1\}] \\
    X_{ij1} &\sim \mathcal{B}\left({N_i}/{100}\right) \\
    X_{ij2} &\sim \mathcal{N}\left\{{\sum_{j=1}^{N_i} X_{ij1}}(2C_{i2}-1)/{N_i}, 9\right\} \\
     Y_{ij}(1) &\sim \mathcal{N}\left\{ N_i/10 + N_i\sin(C_{i1})  (2C_{i2}-1)/60+ 5e^{X_{ij1}}\mid X_{ij2}\mid , 1\right\} \\
      Y_{ij}(0) &\sim \mathcal{N}\left\{\gamma_i + N_i\sin(C_{i1})  (2C_{i2}-1)/60+ 5e^{X_{ij1}}\mid X_{ij2}\mid , 1\right\}.
\end{align*}
We set $M_i(1) = M_i(0) = 9 + \mathcal{B}(0.5)$ for the random observed cluster size scenario, and $M_i(1) =  \textup{int}(N_i/10)+5C_{i2}$, $M_i(0) = 3 I\{N_i\ge 55\} + 3$ for the cluster-dependent observed cluster size scenario, where $ \textup{int}$ rounds a real number to an integer. 

For the second simulation study, the data were generated following the first simulation study, except that the potential outcomes were drawn from the following Bernoulli distributions:
\begin{align*}
    Y_{ij}(1) &\sim \mathcal{B}\left[\textrm{expit}\left\{ -N_i/40 + N_i\sin(C_{i1}) (2C_{i2}-1)/60+ 1.5e^{X_{ij1}}\mid X_{ij2}\mid ^{1/2}\right\}\right] \\
    Y_{ij}(0) &\sim \mathcal{B}\left[\textrm{expit}\left\{ \gamma_i+N_i\sin(C_{i1}) (2C_{i2}-1)/60+ 1.5(2X_{ij1}-1)\mid X_{ij2}\mid ^{1/2} \right\}\right].
\end{align*}
In summary, most of the changes are replacing $N_i$ by $N_i/2$ in the data-generating distribution.

The simulation results are provided in Tables~\ref{tab: sim1-2} and \ref{tab: sim2-2}, whose takeaways are similar to Tables 1 and 2 in the main paper.

\begin{table}[htbp]
\renewcommand{\arraystretch}{0.8}
\centering
\caption{Replication of the first simulation with increased population size variation. }\label{tab: sim1-2}
\resizebox{1\textwidth}{!}{
\begin{tabular}{lrrrrrrrrrr}
  \hline
  & & \multicolumn{4}{c}{\shortstack[c]{ Cluster-average treatment \\ effect $\Delta_C = 5.5$}  } & & \multicolumn{4}{c}{\shortstack[c]{ Individual-average treatment \\ effect $\Delta_I = 6.75$}}\\
Setting  & Method & Bias & ESE   &  ASE &  CP &\ & Bias & ESE   &  ASE &  CP\\ 
  \hline
\multirow{5}{*}{\shortstack[l]{Scenario 1: \\ Small $m$ with\\ 
 random \\
 observed cluster sizes}}&  Unadjusted &   0.03 & 3.78 & 3.74 & 0.95& & -0.04 & 4.12 & 3.87 & 0.93 \\ 
&    GEE-g & 0.10 & 2.59 & 2.29 & 0.92& & -0.09 & 3.11 & 2.62 & 0.91 \\
&    LMM-g & 0.10 & 2.59 & 2.65 & 0.96& & -0.09 & 3.11 & 2.71 & 0.93 \\
 & Eff-PM & 0.03 & 2.81 & 2.49 & 0.93& & -0.03 & 3.45 & 2.91 & 0.92 \\ 
  &   Eff-ML &0.04 & 2.06 & 1.81 & 0.94& & -0.07 & 2.33 & 2.10 & 0.94 \\ 
   \hline
\multirow{5}{*}{\shortstack[l]{Scenario 2: \\ Small $m$ with\\ 
 cluster-dependent \\
 observed cluster sizes}}&  Unadjusted & 0.01 & 4.34 & 4.32 & 0.95& & -0.07 & 4.44 & 4.23 & 0.94 \\ 
&    GEE-g  & 0.90 & 3.45 & 3.01 & 0.90& & 0.48 & 4.02 & 3.33 & 0.89 \\ 
&    LMM-g  & 0.77 & 3.31 & 3.41 & 0.96& & 0.38 & 3.78 & 3.47 & 0.94 \\ 
& Eff-PM & 0.04 & 3.77 & 3.54 & 0.94& & -0.01 & 4.06 & 3.70 & 0.93 \\ 
  &   Eff-ML & -0.02 & 3.56 & 3.32 & 0.94& & -0.01 & 4.02 & 3.61 & 0.93 \\ 
   \hline
\multirow{5}{*}{\shortstack[l]{Scenario 3: \\ Large $m$ with\\ 
 random \\
 observed cluster sizes}}&  Unadjusted &   -0.01 & 2.03 & 2.04 & 0.95& & -0.02 & 2.19 & 2.16 & 0.95 \\ 
&    GEE-g  &0.02 & 1.34 & 1.29 & 0.94& & -0.02 & 1.65 & 1.55 & 0.94 \\ 
&    LMM-g  & 0.02 & 1.34 & 1.35 & 0.95& & -0.02 & 1.65 & 1.42 & 0.91 \\ 
 & Eff-PM & 0.00 & 1.36 & 1.30 & 0.94& & -0.00 & 1.68 & 1.58 & 0.93 \\  
  &   Eff-ML & 0.02 & 0.61 & 0.62 & 0.95& & -0.06 & 0.68 & 0.70 & 0.96 \\ 
   \hline
\multirow{5}{*}{\shortstack[l]{Scenario 4: \\ Large $m$ with\\ 
 cluster-dependent \\
 observed cluster sizes}}&  Unadjusted & -0.01 & 2.35 & 2.35 & 0.95& & -0.02 & 2.38 & 2.36 & 0.95 \\ 
&    GEE-g  &  0.92 & 1.74 & 1.68 & 0.90& & 0.58 & 2.00 & 1.90 & 0.93 \\ 
&    LMM-g  &  0.87 & 1.73 & 1.75 & 0.93& & 0.55 & 1.98 & 1.82 & 0.92 \\ 
 & Eff-PM & 0.02 & 1.87 & 1.84 & 0.95& & 0.02 & 2.02 & 1.98 & 0.94 \\ 
  &   Eff-ML &-0.06 & 1.66 & 1.61 & 0.94& & -0.04 & 1.91 & 1.83 & 0.94 \\ 
   \hline
\end{tabular}
}
{
\raggedright \small
\setlength{\baselineskip}{1pt}
 Unadjusted: the unadjusted estimator. GEE-g: GEE with weighted g-computation. LMM-g: linear mixed models with weighted g-computation. Eff-PM: our proposed method with parametric working models. Eff-ML: our proposed method with machine learning algorithms. ESE: empirical standard error. ASE: average of estimated standard error. CP: coverage probability based on $t$-distribution.

}
\end{table}

\begin{table}[htbp]
\renewcommand{\arraystretch}{0.8}
\centering
\caption{Replication of the second simulation with increased population size variation.}\label{tab: sim2-2}
\resizebox{1\textwidth}{!}{
\begin{tabular}{lrrrrrrrrrr}
  \hline
  & & \multicolumn{4}{c}{\shortstack[c]{ Cluster-average treatment \\ effect $\Delta_C = 1.56$}  } & & \multicolumn{4}{c}{\shortstack[c]{ Individual-average treatment \\ effect $\Delta_I = 1.35$}}\\
Setting  & Method & Bias & ESE   &  ASE &  CP &\ & Bias & ESE   &  ASE &  CP\\ 
  \hline
\multirow{5}{*}{\shortstack[l]{Scenario 1: \\ Small $m$ with\\ 
 random \\
 observed cluster sizes}}&  Unadjusted &  0.02 & 0.21 & 0.20 & 0.94& & 0.03 & 0.17 & 0.16 & 0.93 \\ 
&    GEE-g & 0.00 & 0.19 & 0.17 & 0.92& & 0.03 & 0.17 & 0.14 & 0.92 \\ 
&    LMM-g & 0.00 & 0.18 & 0.17 & 0.92& & 0.03 & 0.16 & 0.14 & 0.92 \\
 & Eff-PM &0.00 & 0.19 & 0.17 & 0.93& & 0.01 & 0.17 & 0.15 & 0.93 \\ 
  &   Eff-ML &0.02 & 0.33 & 0.18 & 0.93& & 0.03 & 0.18 & 0.15 & 0.94 \\ 
   \hline
\multirow{5}{*}{\shortstack[l]{Scenario 2: \\ Small $m$ with\\ 
 cluster-dependent \\
 observed cluster sizes}}&  Unadjusted & 0.04 & 0.27 & 0.26 & 0.94& & 0.04 & 0.20 & 0.19 & 0.93 \\ 
&    GEE-g  & -0.08 & 0.30 & 0.20 & 0.83& & -0.03 & 0.20 & 0.18 & 0.90 \\
&    LMM-g  &  -0.07 & 0.21 & 0.19 & 0.85& & -0.02 & 0.18 & 0.17 & 0.91 \\ 
& Eff-PM & 0.03 & 0.28 & 0.25 & 0.93& & 0.03 & 0.20 & 0.19 & 0.94 \\ 
  &   Eff-ML & 0.04 & 0.35 & 0.27 & 0.94& & 0.04 & 0.27 & 0.20 & 0.94 \\ 
   \hline
\multirow{5}{*}{\shortstack[l]{Scenario 3: \\ Large $m$ with\\ 
 random \\
 observed cluster sizes}}&  Unadjusted &  0.01 & 0.11 & 0.11 & 0.95& & 0.01 & 0.09 & 0.09 & 0.95 \\ 
&    GEE-g  &0.00 & 0.09 & 0.09 & 0.94& & 0.01 & 0.08 & 0.07 & 0.94 \\ 
&    LMM-g  & 0.00 & 0.09 & 0.09 & 0.94 && 0.01 & 0.08 & 0.08 & 0.94 \\ 
 & Eff-PM & 0.01 & 0.09 & 0.09 & 0.94& & 0.01 & 0.08 & 0.08 & 0.95 \\ 
  &   Eff-ML & 0.01 & 0.10 & 0.09 & 0.94 && 0.01 & 0.08 & 0.08 & 0.95 \\ 
   \hline
\multirow{5}{*}{\shortstack[l]{Scenario 4: \\ Large $m$ with\\ 
 cluster-dependent \\
 observed cluster sizes}}&  Unadjusted & 0.01 & 0.14 & 0.14 & 0.95& & 0.01 & 0.10 & 0.10 & 0.95 \\ 
&    GEE-g  &  -0.11 & 0.10 & 0.10 & 0.74& & -0.05 & 0.09 & 0.10 & 0.90 \\ 
&    LMM-g  &  -0.09 & 0.10 & 0.10 & 0.79& & -0.04 & 0.09 & 0.09 & 0.91 \\
 & Eff-PM & 0.00 & 0.12 & 0.12 & 0.94& & 0.01 & 0.10 & 0.10 & 0.95 \\ 
  &   Eff-ML & 0.00 & 0.17 & 0.13 & 0.93& & 0.01 & 0.10 & 0.09 & 0.94 \\ 
   \hline
\end{tabular}
}
{
\raggedright \small
\setlength{\baselineskip}{1pt}
 Unadjusted: the unadjusted estimator. GEE-g: GEE with weighted g-computation. LMM-g: linear mixed models with weighted g-computation. Eff-PM: our proposed method with parametric working models. Eff-ML: our proposed method with machine learning algorithms. ESE: empirical standard error. ASE: average of estimated standard error. CP: coverage probability based on $t$-distribution.

}
\end{table}

\subsection{Simulations with increased sample size}

To further demonstrate our asymptotic results, we added a replication of the first two simulations with $m$ increased to 1000. The simulation results are summarized in Tables \ref{tab: sim3} and \ref{tab: sim4}. The results showed negligible bias, accurate variance estimators, and valid coverage probability for our proposed methods, which confirm our asymptotic theory.
\begin{table}[htbp]
\centering
\caption{Replication of the first simulation with $m=1000$. }\label{tab: sim3}
\renewcommand{\arraystretch}{0.8}
\resizebox{1\textwidth}{!}{
\begin{tabular}{lrrrrrrrrrr}
  \hline
  & & \multicolumn{4}{c}{\shortstack[c]{ Cluster-average treatment \\ effect $\Delta_C = 6$}  } & & \multicolumn{4}{c}{\shortstack[c]{ Individual-average treatment \\ effect $\Delta_I = 8.67$}}\\
Setting  & Method & Bias & ESE   &  ASE &  CP &\ & Bias & ESE   &  ASE &  CP\\ 
  \hline
\multirow{5}{*}{\shortstack[l]{Continuous outcomes \\ with random \\
 observed cluster sizes}}&  Unadjusted & -0.01 & 0.83 & 0.84 & 0.95& & -0.02 & 0.75 & 0.75 & 0.95 \\
&    GEE-g  & -0.00 & 0.44 & 0.44 & 0.95& & -0.01 & 0.60 & 0.59 & 0.94 \\ 
&    LMM-g  & -0.00 & 0.44 & 0.44 & 0.95& & -0.01 & 0.60 & 0.58 & 0.94 \\ 
& Eff-PM & -0.01 & 0.44 & 0.44 & 0.95& & -0.01 & 0.61 & 0.60 & 0.95 \\ 
  &   Eff-ML &  0.01 & 0.19 & 0.21 & 0.96& & 0.01 & 0.21 & 0.23 & 0.97 \\ 
   \hline
\multirow{5}{*}{\shortstack[l]{Continuous outcomes \\ with cluster-dependent \\
 observed cluster sizes}}&  Unadjusted &  0.01 & 0.91 & 0.91 & 0.95& & 0.00 & 0.78 & 0.79 & 0.95 \\ 
&    GEE-g  & 1.75 & 0.58 & 0.58 & 0.16& & 0.76 & 0.68 & 0.68 & 0.79 \\ 
&    LMM-g  & 1.74 & 0.58 & 0.58 & 0.15& & 0.76 & 0.68 & 0.62 & 0.74 \\ 
& Eff-PM & -0.01 & 0.57 & 0.58 & 0.95& & -0.00 & 0.68 & 0.68 & 0.95 \\ 
  &   Eff-ML & -0.01 & 0.59 & 0.59 & 0.95& & -0.00 & 0.68 & 0.68 & 0.95 \\ 
   \hline
\end{tabular}
}
\end{table}

\begin{table}[htbp]
\centering
\caption{Replication of the second simulation with $m=1000$.}\label{tab: sim4}
\renewcommand{\arraystretch}{0.8}
\resizebox{1\textwidth}{!}{
\begin{tabular}{lrrrrrrrrrr}
  \hline
  & & \multicolumn{4}{c}{\shortstack[c]{ Cluster-average treatment \\ effect $\Delta_C = 1.56$}  } & & \multicolumn{4}{c}{\shortstack[c]{ Individual-average treatment \\ effect $\Delta_I = 1.35$}}\\
Setting  & Method & Bias & ESE   &  ASE &  CP &\ & Bias & ESE   &  ASE &  CP\\ 
  \hline
\multirow{5}{*}{\shortstack[l]{Continuous outcomes \\ with random \\
 observed cluster sizes}}&  Unadjusted & 0.00 & 0.04 & 0.04 & 0.95& & 0.00 & 0.02 & 0.02 & 0.95 \\ 
&    GEE-g  & -0.00 & 0.03 & 0.03 & 0.95& & 0.00 & 0.02 & 0.02 & 0.95 \\ 
&    LMM-g  &  -0.00 & 0.03 & 0.03 & 0.95& & 0.00 & 0.02 & 0.02 & 0.95 \\ 
& Eff-PM & 0.00 & 0.03 & 0.03 & 0.95& & 0.00 & 0.02 & 0.02 & 0.97 \\ 
  &   Eff-ML & 0.00 & 0.03 & 0.03 & 0.95& & 0.00 & 0.02 & 0.02 & 0.98 \\
   \hline
\multirow{5}{*}{\shortstack[l]{Continuous outcomes \\ with cluster-dependent \\
 observed cluster sizes}}&  Unadjusted & 0.00 & 0.05 & 0.05 & 0.95& & 0.00 & 0.02 & 0.02 & 0.95 \\ 
&    GEE-g  & -0.20 & 0.03 & 0.03 & 0.00& & -0.06 & 0.02 & 0.03 & 0.34 \\ 
&    LMM-g  & -0.14 & 0.03 & 0.03 & 0.02& & -0.05 & 0.02 & 0.03 & 0.46 \\ 
& Eff-PM & 0.00 & 0.04 & 0.04 & 0.95& & 0.00 & 0.02 & 0.03 & 0.97 \\  
  &   Eff-ML &0.00 & 0.04 & 0.04 & 0.95& & 0.00 & 0.02 & 0.02 & 0.97 \\ 
   \hline
\end{tabular}
}
{
\raggedright \small
\setlength{\baselineskip}{1pt}
 Unadjusted: the unadjusted estimator. GEE-g: GEE with weighted g-computation. LMM-g: linear mixed models with weighted g-computation. Eff-PM: our proposed method with parametric working models. Eff-ML: our proposed method with machine learning algorithms. ESE: empirical standard error. ASE: average of estimated standard error. CP: coverage probability based on $t$-distribution.

}
\end{table}

\subsection{Simulations based on the WFHS data}
To further illustrate our methods under the setting of cluster-dependent sampling, we performed a simulation study based on the WFHS data. In each simulated data, \\
$\{(\widetilde{Y}_i(0), \widetilde{N}_i, \widetilde{C}_i, \widetilde{\mathrm{X}}_i), i =1,\dots, 56\}$ were sampled from the empirical distribution of observed data $\{(Y_i, N_i, C_i, \widetilde{\mathrm{X}}_i): i=1,\dots, 56\}$. Then, we randomly generated the treatment variable $\widetilde{A}_i \sim \mathcal{B}(0.5)$ and $\widetilde{Y}_{ij}(1) = \widetilde{Y}_{ij}(0) + N_i/3 +\mathcal{N}(0,1)$. Next, we set $\widetilde{M}_i =  \widetilde{A}_i \widetilde{C}_i( \widetilde{N}_i-2) + 2$, where $ \widetilde{C}_i$ is the cluster-level covariate of group job functions (core for $ \widetilde{C}_i =1$ and supporting for $ \widetilde{C}_i = 0$). Finally, for each cluster, we randomly sampled $ \widetilde{M}_i$ individuals without replacement, for whom $\widetilde{S}_{ij}=1$, and the observed data are $\{\widetilde{Y}_{ij}, \widetilde{A}_i, \widetilde{M}_i, \widetilde{C}_{i1},  \widetilde{\mathrm{X}}_i: \widetilde{S}_{ij} = 1,i=1,\dots,m,j=1,\dots, \widetilde{N}_i\}$ with $\widetilde{Y}_{ij} = \widetilde{A}_i\widetilde{Y}_{ij}(1) + (1-\widetilde{A}_i) \widetilde{Y}_{ij}(0)$.
We repeated the above steps 1,000 times and obtained 1,000 data sets, based on which we estimated both estimands as in Section~5 of the main paper. 

Table \ref{tab: resampled-data-analysis} gives the analysis results under the cluster-dependent sampling scheme. This analysis mimics the real-world setting, i.e., the distribution of outcome and covariates under control is based on real data and is unspecified. Consistent with our simulation results in Section 5 of the main paper, the model-based methods are biased for both estimands under cluster-dependent sampling, while our proposed methods have negligible bias and nominal coverage. Furthermore, our proposed methods implement flexible covariate adjustment via the working nuisance models and clearly demonstrate higher precision than the unadjusted estimator.

\begin{table}[htbp]
\renewcommand{\arraystretch}{0.8}
\caption{\small Results from the cluster-dependent sampling analyses based on $1,000$ simulation draws using the WFHS data set.
}\label{tab: resampled-data-analysis}
\centering
\resizebox{1\textwidth}{!}{
\begin{tabular}{lrrrrrrrrr}
  \hline
   & \multicolumn{4}{c}{\shortstack[c]{Cluster-average treatment effect\\ ${\Delta}_C = 3.92$}
  } & & \multicolumn{4}{c}{\shortstack[c]{Individual-average treatment effect\\ ${\Delta}_I = 5.37$}
  }\\
 Method & Bias & ESE   &  ASE &  CP &\ & Bias & ESE   &  ASE &  CP\\ 
  \hline
  Unadjusted &  0.02 & 0.50 & 0.48 & 0.93 & & 0.04 & 0.98 & 0.83 & 0.80 \\ 
    GEE-g  &  -0.71 & 0.40 & 0.40 & 0.56 & & -1.28 & 0.74 & 0.55 & 0.39 \\
    LMM-g  &-0.66 & 0.41 & 0.34 & 0.54 & & -1.15 & 0.83 & 0.41 & 0.33 \\ 
     Eff-PM &0.00 & 0.34 & 0.35 & 0.95& & 0.05 & 0.65 & 0.65 & 0.91 \\ 
   Eff-ML &  0.00 & 0.38 & 0.36 & 0.94 & & 0.05 & 0.66 & 0.65 & 0.90 \\ 
   \hline
\end{tabular}}
\vspace{5pt}

{
\raggedright \small
\setlength{\baselineskip}{1pt}
 Unadjusted: the unadjusted estimator. GEE-g: GEE with weighted g-computation. LMM-g: linear mixed models with weighted g-computation. Eff-PM: our proposed method with parametric working models. Eff-ML: our proposed method with machine learning algorithms. ESE: empirical standard error. ASE: average of estimated standard error. CP: coverage probability based on $t$-distribution.

}
\end{table}

\subsection{Simulations with other methods}
For our two simulation experiments described in Section 5, we additional implemented Aug-GEE-g and TMLE as two comparison methods, and describe our findings below. For Aug-GEE-g, we used the R package \verb"CRTgeeDR" \citep{prague2016crtgeedr} with weighted g-computation (defined in Section 3 of the main manuscript). For TMLE, we implemented the cluster-level TMLE with cluster-level data for $\Delta_C$ and hierarchical TMLE with individual-level data for $\Delta_I$, following the description in \cite{benitez2021comparative} to compute point estimates and variances. The nuisance functions in TMLE were estimated by an ensemble method of generalized linear models, regression trees, and neural networks using the \verb"tmle" R package \citep{gruber2012tmle}. Both methods adjusted for the same set of covariates as our proposed estimators.

The full simulation results are summarized in Tables~\ref{tab: sim1} and \ref{tab: sim2} below for completeness. Across experiments and scenarios, Aug-GEE-g has comparable performance to GEE-g in terms of bias and precision. TMLE for cluster-average treatment effect is valid across scenarios as expected, since it only uses cluster-level information. However, TMLE for individual-average treatment effects has large bias because it targets the average treatment effect among the enrolled population, rather than among the entire source population of interest. In terms of variance, TMLE can improve precision by covariate adjustment, while this precision gain lacks consistency across simulation scenarios and is generally smaller compared to our proposed methods with machine learning algorithms.
\renewcommand{\arraystretch}{0.8}
\begin{table}[htbp]
\centering
\caption{Results in the first simulation experiment with continuous outcomes.}\label{tab: sim1}
\resizebox{1\textwidth}{!}{
\begin{tabular}{lrrrrrrrrrr}
  \hline
  & & \multicolumn{4}{c}{\shortstack[c]{ Cluster-average treatment \\ effect $\Delta_C = 6$}  } & & \multicolumn{4}{c}{\shortstack[c]{ Individual-average treatment \\ effect $\Delta_I = 8.67$}}\\
Setting  & Method & Bias & ESE   &  ASE &  CP &\ & Bias & ESE   &  ASE &  CP\\ 
  \hline
\multirow{7}{*}{\shortstack[l]{Scenario 1: \\ Small $m$ with\\ 
 random \\
 observed cluster sizes}}&  Unadjusted &   $-$0.08 & 4.85 & 4.88 & 0.95& & $-$0.19 & 4.51 & 4.25 & 0.93 \\ 
&    GEE-g & 0.13 & 2.77 & 2.45 & 0.92& & $-$0.13 & 3.64 & 3.10 & 0.89 \\ 
&    LMM-g & 0.12 & 2.77 & 2.78 & 0.96& & $-$0.13 & 3.63 & 2.94 & 0.90 \\ 
&    Aug-GEE-g & $-$0.03 & 2.77 & 2.31 & 0.90 & & $-$0.11 & 3.76 & 2.66 & 0.84 \\
 &    TMLE &  $-$0.01 & 3.13 & 2.87 & 0.93 & & $-$2.27 & 1.96 & 1.55 & 0.66 \\ 
 & Eff-PM & $-$0.03 & 3.23 & 2.66 & 0.92 & & $-$0.10 & 3.89& 3.38 & 0.91 \\ 
  &   Eff-ML &0.03 & 2.24 & 2.05 & 0.95 & & 0.03 & 2.59 & 2.49 & 0.95 \\ 
   \hline
\multirow{7}{*}{\shortstack[l]{Scenario 2: \\ Small $m$ with\\ 
 cluster-dependent \\
 observed cluster sizes}}&  Unadjusted & 0.17 & 5.33 & 5.31 & 0.95 & &1.63 & 4.66 & 4.19 & 0.90 \\  
&    GEE-g  & 1.98 & 3.75 & 3.25 & 0.86 & & 0.87   & 4.53 & 3.66 & 0.88 \\ 
&    LMM-g  &  1.74 & 3.63 & 3.69 & 0.94 & & 0.79 & 4.26 & 3.83 & 0.93 \\ 
&    Aug-GEE-g  & 1.14 & 3.70 & 2.82 & 0.85 & & 1.40 & 4.57 & 3.52 & 0.86 \\
 &    TMLE &  0.10 & 3.84 & 3.61 & 0.94 & & 0.99 & 3.00 & 2.21 & 0.84 \\  
& Eff-PM & 0.08 & 3.87 & 3.47 & 0.93 & & 0.09 & 4.32 & 3.97 & 0.93 \\ 
  &   Eff-ML & $-$0.21 & 3.42 & 3.47 & 0.95 & & $-$0.06 & 4.20 & 3.84 & 0.93 \\
   \hline
\multirow{7}{*}{\shortstack[l]{Scenario 3: \\ Large $m$ with\\ 
 random \\
 observed cluster sizes}}&  Unadjusted &   0.01 & 2.62 & 2.65 & 0.96 & &$-$0.03 & 2.35 & 2.36 & 0.95 \\ 
&    GEE-g  &0.04 & 1.40 & 1.38 & 0.95 & &$-$0.01 & 1.89 & 1.83 & 0.94 \\ 
&    LMM-g  & 0.04 & 1.40 & 1.42 & 0.96 & &$-$0.01 & 1.89 & 1.55 & 0.90 \\ 
&    Aug-GEE-g  &$-$0.00 & 1.39 & 1.35 & 0.95 & &$-$0.01 & 1.91 & 1.58 & 0.90 \\ 
 &    TMLE &  0.07 & 1.81 & 1.52 & 0.92 & &$-$2.13 & 0.91 & 0.67 & 0.19 \\ 
 & Eff-PM & 0.00 & 1.40 & 1.38 & 0.95 & &$-$0.01 & 1.93 & 1.92 & 0.95 \\ 
  &   Eff-ML & 0.04 & 0.70 & 0.71 & 0.95 & & 0.00 & 0.77 & 0.81 & 0.97 \\ 
   \hline
\multirow{7}{*}{\shortstack[l]{Scenario 4: \\ Large $m$ with\\ 
 cluster-dependent \\
 observed cluster sizes}}&  Unadjusted &  $-$0.04 & 2.94 & 2.89 & 0.95 & & 1.49 & 2.45 & 2.38 & 0.88 \\ 
&    GEE-g  &  1.80 & 1.89 & 1.82 & 0.82 & & 0.74 & 2.21 & 2.12 & 0.91 \\ 
&    LMM-g  &   1.72 & 1.87 & 1.89 & 0.85 & & 0.72 & 2.20 & 2.00 & 0.91 \\ 
&    Aug-GEE-g  & 1.02 & 1.86 & 1.63 & 0.86 & & 0.92 & 2.28 & 2.26 & 0.90 \\ 
 &    TMLE & 0.08 & 2.12 & 1.92 & 0.92 & & 0.08 & 1.20 & 0.82 & 0.82 \\ 
 & Eff-PM & 0.03 & 1.89 & 1.83 & 0.94 & & 0.01 & 2.21 & 2.16 & 0.94 \\
  &   Eff-ML & 0.01 & 1.91 & 1.83 & 0.94 & & 0.00 & 2.23 & 2.13 & 0.94 \\ 
   \hline
\end{tabular}
}
{
\raggedright \small
\setlength{\baselineskip}{1pt}
 Unadjusted: the unadjusted estimator. GEE-g: GEE with weighted g-computation. LMM-g: linear mixed models with weighted g-computation. Eff-PM: our proposed method with parametric working models. Eff-ML: our proposed method with machine learning algorithms. ESE: empirical standard error. ASE: average of estimated standard error. CP: coverage probability based on $t$-distribution.

}
\end{table}

\begin{table}[htbp]
\centering
\caption{Results in the second simulation experiment with binary outcomes.}\label{tab: sim2}
\resizebox{1\textwidth}{!}{
\begin{tabular}{lrrrrrrrrrr}
  \hline
  & & \multicolumn{4}{c}{\shortstack[c]{ Cluster-average treatment \\ effect $\Delta_C = 1.54$}
  } & & \multicolumn{4}{c}{\shortstack[c]{ Participant-average treatment \\effect $\Delta_I = 1.18$}
  }\\
Setting  & Method & Bias & ESE   &  ASE &  CP &\ & Bias & ESE   &  ASE &  CP\\ 
  \hline
\multirow{7}{*}{\shortstack[l]{Scenario 1: \\ Small $m$ with\\ 
 random \\
 observed cluster sizes}}&  Unadjusted &   0.04 & 0.25 & 0.24 & 0.94& & 0.03 & 0.16 & 0.13 & 0.94 \\ 
&    GEE-g & 0.01 & 0.23 & 0.19 & 0.92 & & 0.03 & 0.22 & 0.13 & 0.93 \\ 
&    LMM-g & 0.01 & 0.22 & 0.19 & 0.92& & 0.03 & 0.16 & 0.13 & 0.93 \\ 
&    Aug-GEE-g & 0.02 & 0.20 & 0.18 & 0.93& & 0.03 & 0.16 & 0.11 & 0.89 \\ 
 &    TMLE & $-$0.02 & 0.30 & 0.24 & 0.89 & & 0.35 & 0.20 & 0.15 & 0.41 \\
 & Eff-PM & 0.01 & 0.21 & 0.20 & 0.93& & 0.02 & 0.16 & 0.15 & 0.95 \\ 
  &   Eff-ML &0.03 & 0.22 & 0.20 & 0.94& & 0.03 & 0.15 & 0.14 & 0.95 \\ 
   \hline
\multirow{7}{*}{\shortstack[l]{Scenario 2: \\ Small $m$ with\\ 
 cluster-dependent \\
 observed cluster sizes}}&  Unadjusted & 0.05 & 0.30 & 0.29 & 0.94 & & $-$0.03 & 0.15 & 0.16 & 0.92 \\
&    GEE-g  & $-$0.18 & 0.30 & 0.20 & 0.69& & $-$0.05 & 0.16 & 0.17 & 0.89 \\
&    LMM-g  &  $-$0.12 & 0.24 & 0.20 & 0.78& & $-$0.04 & 0.15 & 0.16 & 0.90 \\ 
&    Aug-GEE-g  & $-$0.15 & 0.23 & 0.16 & 0.66& & $-$0.02 & 0.24 & 0.13 & 0.81 \\ 
 &    TMLE &  0.01 & 0.33 & 0.30 & 0.92 & &0.12 & 0.17 & 0.11 & 0.80 \\ 
& Eff-PM & 0.04 & 0.28 & 0.26 & 0.92 & &0.03 & 0.17 & 0.17 & 0.95 \\ 
  &   Eff-ML &0.06 & 0.30 & 0.30 & 0.93 & & 0.05 & 0.19 & 0.17 & 0.95 \\ 
   \hline
\multirow{7}{*}{\shortstack[l]{Scenario 3: \\ Large $m$ with\\ 
 random \\
 observed cluster sizes}}&  Unadjusted &   0.01 & 0.13 & 0.13 & 0.95& & 0.01 & 0.07 & 0.07 & 0.95 \\ 
&    GEE-g  &0.00 & 0.11 & 0.10 & 0.94& & 0.00 & 0.07 & 0.07 & 0.95 \\
&    LMM-g  & 0.00 & 0.11 & 0.10 & 0.94& & 0.01 & 0.07 & 0.07 & 0.95 \\
&    Aug-GEE-g  &0.01 & 0.10 & 0.10 & 0.95& & 0.00 & 0.07 & 0.06 & 0.92 \\ 
 &    TMLE &   0.00 & 0.14 & 0.11 & 0.89& & 0.35 & 0.10 & 0.09 & 0.01 \\
 & Eff-PM &0.01 & 0.10 & 0.10 & 0.95& & 0.00 & 0.07 & 0.08 & 0.97 \\ 
  &   Eff-ML & 0.01 & 0.10 & 0.10 & 0.95& & 0.01 & 0.06 & 0.07 & 0.97 \\ 
   \hline
\multirow{7}{*}{\shortstack[l]{Scenario 4: \\ Large $m$ with\\ 
 cluster-dependent \\
 observed cluster sizes}}&  Unadjusted &  0.01 & 0.15 & 0.15 & 0.95& & $-$0.05 & 0.07 & 0.09 & 0.91 \\ 
&    GEE-g  &  $-$0.20 & 0.10 & 0.10 & 0.44& & $-$0.06 & 0.07 & 0.09 & 0.88 \\ 
&    LMM-g  &   $-$0.14 & 0.11 & 0.10 & 0.67& & $-$0.05 & 0.07 & 0.08 & 0.89 \\ 
&    Aug-GEE-g  & $-$0.19 & 0.10 & 0.08 & 0.37& & $-$0.07 & 0.07 & 0.06 & 0.74 \\ 
 &    TMLE & 0.02 & 0.19 & 0.14 & 0.90& & 0.10 & 0.08 & 0.06 & 0.69 \\ 
 & Eff-PM & 0.01 & 0.13 & 0.13 & 0.94& & 0.00 & 0.07 & 0.08 & 0.97 \\ 
  &   Eff-ML & 0.01 & 0.13 & 0.13 & 0.94& & 0.01 & 0.08 & 0.08 & 0.97 \\ 
   \hline
\end{tabular}
}
\end{table}

\clearpage 

\bibliographystyle{apalike}
\bibliography{references}